\let\mathbb=\mathds
\DeclarePairedDelimiter{\ceil}{\lceil}{\rceil}
\DeclareMathOperator*{\argmax}{\arg\max}
\DeclareMathOperator{\Tr}{Tr}
\DeclareMathOperator{\Var}{Var}
\newcommand{\ket}[1]{| #1 \rangle}
\newcommand{\proj}[1]{| #1 \rangle\!\langle #1 |}
\newcommand{\be}{{\mathbf e}}
\def\0{{\mathbf{0}}}
\def\1{{\mathbf{1}}}
\def\2{{\mathbf{2}}}
\def\3{{\mathbf{3}}}
\def\4{{\mathbf{4}}}
\def\5{{\mathbf{5}}}
\def\6{{\mathbf{6}}}
\def\7{{\mathbf{7}}}
\def\8{{\mathbf{8}}}
\def\9{{\mathbf{9}}}
\def\be{\begin{equation}}
\def\ee{\end{equation}}
\def\bea{\begin{eqnarray}}
\def\eea{\end{eqnarray}}
\theoremstyle{plain}
\newtheorem{theo}{Theorem} 
\newtheorem{prop}[theo]{Proposition} 
\newtheorem{lemm}[theo]{Lemma} 
\newtheorem*{prop2}{Proposition~\ref{prop:weak}}
\newtheorem*{prop3}{Proposition~\ref{prop:strong}}
\newtheorem*{prop4}{Proposition~\ref{prop:prop2}}
\newtheorem*{prop5}{Proposition~\ref{prop:prop_h}}
\newtheorem*{prop6}{Proposition~\ref{prop:spCh}}
\newtheorem*{prop7}{Proposition~\ref{prop:prop_b}}
\theoremstyle{definition}
\theoremstyle{remark}
\newtheorem{remark}{Remark}[section]
\numberwithin{equation}{section}
\newcommand{\opnorm}{\@ifstar\@opnorms\@opnorm}
\newcommand{\@opnorms}[1]{%
	\left|\mkern-1.5mu\left|\mkern-1.5mu\left|
	#1
	\right|\mkern-1.5mu\right|\mkern-1.5mu\right|
}
\newcommand{\@opnorm}[2][]{%
	\mathopen{#1|\mkern-1.5mu#1|\mkern-1.5mu#1|}
	#2
	\mathclose{#1|\mkern-1.5mu#1|\mkern-1.5mu#1|}
}
\begin{document}

\let\origmaketitle\maketitle
\def\maketitle{
	\begingroup
	\def\uppercasenonmath##1{} 
	\let\MakeUppercase\relax 
	\origmaketitle
	\endgroup
}

\title{\bfseries \Large{Moderate Deviation Analysis for Classical-Quantum Channels and Quantum Hypothesis Testing}}

\author{ {Hao-Chung Cheng$^{1,2}$ and Min-Hsiu Hsieh$^1$}}
\address{\small  	
$^{1}$Centre for Quantum Software and Information (UTS:Q$\ket{\text{SI}}$), \\
Faculty of Engineering and Information Technology, University of Technology Sydney, Australia\\
$^{2}$Graduate Institute Communication Engineering, National Taiwan University, Taiwan (R.O.C.) }
\email{\href{mailto:F99942118@ntu.edu.tw}{F99942118@ntu.edu.tw}}
\email{\href{mailto:Min-Hsiu.Hsieh@uts.edu.au}{Min-Hsiu.Hsieh@uts.edu.au}}

\begin{abstract}
In this work, we study the tradeoffs between the error probabilities of classical-quantum channels and the blocklength $n$ when the transmission rates approach the channel capacity at a rate slower than $1/\sqrt{n}$, a research topic known as moderate deviation analysis.  We show that the optimal error probability vanishes under this rate convergence. Our main technical contributions are a tight quantum sphere-packing bound, obtained via Chaganty and Sethuraman's concentration inequality in strong large deviation theory, and asymptotic expansions of error-exponent functions. Moderate deviation analysis for quantum hypothesis testing is also established.  The converse directly follows from our channel coding result, while the achievability relies on a martingale inequality. 

\end{abstract}

\maketitle

\section{Introduction} \label{sec:introduction}

Investigating the interplay between the transmission rate, blocklength and error probability is one of the core problems in information theory. Based on different ranges of the error probability, the analysis of communication performance roughly falls into the following three categories: (i) \emph{large error probability} or \emph{non-vanishing error probability} regime; (ii) \emph{medium error probability} regime; and (iii) \emph{small error probability} regime. In the {non-vanishing error probability} regime, the largest transmission rate, given a coding length $n$ and an error probability no more than $\epsilon$, is one of the main research focuses. Strassen \cite{Str62} first demonstrated that the maximum size of an $n$-blocklength code through a discrete memoryless channel (DMC) $\mathscr{W}$, denoted by $M^*(\mathscr{W}^n,\epsilon)$, yields an asymptotic expansion to the order $\sqrt{n}$, and hence this is called \emph{second-order analysis}:
\begin{align} \label{eq:second1}
\log M^*(W^n,\epsilon) = n C+ \sqrt{n V} \, \Phi^{-1}(\epsilon) + O(\log n),
\end{align}
where the quantities $C$ and $V$ denote the capacity \cite{Sha48} and the dispersion \cite{PPV10} of the channel, and $\Phi$ is the cumulative distribution function of a standard normal random variable. Equivalently, Eq.~\eqref{eq:second1} yields the following relationship between the optimal decoding error with blocklength $n$ and rate $C-A/\sqrt{n}$ for any constant $A$: 
\begin{align} \label{eq:second2}
\lim_{n\to+\infty} \epsilon^*\left(n, C-A/\sqrt{n} \right) = \Phi\left(\frac{A}{\sqrt{V}}\right).
\end{align}
Strassen's result relied on the \emph{Gaussian approximation} or the \emph{central limit theorem (CLT)}. His work was latter refined by Hayashi \cite{Hay09b}, Polyanskiy \textit{et al.}~\cite{PPV10}, and extended to quantum channels \cite{TH13,Li14,TV15,TBR16}. The results for higher-order asymptotics are referred to Refs.~\cite{TT13, Tan14, TT15}.

In the \emph{small error probability} regime, Shannon \cite{Sha59} introduced  the \emph{reliability function} $E(R)$ as the optimal error exponent:
\begin{align} \label{eq:exponent}
\lim_{n\to +\infty} -\frac1n\log \epsilon^* \left( n, R \right)  = E(R),
\end{align}
 for rate $R$ below the channel capacity\footnote{To the best of our knowledge, the reliability function $E(R)$ is only known in the high rate regime, i.e.~at rates above a \emph{critical rate} (see e.g.~\cite[p.~160]{Gal68}).  } $C$. This seminal work entails the \emph{error exponent analysis} of a broad class of channels \cite{Fan61, Gal68, Bla87, HHH07, CK11, CHT16b}. The exponential decay of the error probability in Eq.~\eqref{eq:exponent} is a consequence of the \emph{large deviation principle (LDP)} \cite{DZ98}. In summary, the errors in Eqs.~\eqref{eq:second2} and \eqref{eq:exponent}, respectively, fall into the CLT regime and large-deviation regime. 

Altu\u{g} and Wagner \cite{AW10, AW14b} pioneered the study of the {medium error probability} regime, and investigated the asymptotic behaviour of the optimal decoding error when the coding rate converges to capacity sufficiently slowly. Specifically, they studied under which conditions the error is asymptotically equal to\footnote{We denote $f_n\sim g_n$ if and only if $\lim_{n\to+\infty} \frac{f_n}{g_n} = 1$.}  
\begin{align} \label{eq:moderate}
\epsilon^*\left(n, C-a_n \right) \sim \Phi\left(\frac{\sqrt{n} a_n}{\sqrt{v}}\right) \sim \mathrm{e}^{\frac{-n a_n^2}{2v}},
\end{align}
where the sequence $(a_n)_{n\in\mathbb{N}}$ satisfies 
\begin{align} \label{eq:a_n}
\begin{split}
&\textnormal{(i)} \lim_{n\to+\infty} a_n = 0;\\
&\textnormal{(ii)} \lim_{n\to+\infty} a_n\sqrt{n} = +\infty. 
\end{split}
\end{align}
Evidently, the transmission rate in Eq.~\eqref{eq:moderate} approaches capacity slower than $1/\sqrt{n}$. A DMC with errors satisfying Eq.~\eqref{eq:moderate} possesses a \emph{moderate deviation property (MDP)} \cite[Section 3.7]{DZ98}. The constant $v$ in Eq.~\eqref{eq:moderate} equals the channel dispersion $V$ when both the limit in Eq.~\eqref{eq:second2}  and MDP hold \cite[Theorem 1]{PV10}. We refer the interested readers to Refs.~\cite{PV10, Sas12, AW14b} for further results in classical channel coding.
These three approaches---(i), (ii), and (iii)---all have theoretical significance and practical value, and this paper will focus on the medium error probability regime, which is rarely explored in the quantum scenario.

Our main contribution is, for any classical-quantum (c-q) channel with a non-zero dispersion $V>0$, 
\begin{align} \label{eq:main}
\lim_{n\to+\infty} \frac{\log \epsilon^* (n, C - a_n )}{n a_n^2} = -\frac1{2V},
\end{align}
where $(a_n)_{n\in\mathbb{N}}$ is any sequence satisfying Eq.~\eqref{eq:a_n}.
The result in Eq.~\eqref{eq:main} shows that reliable communication over a c-q channel is possible when the transmission rate approaches capacity at the scale slower than $1/\sqrt{n}$. Our proof employs techniques from the error exponent analysis (the LDP regime). For the achievability part, we start from Hayashi's upper bound of the average error for c-q channels \cite{Hay07} followed by an asymptotic expansion of the error-exponent function.  For the converse, we employ a sharp converse bound based on a strong large deviation inequality (Proposition~\ref{prop:strong}). 
This bound is more general than the previous result in Ref.~\cite[Proposition 14]{CHT16b}, since it allows the transmission rates to depend on the blocklength instead of being fixed.
We remark that Altu\u{g} and Wagner's converse proof \cite[Theorem 2.2]{AW14b} is not sufficient for proving Eq.~\eqref{eq:main} because their sphere-packing bound is of a weaker form in general c-q channels \cite[Theorem 6]{CHT16b} (see also \cite{DW14}). Thus, naively following their converse approach will result in a gap between the achievability and converse results (see Remark~\ref{remark1}). 

As a special case of c-q channel coding, we obtain the moderate deviations for binary quantum hypothesis testing (see Theorems \ref{theo:achievability_HT} and \ref{theo:converse_HT}):
\begin{align}
	\lim_{n\to+\infty} \frac{1}{na_n^2} \log \widehat{\alpha}_{\exp\{-n \left[ D(\rho\|\sigma) - a_n \right] \}}\left( \rho^{\otimes n} \| \sigma^{\otimes n} \right) = -\frac{1}{2V(\rho\|\sigma)},
\end{align}
where $\widehat{\alpha}_{\mu}$ denotes the smallest type-I error when the type-II error does not exceed $\mu$; $D(\rho\|\sigma)$ and $V(\rho\|\sigma)$ denote  the relative entropy and relative variance of $\rho$ and $\sigma$, respectively.
The converse part directly follows from the channel coding, and we provide two proofs for the achievability part. The first one comes from Audeneart \textit{et al.}'s error exponent analysis \cite{ANS+08}, while the second one employs a martingale inequality \cite{Sas12}.
We remark that the moderate deviation analysis for classical hypothesis testing was studied by Sason \cite{Sas12}, and by Watanabe and Hayashi \cite{WH14}. Moreover, a recent work by Rouz\'{e} and Datta \cite{RD16} formulated the quantum hypothesis problem into a martingale, which is similar to our approach for proving the achievability. 

Unlike our proof techniques relying on error exponent analysis (the LDP regime), a recent and independent paper \cite{CCT+16b} obtained the same result, but proceeds from the second-order analysis (the CLT regime). Their achievability proof follows from the one-shot capacity by Wang and Renner \cite{WR12}; while the converse part generalizes Polyanskiy and Verd\'u's result \cite{PV10} (which in turn relies on Strassen's Gaussian approximation \cite{Str62}) and a powerful inequality in probability \cite{Roz02} to the quantum scenario. 
We summarize the error behaviors in these three regimes in Table \ref{table:comparison}.

This paper is organized as follows.  We introduce notation and preliminaries in Section \ref{sec:notation}. Section \ref{sec:coding} contains our main result---the moderate deviation analysis for c-q channel coding. In Section \ref{sec:HT}, we present the moderate deviations for quantum hypothesis testing. Lastly, we conclude this paper in Section \ref{sec:conclutions}.

\begin{table}[th!]
\centering
\resizebox{1\columnwidth}{!}{
\begin{tabular}{|l|c|c|c|} 
\toprule
Error Regimes & Concentration Phenomena&  Hypothesis Testing & Channel Coding \\
			\midrule
			\midrule
\multirow{2}{*}{Large Error} & \multirow{2}{*}{CLT: $\Pr\left( S_n \geq \sqrt{n} x \right) \rightarrow 1 - \Phi\left(\frac{x}{\sqrt{v}}\right)$} & \multirow{2}{*}{$ \widehat{\alpha}_{\exp\left\{-n \left[ D - \frac{A}{\sqrt{n}} \right] \right\}} \rightarrow \Phi\left( \frac{A}{\sqrt{V}}\right)$	}& \multirow{2}{*}{$ \epsilon^*\left(n,C- \frac{A}{\sqrt{n}}\right) \rightarrow \Phi\left(\frac{A}{\sqrt{V}}\right)$} \\
&&& \\
		
\hline

\multirow{2}{*}{Medium Error} &  \multirow{2}{*}{MDP: $\Pr\left( S_n \geq n a_n x \right) = \mathrm{e}^{ -\frac{ na_n^2 }{2v} x + o(n a_n^2)}$} 
& \multirow{2}{*}{ $ \widehat{\alpha}_{\exp\{-n \left[ D - a_n  \right] \}} = \mathrm{e}^{ -\frac{n a_n^2  }{2V}+o(n a_n^2) } $}
&  \multirow{2}{*}{$\epsilon^*(n,C-a_n ) = \mathrm{e}^{- \frac{n a_n^2}{2V} + o(n a_n^2) }$} \\	&  &  & \\
	
\hline
	
\multirow{2}{*}{Small Error} & \multirow{2}{*}{LDP: $ \Pr\left( S_n \geq n x \right)= \mathrm{e}^{ -n \Lambda^*(x) + o(n) }$} 
& \multirow{2}{*}{$ \widehat{\alpha}_{\exp\{-n r \}} = \mathrm{e}^{ -n \phi(r) + o(n)}$}  
& 	\multirow{2}{*}{$\epsilon^*(n,R) = \mathrm{e}^{-n E(R) + o(n) }$} \\
	&&& \\
	
\bottomrule
	
\end{tabular}
}
\vspace{5mm}
\caption{ \small This table compares the asymptotic error behaviors of quantum hypothesis testing and classical-quantum channel coding in three error probability regimes: (i) large error (central limit theorem), (ii) medium error (moderate deviation principle), and (iii) small error (large deviation principle).
The quantity $S_n$ denotes the sum of $n$ independent and identically distributed random variables with zero mean and variance $v$. The exponent $\Lambda^*$ is the Legendre-Fenchel transform of the normalized cumulant generating function of $S_n$ \cite{DZ98}.	 The error $\widehat{\alpha}_{\exp\{-nr\}}$ is defined as the minimum type-I error with the type-II error smaller than $\exp\{-nr\}$. The quantities $D$ and $V$ in the hypothesis testing column denote the quantum relative entropy and the relative entropy variance, respectively.	The optimal error probability with blocklength $n$ and rate $R$ is denoted by $\epsilon^*(n,R)$. The quantities $C$ and $V$ in the channel coding column indicate the channel capacity and the channel dispersion, respectively. The sequence $(a_n)_{n\in\mathbb{N}}$ satisfies Eq.~\eqref{eq:a_n}. The quantity $E(R)$ is the reliability function of the channel.
}	\label{table:comparison}

\end{table}

\section{Preliminaries and Notation} \label{sec:notation}

We first introduce necessary notation. Throughout this paper, we consider a Hilbert space $\mathcal{H}$ with finite dimension $d$.  The set of density operators (i.e.~positive semi-definite operators with unit trace) and non-singular density operators on $\mathcal{H}$ are defined by $\mathcal{S(H)}$ and $\mathcal{S}_{>0}(\mathcal{H})$, respectively. The identity operator on $\mathcal{H}$ is denoted by $\mathds{1}_\mathcal{H}$, or simply $\mathds{1}$ if there is no possibility of confusion.  We use $\Tr\left[\,\cdot\, \right]$ as the trace function.  Let $\mathbb{N}$, $\mathbb{R}$, and $\mathbb{R}_{\geq 0}$ denote the set of integers, real numbers, and non-negative real numbers, respectively. Define $[n] := \{1,2,\ldots, n\}$ for $n\in\mathbb{N}$.

The power of a positive semi-definite operator $A$ is defined as: $A^p= \sum_{i:a_i\neq 0} a_i^p P_i$, where $(a_i)_i$ and $(P_i)_i$ are the eigenvalues and eigenprojections of $A = \sum_{i} a_i P_i$. 
We use $\texttt{supp}(A)$ to denote support of the operator $A$.
We write $A \ll B$ if $\texttt{supp}(A) \subset \texttt{supp}(B)$.

\subsection{Quantum Hypothesis Testing and Channel Coding} \label{ssec:HT}
Consider a binary hypothesis testing problem whose null and alternative hypotheses are $\rho\in\mathcal{S(H)}$ and $\sigma\in\mathcal{S(H)}$, respectively. The \emph{type-I error} and \emph{type-II error} of the hypothesis testing, for an operator $0\leq Q\leq \mathds{1}$, are defined as follows:
\begin{align}
\alpha\left(Q;\rho\right)&:= \Tr\left[ (\mathds{1}-Q) \rho \right],\\
\beta\left(Q;\sigma\right)&:= \Tr\left[ Q \sigma \right].
\end{align}
There is a trade-off relation between these two errors. Thus we can define the minimum type-I error when the type-II error is below $\mu\in(0,1)$ as
\begin{align} \label{eq:alpha}
\widehat{\alpha}_{\mu}\left(\rho\|\sigma\right)
:= \min_{0\leq Q\leq \mathds{1} } \big\{ \alpha\left(Q;\rho\right) : \beta\left(Q;\sigma\right) \leq \mu  \big\}.
\end{align}

Denote by $\mathcal{X}$ a finite input alphabet, and  let $\mathscr{P}(\mathcal{X})$ be the set of probability distributions on $\mathcal{X}$. For a sequence ${\mathbf{x}}^n \in \mathcal{X}^n$, we denote by 
\begin{align} \label{eq:type}
P_{\mathbf{x}^n} (x) := \frac1n \sum_{i=1}^n \mathbf{1}\left\{ x = x_i \right\},
\end{align}
where $x_i$ is the $i$-th element of $\mathbf{x}^n$.

A c-q channel $\mathscr{W}$ maps elements of $\mathcal{X}$ to the density operators in $\mathcal{S}(\mathcal{H})$, i.e.~$\mathscr{W}:x \mapsto W_x$. We denote the image of the channel $\mathscr{W}$ by
\begin{align}
\textsf{im}\left(\mathscr{W}\right) := \left\{ \rho\in\mathcal{S(H)} |\, \exists\, x\in\mathcal{X}: \rho = W_x   \right\},
\end{align}
and its closure by $\overline{\textsf{im}(\mathscr{W})}$.
Without loss of generality, we assume that $\textsf{im}\left(\mathscr{W}\right)$ has full support on the Hilbert space $\mathcal{H}$ throughout this paper.

Let $\mathcal{M}$ be a finite alphabetical set with size $M=|\mathcal{M}|$. An ($n$-block) \emph{encoder} is a map $f_n:\mathcal{M}\to \mathcal{X}^n$ that encodes each message $m\in\mathcal{M}$ to a codeword $\mathbf{x}^n(m) :=   x_1(m) \ldots x_n(m) \in\mathcal{X}^n$. The c-q channel then produces an output state $W_{\mathbf{x}^n(m)}^{\otimes n}$ with the input codeword $\mathbf{x}^n(m)$, where
\begin{align}
W_{\mathbf{x}^n(m)}^{\otimes n} = W_{x_1(m)} \otimes \cdots \otimes W_{x_n(m)} \in \mathcal{S}(\mathcal{H}^{\otimes n}).
\end{align}
The \emph{decoder} is described by a positive operator-valued measurement (POVM) $\Pi_n = \{\Pi_{n,1},\ldots, \Pi_{n,M} \}$ on $\mathcal{H}^{\otimes n}$, where $\Pi_{n,i} \geq 0$ and $\sum_{i=1}^{M} \Pi_{n,i} = \mathds{1}$. The pair $(f_n, \Pi_n) =: \mathcal{C}_n$ is called a \emph{code} with \emph{rate} $R = \frac1n \log |\mathcal{M}|$.  The error probability of  sending a message $m$ with the code $ \mathcal{C}_n$ is $\epsilon_m(\mathscr{W},\mathcal{C}_n) :=  1- \Tr\left(\Pi_{n,m} W_{\mathbf{x}^n(m)}\right)$. We use $\epsilon_\text{max}(\mathscr{W},\mathcal{C}_n) = \max_{m\in\mathcal{M}} \epsilon_m(\mathscr{W},\mathcal{C}_n) $ and $\bar{\epsilon}(\mathscr{W},\mathcal{C}_n) = \frac1M \sum_{m\in\mathcal{M}} \epsilon_m(W,\mathcal{C}_n)$ to denote the \emph{maximal} error probability and the \emph{average} error probability, respectively. 
Denote by $\epsilon^*\left(n, R\right)$ the {smallest} average error probability among all codes $\mathcal{C}_n$ with message size $|\mathcal{M}|=\exp\{nR\}$.


\subsection{Information Quantities} \label{ssec:Info}
For any $\rho, \sigma \in \mathcal{S(H)}$, we define the {quantum relative entropy}, (Petz's) quantum R\'enyi divergence \cite{Pet86}, and the {log-Euclidean R\'enyi divergence} \cite{MO14b,CHT16b}, respectively, as follows: 
\begin{align} 
&D(\rho\|\sigma) :=  \Tr \left[ \rho \left( \log \rho - \log \sigma \right) \right], \label{eq:relative} \\
&D_\alpha(\rho\|\sigma) := \frac{1}{\alpha-1}\log\Tr[\rho^\alpha\sigma^{1-\alpha}],\\
&D^\flat_\alpha(\rho\|\sigma) := \frac{1}{\alpha-1} \log \Tr \left[\mathrm{e}^{\alpha \log \rho + (1-\alpha) \log \sigma}\right].
\end{align}
We define two types of the {quantum relative entropy variances} \cite{TH13, Li14} by
\begin{align}
&V(\rho\|\sigma) :=  \Tr \left[ \rho \left( \log \rho - \log \sigma \right)^2 \right] - D(\rho\|\sigma)^2 \\
&\widetilde{V}(\rho\|\sigma) :=
\int_{0}^1\mathrm{d} t\Tr\left[ \rho^{1-t} (\log \rho - \log \sigma ) \rho^t (\log \rho - \log \sigma ) \right] - D(\rho\|\sigma)^2. \label{eq:relative_V}
\end{align}
It is well-known that both quantities are non-negative, and \begin{align} \label{eq:positive_D}
V(\rho\|\sigma) > 0 \quad \text{implies} \quad D(\rho\|\sigma )> 0.
\end{align}

We define the \emph{conditional quantum relative entropy} 
of two channels $\bar{\mathscr{W}},\mathscr{W}$ and $P\in\mathscr{P}(\mathcal{X})$ to be
\begin{align}
D\left( \bar{\mathscr{W}} \| \mathscr{W} | P \right) :=  \sum_{x\in\mathcal{X}} P(x) D\left( \bar{W}_x \| W_x \right).
\end{align}
Similarly, we define the following conditional entropic quantities for  $\sigma\in\mathcal{S(H)}$ and $P\in\mathscr{P}(\mathcal{X})$:
\begin{align}
D\left( \mathscr{W} \| \sigma | P \right) &:=  \sum_{x\in\mathcal{X}} P(x) D\left( W_x \| \sigma \right), \\
D_\alpha\left( \mathscr{W} \| \sigma | P \right) &:=  \sum_{x\in\mathcal{X}} P(x) D_\alpha\left( W_x \| \sigma \right), \\
V\left( \mathscr{W} \| \sigma | P \right) &:= \sum_{x\in\mathcal{X}} P(x) V\left( W_x \| \sigma \right), \label{eq:V_cond} \\
\widetilde{V}\left( \mathscr{W} \| \sigma | P \right) &:= \sum_{x\in\mathcal{X}} P(x) \widetilde{V}\left( W_x \| \sigma \right).
\end{align}
The \emph{mutual information} of the channel $\mathscr{W}:\mathcal{X}\to\mathcal{S(H)}$ with a prior distribution $P\in\mathscr{P}(\mathcal{X})$ is defined by
\begin{align}
I(P,\mathscr{W}) := 
D \left(P\circ \mathscr{W} \| P\otimes P\mathscr{W} \right) =
D \left( \mathscr{W} \| P \mathscr{W} | P \right),
\end{align}
where $P\circ\mathscr{W}:= \sum_{x\in\mathcal{X}}P(x) |x\rangle\langle x|\otimes W_x$
and $P\mathscr{W}:= \sum_{x\in\mathcal{X}} P(x) W_x$.
Hence, the (classical) \emph{information capacity} of the channel $\mathscr{W}$ is 
\begin{align} \label{eq:capacity}
C_\mathscr{W} := \max_{P\in\mathscr{P}(\mathcal{X})} I(P,\mathscr{W}).
\end{align}
The \emph{conditional information variance} and the \emph{unconditional information variance} of $\mathscr{W}:\mathcal{X}\to\mathcal{S(H)}$ with a prior distribution $P\in\mathscr{P}(\mathcal{X})$ are defined, respectively, by
\begin{align} \label{eq:V}
\begin{split}
V (P,\mathscr{W}) &:=  
V\left(\mathscr{W} \| P\mathscr{W} | P  \right), \\
U(P,\mathscr{W}) &:= V\left(P\circ \mathscr{W} \| P \otimes P\mathscr{W} \right).
\end{split}
\end{align}
It is known that (see e.g.~\cite[Lemma 62]{PPV10}) that $V(P^\star, \mathscr{W}) = U(P^\star,\mathscr{W})$ for every capacity-achieving distribution $P^\star\in\mathscr{P}(\mathcal{X})$, i.e.~$I(P^\star,\mathscr{W}) = C_\mathscr{W}$.
Similarly, we also define the unconditional information variance in terms of $\widetilde{V}(\rho\|\sigma)$:
\begin{align}
\widetilde{V} (P,\mathscr{W}) :=  
\widetilde{V}\left(\mathscr{W} \| P\mathscr{W} | P  \right). \label{eq:Vb}
\end{align}
The \emph{minimal peripheral information variance}  and its variant are defined by
\begin{align}
&V_\mathscr{W} :=  \min_{ P\in\mathscr{P}(\mathcal{X}):\, I(P,\mathscr{W})= C_\mathscr{W}} V(P,\mathscr{W}),  \label{eq:V2}\\
&\widetilde{V}_\mathscr{W} :=  \min_{ P\in\mathscr{P}(\mathcal{X}):\, I(P,\mathscr{W})= C_\mathscr{W}} \widetilde{V}(P,\mathscr{W}). \label{eq:V2b}
\end{align}
Furthermore, one can verify that
\begin{align} \label{eq:positive_C}
V_\mathscr{W} > 0 \quad \text{implies} \quad C_\mathscr{W} >0.
\end{align}


\subsubsection{Auxiliary functions and their properties.}
The auxiliary function of a classical-quantum channel is defined as  \cite{BH98, Hol00, FY05, FY06, HM16}
\[
E_0(s,P) := -\log \Tr\left[ \left( \sum_{x\in\mathcal{X}} P(x) W_x^{1/(1+s)} \right)^{1+s}  \right].
\]
In this paper, we will require three variants of the above auxiliary function:   $\forall s\geq 0$ and $\sigma\in\mathcal{S(H)}$,
\begin{align} 
&\widetilde{E}_0(s,P,\sigma) := sD_{1-s}\left( P\circ \mathscr{W}\| P\otimes \sigma \right)
\label{eq:mod_E0}\\
&E_\text{h}(s,P,\sigma ) := s D_{\frac{1}{1+s}}\left( \mathscr{W}\| \sigma | P \right),   \label{eq:Eh} \\
&\widetilde{E}_\text{h}(s,P,\sigma ) := s D^\flat_{\frac{1}{1+s}}\left( \mathscr{W}\| \sigma | P \right),  \label{eq:Eb}
\end{align}
where $D_\alpha$ and $D_\alpha^\flat$ are the (Petz's) quantum R\'enyi divergence and the {log-Euclidean R\'enyi divergence}, respectively.

The function $\widetilde{E}_0(s,P, \sigma)$ will play a major role in the achievability part of our main result (see Theorem~\ref{theo:achievability} in Section~\ref{sec:coding}). This quantity yields an upper bound to the average error probability (see \cite[Eq.~(9)]{Hay07}):
\begin{align} \label{eq:Hayashi}
	\bar{\epsilon}(\mathscr{W},\mathcal{C}_n) \leq 4  \exp\left\{-n \left[ \max_{0\leq s\leq 1} \max_{P\in\mathscr{P}(\mathcal{X})} \left\{-sR + \widetilde{E}_0(s,P, P\mathscr{W}) \right\} \right] \right\}.
\end{align}
Properties of $E_\text{h}$ and $\widetilde{E}_\text{h}$ will be crucial in the analysis of the converse part of our main result.

\noindent The following proposition summarizes properties of $\widetilde{E}_0(s,P,\sigma)$. We provide the proof in Appendix \ref{app:prop2}.

\begin{prop}[Properties of $\widetilde{E}_0(s,P,\sigma)$] \label{prop:prop2}
Consider a classical-quantum channel $\mathscr{W}:\mathcal{X}\to \mathcal{S(H)}$, a distribution $P\in\mathscr{P}(\mathcal{X})$, and a state $\sigma\in\mathcal{S(H)}$ with $W_x\ll \sigma$ for all $x\in \textnormal{\texttt{supp}}(P)$. Then $\widetilde{E}_0(s,P,\sigma)$ defined in Eq.~\eqref{eq:mod_E0} enjoys the following properties.
\begin{enumerate} [(a)]
\item\label{prop2-a}
$\widetilde{E}_0(s,P,\sigma)$ and its partial derivatives $\partial \widetilde{E}_0(s,P,\sigma)/\partial s$, $\partial^2 \widetilde{E}_0(s,P,\sigma)/\partial s^2$, $\partial^3 \widetilde{E}_0(s,P,\sigma)/\partial s^3$ are all continuous in $(s,P)\in \mathbb{R}_{\geq 0} \times \mathscr{P}(\mathcal{X})$.
		
\item\label{prop2-b}
For every $P\in\mathscr{P}(\mathcal{X})$, the function $\widetilde{E}_0(s,P,\sigma)$ is concave in $s\in\mathbb{R}_{\geq 0}$.
		
\item\label{prop2-c}
For every $P\in\mathscr{P}(\mathcal{X})$,
\begin{align} 
\left.\frac{ \partial \widetilde{E}_0(s,P,\sigma)}{\partial s}\right|_{s = 0} = D\left(P\circ \mathscr{W}\|P\otimes \sigma\right).
\end{align}
		
\item\label{prop2-d}
For every $P\in\mathscr{P}(\mathcal{X})$,
\begin{align} 
\lim_{s\to+\infty} \frac{ \partial \widetilde{E}_0(s,P,\sigma)}{\partial s}\leq 
\frac{ \partial \widetilde{E}_0(s,P,\sigma)}{\partial s} \leq D\left(P\circ \mathscr{W}\|P\otimes \sigma\right), \; \forall s\in\mathbb{R}_{\geq 0}. \label{eq:tE0_I1}
\end{align} 
		
\item\label{prop2-e}
For every $P\in\mathscr{P}(\mathcal{X})$,
\begin{align}
\left.\frac{ \partial^2 \widetilde{E}_0(s,P,\sigma)}{\partial s^2}\right|_{s = 0} = -V\left( P\circ\mathscr{W}\|P\otimes \sigma\right).
\end{align}
\end{enumerate}
\end{prop}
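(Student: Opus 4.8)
The plan is to derive all five properties from a single clean observation: by expanding the density operators $P\circ\mathscr{W}=\sum_x P(x)\proj{x}\otimes W_x$ and $P\otimes\sigma=\sum_x P(x)\proj{x}\otimes\sigma$ in block-diagonal form, one sees that
\[
\widetilde E_0(s,P,\sigma) = sD_{1-s}(P\circ\mathscr{W}\|P\otimes\sigma) = -\log \sum_{x\in\mathcal X} P(x)\,\Tr\!\left[W_x^{1-s}\sigma^{s}\right].
\]
So if we set $g(s,P,\sigma) := \sum_x P(x)\Tr[W_x^{1-s}\sigma^s]$, then $\widetilde E_0 = -\log g$. The whole proposition is then a matter of studying the scalar function $g$, which for each fixed $x$ with $W_x\ll\sigma$ is (essentially) a moment generating function: writing $W_x = \sum_j \lambda_j(x)\proj{\psi_j(x)}$ in its eigenbasis one gets $\Tr[W_x^{1-s}\sigma^s] = \sum_j \lambda_j(x)\exp\{-s(\log\lambda_j(x)-\bra{\psi_j(x)}\log\sigma\ket{\psi_j(x)}\cdot\text{(correction)})\}$ — more precisely one should expand in the joint eigenbasis, but the key point is that $\Tr[W_x^{1-s}\sigma^s] = \Tr[e^{(1-s)\log W_x}e^{s\log\sigma}]$ is a sum of terms $c_{jk}\,e^{-s\,t_{jk}}$ with $c_{jk}\ge0$, $\sum_{jk}c_{jk}=1$ (at $s=0$), where $t_{jk}=\log\lambda_j(W_x)-\log\mu_k(\sigma)$. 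Hence $g(s,P,\sigma)=\E[e^{-sT}]$ for a nonnegative (actually bounded, under $W_x\ll\sigma$) random variable $T$ on a finite probability space built from $P$ and the spectral data.

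With that representation in hand the steps are routine. For (a), continuity of $g$ and its $s$-derivatives in $(s,P)$ follows because $g$ is a finite sum of terms $P(x)c_{jk}(x)e^{-st_{jk}(x)}$ with the spectral data $c_{jk}(x),t_{jk}(x)$ fixed (they depend only on $\mathscr W,\sigma$, not on $P$ or $s$), so $g$ is real-analytic in $s$ and affine in $P$; since $g>0$ on the relevant domain (again using $W_x\ll\sigma$ so no eigenvalue blows up and $g(s)>0$ for all $s\ge0$), $\widetilde E_0=-\log g$ inherits $C^\infty$ regularity, and the partial derivatives up to order $3$ are continuous. For (b), concavity in $s$: $\log g(s) = \log \E[e^{-sT}]$ is the (rescaled) cumulant generating function of $-T$, hence convex in $s$, so $\widetilde E_0 = -\log g$ is concave — alternatively, $\partial_s^2 \log g = \Var_{\tilde{\mathbb P}_s}(T)\ge 0$ under the tilted measure, giving $\partial_s^2\widetilde E_0\le 0$. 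For (c) and (e), compute $\partial_s\widetilde E_0 = -g'/g$ and $\partial_s^2\widetilde E_0 = -(g''g-(g')^2)/g^2$; at $s=0$ we have $g(0)=1$ (since $\sum_x P(x)\Tr W_x=1$), $g'(0)=-\E[T] = -D(P\circ\mathscr W\|P\otimes\sigma)$ — here one checks $\frac{d}{ds}\Tr[W_x^{1-s}\sigma^s]\big|_{s=0}=\Tr[W_x(\log\sigma-\log W_x)] = -D(W_x\|\sigma)$ by differentiating under the trace — and $g''(0)=\E[T^2]$, which combined with $g'(0)^2=(\E T)^2$ yields $\partial_s^2\widetilde E_0|_{s=0}=-\Var(T)=-V(P\circ\mathscr W\|P\otimes\sigma)$. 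For (d), the upper bound $\partial_s\widetilde E_0(s)\le D(P\circ\mathscr W\|P\otimes\sigma)=\partial_s\widetilde E_0(0)$ is immediate from concavity (b) plus the value in (c), since a concave function has decreasing derivative; the lower bound by $\lim_{s\to+\infty}\partial_s\widetilde E_0(s)$ is likewise immediate from concavity (the derivative is monotone decreasing, hence bounded below by its limit), and the limit exists in $[-\infty,\infty)$, finite because $T$ is bounded (consequence of $W_x\ll\sigma$).

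The one genuinely delicate point — and the step I'd expect to absorb most of the real work — is justifying the differentiation under the trace and, relatedly, handling the support condition. Concretely: $s\mapsto W_x^{1-s}$ and $s\mapsto\sigma^s$ are operator-valued functions that are only defined and smooth once we restrict to the right support; the eigenvalue $0$ of $\sigma$ would make $\sigma^s$ blow up as $s\to0^+$, which is exactly why the hypothesis $W_x\ll\sigma$ for $x\in\operatorname{\texttt{supp}}(P)$ is imposed — it guarantees $\Tr[W_x^{1-s}\sigma^s]$ only ever pairs eigenvectors of $W_x$ in $\operatorname{\texttt{supp}}(\sigma)$ with the invertible part of $\sigma$, so everything is finite and smooth for $s\in[0,\infty)$. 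I would make this rigorous by first projecting onto $\operatorname{\texttt{supp}}(\sigma)$, working with $\sigma$ restricted there (now positive definite), and then using the fact that for positive definite $A$, $s\mapsto A^s = e^{s\log A}$ is norm-analytic with derivative $A^s\log A$; the term $W_x^{1-s}$ causes no trouble since $1-s$ stays bounded and $W_x\ge0$. Combined with linearity of $\Tr$ and finiteness of $\mathcal X$, this legitimizes term-by-term differentiation to all orders and gives the formulas above. The continuity in $P$ is then trivial since $P$ enters only as a convex coefficient. I would relegate the detailed verification of the first- and second-derivative trace identities to the appendix computation and keep the main narrative at the level of "$\widetilde E_0 = -\log\E[e^{-sT}]$, now apply elementary convex analysis."
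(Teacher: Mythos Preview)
Your proof is correct. Both you and the paper reduce to studying $g(s)=\sum_x P(x)\Tr[W_x^{1-s}\sigma^s]$ (the paper calls it $\widetilde F$), compute its $s$-derivatives, and read off the five properties; part (d) in both cases is the trivial consequence of concavity plus (c). The difference is in packaging. The paper works directly with operator trace calculus---writing out $\widetilde F',\widetilde F'',\widetilde F'''$ as explicit trace expressions involving $\log W_x$ and $\log\sigma$---and then cites the literature for the two substantive steps: concavity of $s\mapsto sD_{1-s}(\cdot\|\cdot)$ from \cite{MO14b}, and the identity $\left.\partial_\alpha D_\alpha\right|_{\alpha=1}=V/2$ from \cite{LT15}. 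Your route instead diagonalizes via the Nussbaum--Szko{\l}a map to recognize $g(s)=\E[e^{-sT}]$ as a moment generating function on a finite probability space, after which concavity is just log-convexity of an MGF, and (c),(e) reduce to the scalar identities $\E[T]=D$ and $\Var(T)=V$. This buys you a fully self-contained argument with no external citations and makes \emph{why} concavity holds completely transparent; the paper's route stays closer to the R\'enyi-divergence formalism and so dovetails more directly with the parallel proofs of Propositions~\ref{prop:prop_h} and~\ref{prop:prop_b}. Amusingly, the paper does deploy exactly your Nussbaum--Szko{\l}a representation elsewhere (the second proof of Theorem~\ref{theo:achievability_HT} and Appendix~\ref{app:sharp}), just not here.
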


\noindent Properties of $E_{\textnormal{h}}(s,P,\sigma)$ are collected in the following proposition. The proof can be found in Appendix~\ref{app:prop_h}.

\begin{prop}[Properties of ${E}_{\textnormal{h}}(s,P,\sigma)$] \label{prop:prop_h}
Consider a classical-quantum channel $\mathscr{W}:\mathcal{X}\to \mathcal{S(H)}$, a distribution $P\in\mathscr{P}(\mathcal{X})$, and a state $\sigma\in\mathcal{S(H)}$ with $W_x\ll \sigma$ for all $x\in \textnormal{\texttt{supp}}(P)$. Then $E_\textnormal{h}(s,P,\sigma)$ defined in Eq.~\eqref{eq:Eh} enjoys the following properties.
\begin{enumerate}[(a)]
\item\label{prop_h-a}
$E_\textnormal{h}(s,P,\sigma)$ and its partial derivatives $\partial E_\textnormal{h}(s,P,\sigma)/\partial s$, $\partial^2 E_\textnormal{h}(s,P,\sigma)/\partial s^2$, $\partial^3 E_\textnormal{h}(s,P,\sigma)/\partial s^3$  are continuous for $(s,P)\in \mathbb{R}_{\geq 0} \times \mathscr{P}(\mathcal{X})$.
		
\item\label{prop_h-b}
For every $P\in\mathscr{P}(\mathcal{X})$, the function $E_\textnormal{h}(s,P,\sigma)$ is concave in $s$ for all $s\in\mathbb{R}_{\geq 0}$.
		
\item\label{prop_h-c}
For every $P\in\mathscr{P}(\mathcal{X})$,
\begin{align} \label{eq:first_h}
\left.\frac{ \partial E_\textnormal{h}(s,P,\sigma)}{\partial s}\right|_{s = 0} = D\left(\mathscr{W}\|\sigma|P\right).
\end{align}
		
\item\label{prop_h-d}
For every $P\in\mathscr{P}(\mathcal{X})$,
\begin{align} \label{eq:E_h_I}
\lim_{s\to+\infty} \frac{ \partial E_\textnormal{h}(s,P,\sigma)}{\partial s}\leq 
\frac{ \partial E_\textnormal{h}(s,P,\sigma)}{\partial s} \leq D\left(\mathscr{W}\|\sigma|P\right), \; \forall s\in\mathbb{R}_{\geq 0}.  
\end{align}
		
\item\label{prop_h-e}
For every $P\in\mathscr{P}(\mathcal{X})$,
\begin{align} \label{eq:second_h}
\left.\frac{ \partial^2 E_\textnormal{h}(s,P,\sigma)}{\partial s^2}\right|_{s = 0} = -V\left( \mathscr{W}\|\sigma|P\right).
\end{align}
\end{enumerate}
\end{prop}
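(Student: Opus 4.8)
The plan is to establish all five properties of $E_\textnormal{h}(s,P,\sigma) = s D_{1/(1+s)}(\mathscr{W}\|\sigma|P)$ by reducing to the scalar structure of Petz's Rényi divergence. Writing $\sigma$-fixed and introducing, for each $x\in\texttt{supp}(P)$, the function $\phi_x(t) := \log\Tr[W_x^{1-t}\sigma^{t}]$, one has $s D_{1/(1+s)}(W_x\|\sigma) = -(1+s)\,\phi_x\!\bigl(\tfrac{s}{1+s}\bigr)$, so that $E_\textnormal{h}(s,P,\sigma) = -\sum_x P(x)\,(1+s)\,\phi_x\!\bigl(\tfrac{s}{1+s}\bigr)$. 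The change of variable $t = s/(1+s)$, which is a smooth increasing bijection from $[0,\infty)$ onto $[0,1)$, will be the central device throughout.

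\medskip

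\noindent\textbf{Step 1 (Continuity and smoothness, part \eqref{prop_h-a}).} Since $\texttt{im}(\mathscr{W})$ has full support and $W_x \ll \sigma$, each $\Tr[W_x^{1-t}\sigma^t]$ is strictly positive and real-analytic in $t$ on a neighborhood of $[0,1)$ (it is a finite sum of products of analytic matrix powers); hence $\phi_x$ is $C^\infty$ there, and so is $s\mapsto -(1+s)\phi_x(s/(1+s))$ on $\mathbb{R}_{\geq0}$. Joint continuity in $(s,P)$ follows because $P\mapsto \sum_x P(x)(\cdots)$ is linear with coefficients continuous in $s$; the same goes for the first three $s$-derivatives after differentiating under the finite sum.

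\medskip

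\noindent\textbf{Step 2 (Concavity, part \eqref{prop_h-b}).} This is the standard fact that $s\mapsto s D_{1/(1+s)}(\rho\|\sigma)$ is concave — equivalently, $\alpha \mapsto (1-\alpha)^{-1}\cdot\alpha\log\Tr[\rho^\alpha\sigma^{1-\alpha}]$ reparametrized. I would prove it via the known concavity of $s\mapsto E_0(s,P)$ and its variants (cf.\ the Gallager-type arguments in \cite{Hol00,FY05}), or directly: the map $t\mapsto \log\Tr[W_x^{1-t}\sigma^t]$ is convex in $t\in[0,1]$ by Hölder's inequality, and one checks that $s\mapsto -(1+s)\phi_x(s/(1+s))$ inherits concavity from convexity of $\phi_x$ under this particular fractional-linear substitution. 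Summing over $x$ with weights $P(x)$ preserves concavity.

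\medskip

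\noindent\textbf{Step 3 (Derivative values at $s=0$, parts \eqref{prop_h-c} and \eqref{prop_h-e}).} Using $t = s/(1+s)$ so that $\tfrac{\d t}{\d s}\big|_{s=0}=1$ and $\tfrac{\d^2 t}{\d s^2}\big|_{s=0}=-2$, together with the Taylor data $\phi_x(0)=0$, $\phi_x'(0) = \Tr[W_x(\log\sigma - \log W_x)] = -D(W_x\|\sigma)$, and $\phi_x''(0) = \Var$-type term giving $V(W_x\|\sigma) + D(W_x\|\sigma)^2$ (the standard identities behind the relative-entropy variance), I expand $-(1+s)\phi_x(s/(1+s))$ to second order in $s$. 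The first-order coefficient collapses to $-\phi_x'(0) = D(W_x\|\sigma)$, and after averaging this yields \eqref{prop_h-c}. The second-order coefficient, after the bookkeeping with the chain rule, collapses to $-\phi_x''(0) + (\text{cancellation}) = -V(W_x\|\sigma)$, giving \eqref{prop_h-e}. These are routine but slightly delicate calculations; they mirror the corresponding computation for $\widetilde{E}_0$ in Appendix \ref{app:prop2} and for the analogous classical quantities in \cite{PPV10}.

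\medskip

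\noindent\textbf{Step 4 (Monotonicity sandwich, part \eqref{prop_h-d}).} The upper bound $\partial_s E_\textnormal{h} \leq D(\mathscr{W}\|\sigma|P)$ follows from concavity (part \eqref{prop_h-b}) together with the value of the derivative at $s=0$ from part \eqref{prop_h-c}: a concave function has non-increasing derivative, so $\partial_s E_\textnormal{h}(s,\cdot) \leq \partial_s E_\textnormal{h}(0,\cdot) = D(\mathscr{W}\|\sigma|P)$. The lower bound $\lim_{s\to\infty}\partial_s E_\textnormal{h} \leq \partial_s E_\textnormal{h}(s,\cdot)$ is again immediate from concavity (the derivative decreases to its limit). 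The only genuine content is verifying the limit exists and is finite, i.e.\ that $\partial_s E_\textnormal{h}(s,P,\sigma)$ stays bounded as $s\to\infty$; this follows because $t = s/(1+s)\to 1^-$ and $(1+s)\phi_x(s/(1+s))$ grows at most linearly in $s$ (its slope tends to $-\log\Tr[\Pi_x\sigma]$ where $\Pi_x$ is the projection onto $\texttt{supp}(W_x)$, by analyticity of $\phi_x$ up to and including $t=1$ under our support assumption), so averaging over $x$ keeps things finite.

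\medskip

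\noindent\textbf{Main obstacle.} The delicate point is Step 1 / Step 4: one must be careful that the analyticity of $t\mapsto\Tr[W_x^{1-t}\sigma^t]$ genuinely extends to (a neighborhood of) the closed endpoint relevant for the $s\to\infty$ limit. This is where the standing assumption that $\texttt{im}(\mathscr{W})$ has full support on $\mathcal{H}$, plus $W_x \ll \sigma$, is essential — without it $\sigma^t$ could develop a vanishing eigenvalue and $\phi_x$ could blow up. Everything else (concavity, the first- and second-derivative identities) is a mechanical, if somewhat lengthy, computation parallel to the treatment of $\widetilde{E}_0$; the bulk of the work is deferred to Appendix \ref{app:prop_h}.
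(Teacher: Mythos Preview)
Your proposal is correct and takes essentially the same approach as the paper: both represent $E_\textnormal{h}$ as $-(1+s)\,\psi\!\bigl(\tfrac{1}{1+s}\bigr)$ (your $\phi_x(t)$ is the paper's $\psi(1-t)$), derive concavity from convexity of the log-trace function via the perspective/supremum-of-affines trick, obtain (c) and (e) by direct differentiation at $s=0$, and deduce (d) from concavity alone. One small bookkeeping slip: in Step~3 you write $\phi_x''(0)=V(W_x\|\sigma)+D(W_x\|\sigma)^2$, but in fact $\phi_x''(0)=f''(0)-\bigl(f'(0)\bigr)^2=V(W_x\|\sigma)$ already, and the chain rule gives $\partial_s^2\bigl[-(1+s)\phi_x(s/(1+s))\bigr]\big|_{s=0}=-\phi_x''(0)$ with no residual term---so there is no ``cancellation'' needed, the answer drops out directly.
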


\noindent Proposition~\ref{prop:prop_b} below lists the properties of $\widetilde{E}_\text{h}$, and the proof is provided in Appendix \ref{app:prop_b}.
\begin{prop}[Properties of $\widetilde{E}_\text{h}(s,P,\sigma)$] \label{prop:prop_b}
Consider a classical-quantum channel $\mathscr{W}:\mathcal{X}\to \mathcal{S(H)}$, a distribution $P\in\mathscr{P}(\mathcal{X})$, and a state $\sigma\in\mathcal{S(H)}$ with $W_x\ll \sigma$ for all $x\in \textnormal{\texttt{supp}}(P)$. Then $\widetilde{E}_\textnormal{h}(s,P,\sigma)$ defined in Eq.~\eqref{eq:Eb} enjoys the following properties.
\begin{enumerate}[(a)]
\item\label{prop_b-a}
$\widetilde{E}_\textnormal{h}(s,P,\sigma)$ and its partial derivatives $\partial \widetilde{E}_\textnormal{h}(s,P,\sigma)/\partial s$, $\partial^2 \widetilde{E}_\textnormal{h}(s,P,\sigma)/\partial s^2$, $\partial^3 \widetilde{E}_\textnormal{h}(s,P,\sigma)/\partial s^3$ are all continuous for $(s,P)\in \mathbb{R}_{\geq 0} \times \mathscr{P}(\mathcal{X})$.

\item\label{prop_b-b}
For every $P\in\mathscr{P}(\mathcal{X})$, the function $\widetilde{E}_\textnormal{h}(s,P,\sigma)$ is concave in $s$ for all $s\in\mathbb{R}_{\geq 0}$.
		
\item\label{prop_b-c}
For every $P\in\mathscr{P}(\mathcal{X})$,
\begin{align} \label{eq:first_b}
\left.\frac{ \partial \widetilde{E}_\textnormal{h}(s,P,\sigma)}{\partial s}\right|_{s = 0} = D\left(\mathscr{W}\|\sigma|P\right).
\end{align}
		
\item\label{prop_b-d}
For every $P\in\mathscr{P}(\mathcal{X})$,
\begin{align} \label{eq:E_b_I}
\lim_{s\to+\infty} \frac{ \partial \widetilde{E}_\textnormal{h}(s,P,\sigma)}{\partial s}\leq 
\frac{ \partial \widetilde{E}_\textnormal{h}(s,P,\sigma)}{\partial s} \leq D\left(\mathscr{W}\|\sigma|P\right), \; \forall s\in\mathbb{R}_{\geq 0}.  
\end{align}
		
\item\label{prop_b-e}
For every $P\in\mathscr{P}(\mathcal{X})$,
\begin{align} \label{eq:second_b}
\left.\frac{ \partial^2 \widetilde{E}_\textnormal{h}(s,P,\sigma)}{\partial s^2}\right|_{s = 0} = -\widetilde{V}\left( \mathscr{W}\|\sigma|P\right).
\end{align}
\end{enumerate}
\end{prop}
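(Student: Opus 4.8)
The plan is to prove all five properties of $\widetilde{E}_{\textnormal{h}}(s,P,\sigma) = s\,D^\flat_{1/(1+s)}(\mathscr{W}\|\sigma|P)$ by reducing to a single scalar function and studying it via an explicit integral representation. First I would unpack the definition: writing $\beta := 1/(1+s)$, one has
\begin{align*}
\widetilde{E}_{\textnormal{h}}(s,P,\sigma) = s\sum_{x\in\textnormal{\texttt{supp}}(P)} P(x)\, \frac{1}{\beta-1}\log\Tr\!\left[\mathrm{e}^{\beta\log W_x + (1-\beta)\log\sigma}\right] = -(1+s)\sum_x P(x)\log\Tr\!\left[\mathrm{e}^{\frac{1}{1+s}\log W_x + \frac{s}{1+s}\log\sigma}\right],
\end{align*}
where the second equality uses $s/(\beta-1) = -(1+s)$. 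So it suffices to set, for each fixed $x$ with $W_x\ll\sigma$,
\begin{align*}
g_x(s) := -(1+s)\log\Tr\!\left[\mathrm{e}^{\frac{1}{1+s}\log W_x + \frac{s}{1+s}\log\sigma}\right],
\end{align*}
and establish continuity of $g_x$ and its first three derivatives, concavity, and the two endpoint/derivative bounds; then property (a)–(e) for $\widetilde{E}_{\textnormal{h}}$ follow by taking the (finite) convex combination $\sum_x P(x) g_x(s)$, with the continuity in $P$ being automatic since $\mathscr{P}(\mathcal{X})$ is a simplex and $g_x$ does not depend on $P$. The key structural point, which makes the log-Euclidean variant easier than the Petz one, is that the exponent $\tfrac{1}{1+s}\log W_x + \tfrac{s}{1+s}\log\sigma = \log\sigma + \tfrac{1}{1+s}(\log W_x - \log\sigma)$ is \emph{linear} in the parameter $u := 1/(1+s)\in(0,1]$; writing $L_x := \log W_x - \log \sigma$ (a bounded self-adjoint operator on $\textnormal{\texttt{supp}}(\sigma)$), we are looking at $\Tr[\mathrm{e}^{\log\sigma + u L_x}]$, a smooth, strictly log-convex function of $u$ by the Golden–Thompson / Peierls–Bogoliubov machinery.

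The main technical steps, in order: (1) Introduce the tilted state $\sigma_{x,u} := \mathrm{e}^{\log\sigma + uL_x}/\Tr[\mathrm{e}^{\log\sigma+uL_x}]$ and the cumulant-type function $\Lambda_x(u) := \log\Tr[\mathrm{e}^{\log\sigma+uL_x}]$. Standard facts (differentiating under the trace, using the integral formula $\frac{\d}{\d u}\mathrm{e}^{A(u)} = \int_0^1 \mathrm{e}^{tA(u)}A'(u)\mathrm{e}^{(1-t)A(u)}\,\d t$) give $\Lambda_x$ real-analytic on a neighbourhood of $(0,1]$ with $\Lambda_x(0)=0$, $\Lambda_x'(u) = \Tr[\sigma_{x,u} L_x]$ via cyclicity, and $\Lambda_x''(u)\ge 0$ (it is a Kubo–Mori/Bogoliubov variance of $L_x$, hence $\ge 0$, and $>0$ unless $L_x$ is scalar on $\textnormal{\texttt{supp}}(\sigma)$). (2) Rewrite $g_x(s) = -(1+s)\Lambda_x\!\big(\tfrac{1}{1+s}\big)$. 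Since $s\mapsto 1/(1+s)$ is a smooth bijection $\mathbb{R}_{\ge0}\to(0,1]$ and $\Lambda_x$ is smooth up to and including $u=1$, the chain rule delivers continuity of $g_x, g_x', g_x'', g_x'''$ on $\mathbb{R}_{\ge0}$, giving (a). (3) For concavity (b): with $u = 1/(1+s)$ one computes $g_x(s) = -\Lambda_x(u)/u$; a direct computation shows $g_x'(s) = \Lambda_x(u) - u\Lambda_x'(u)$ — wait, more carefully, $\frac{\d}{\d s} = -u^2\frac{\d}{\d u}$, so $g_x'(s) = -u^2\cdot\frac{\d}{\d u}(-\Lambda_x(u)/u) = u^2\cdot\frac{u\Lambda_x'(u)-\Lambda_x(u)}{u^2} = u\Lambda_x'(u) - \Lambda_x(u)$, and then $g_x''(s) = -u^2\frac{\d}{\d u}[u\Lambda_x'(u)-\Lambda_x(u)] = -u^2\cdot u\Lambda_x''(u) = -u^3\Lambda_x''(u) \le 0$. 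So concavity reduces exactly to convexity of $\Lambda_x$, established in step (1). (4) For (c): $\frac{\partial\widetilde E_{\textnormal h}}{\partial s}\big|_{s=0} = \sum_x P(x)g_x'(0) = \sum_x P(x)[\,1\cdot\Lambda_x'(1) - \Lambda_x(1)\,]$; since $u=1$ corresponds to exponent $\log W_x$, we get $\Lambda_x(1) = \log\Tr[W_x] = 0$ and $\Lambda_x'(1) = \Tr[W_x L_x] = \Tr[W_x(\log W_x - \log\sigma)] = D(W_x\|\sigma)$, yielding $D(\mathscr{W}\|\sigma|P)$. (5) For (d): $g_x'(s) = u\Lambda_x'(u) - \Lambda_x(u)$ with $u\in(0,1]$; since $g_x'' \le 0$, $g_x'$ is nonincreasing on $\mathbb{R}_{\ge0}$, giving $\lim_{s\to\infty}g_x'(s) \le g_x'(s) \le g_x'(0) = D(W_x\|\sigma)$, and summing against $P$ gives (d). (6) For (e): differentiate again, $g_x''(0) = -u^3\Lambda_x''(u)\big|_{u=1} = -\Lambda_x''(1)$, and identify $\Lambda_x''(1)$ — the Bogoliubov variance of $L_x$ in the state $\sigma_{x,1} = W_x$ — with $\widetilde V(W_x\|\sigma)$ as given by the integral formula \eqref{eq:relative_V}; summing against $P$ gives $-\widetilde V(\mathscr{W}\|\sigma|P)$.

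The main obstacle I anticipate is step (1), specifically the analyticity of $\Lambda_x$ up to the boundary $u=1$ and the clean identifications $\Lambda_x'(1) = D(W_x\|\sigma)$ and $\Lambda_x''(1) = \widetilde V(W_x\|\sigma)$ with the \emph{exact} integral form in \eqref{eq:relative_V}. One must be careful that $W_x\ll\sigma$ but possibly $W_x$ is singular: all operators should be restricted to $\textnormal{\texttt{supp}}(\sigma)$ from the outset, where $\log\sigma$ is finite and $L_x = \log W_x - \log\sigma$ is a genuine bounded self-adjoint operator (with the convention that $\log W_x$ is taken on $\textnormal{\texttt{supp}}(\sigma)\supseteq\textnormal{\texttt{supp}}(W_x)$, being $-\infty$ on the gap — here one needs $u$ bounded away from values that would make $\mathrm{e}^{\log\sigma+uL_x}$ degenerate, but for $u\in(0,1]$ this is fine since $\mathrm{e}^{\log\sigma + uL_x}$ has support exactly $\textnormal{\texttt{supp}}(\sigma)$). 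The second derivative computation requires the Duhamel/Kubo formula and a symmetrization argument to bring $\Lambda_x''(1)$ into the manifestly symmetric form $\int_0^1 \Tr[W_x^{1-t}L_x W_x^t L_x]\,\d t - (\Tr[W_x L_x])^2$; this is a routine but slightly lengthy perturbation-theory computation, and it is essentially the only place real work happens. Everything else is bookkeeping with the chain rule and convexity. I would relegate the detailed derivative computations of step (1) to the appendix and keep the main-text argument at the level of steps (2)–(6).
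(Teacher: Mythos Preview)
Your proposal is correct and follows essentially the same route as the paper's proof: both reduce to the log-partition function $\Lambda_x(u)=\log\Tr[\mathrm{e}^{\log\sigma+uL_x}]$ (the paper's $\tilde\psi(\alpha)$), use its convexity for (b), and identify its first two derivatives at $u=1$ via the Duhamel/Fr\'echet formula for $\mathrm{e}^{A(u)}$ to obtain $D(W_x\|\sigma)$ and $\widetilde V(W_x\|\sigma)$ for (c) and (e). The only cosmetic difference is that the paper proves concavity by writing $-\widetilde E_{\textnormal h}=(1+s)\tilde\psi(1/(1+s))$ as a supremum of affine functions, whereas you compute $g_x''(s)=-u^3\Lambda_x''(u)\le 0$ directly; both arguments rest on the same convexity of $\Lambda_x$.
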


\subsubsection{Error Exponents.}
Auxiliary functions allow us to concisely define sphere-packing exponent functions of a classical-quantum channel. We will use notation similar to Refs.~\cite{Win99,CHT16bb, CHT16b}. Define
\begin{align} 
\widetilde{E}_\text{sp}(R,P,\sigma) &:=
\min_{\bar{\mathscr{W}}:\mathcal{X}\to\mathscr{S}_\circ} \left\{
D\left(\bar{\mathscr{W}}\|\mathscr{W}|P\right): D\left(\bar{\mathscr{W}}\|\sigma|P\right) \leq R \right\} \label{eq:sp_2P} \\
&=  \sup_{s\geq 0} \left\{  \widetilde{E}_\text{h}(s,P) - sR \label{eq:sp_2Pa}
\right\}, \\
E_\text{sp}^{(2)} (R,P,\sigma) &:= \sup_{s\geq 0} \left\{ E_\text{h}\left(s,P\right) -sR \right\}, \label{eq:Esp_2} 
\end{align}
for all $R>0$, $P\in\mathscr{P}(\mathcal{X})$, and $\sigma \in\mathcal{S}_{>0}(\mathcal{H})$. The equality in Eq.~(\ref{eq:sp_2Pa}) follows from \cite[Theorem 6]{CHT16b}.
%
From the definitions in Eqs.~\eqref{eq:sp_2P} and \eqref{eq:sp2}, it is not hard to see that \cite{ANS+08}
\begin{align} \label{eq:weak_0}
\widetilde{E}_\text{sp}(R,P,\sigma) =  0, \quad \forall R\geq D\left(\mathscr{W}\|\sigma|P\right).
\end{align}
and
\begin{align} \label{eq:R_inf}
E_\text{sp}^{(2)} (R,P,\sigma) = 
\begin{cases}
+\infty, & R< D_0\left(\mathscr{W}\|\sigma|P\right), \\
0, &R \geq D\left(\mathscr{W}\|\sigma|P\right).
\end{cases}
\end{align}
%

\section{Moderate Deviations for Classical-Quantum Channels} \label{sec:coding}

This section presents our main results---the error performance of classical-quantum channels satisfies the moderate deviation property, Eq.~\eqref{eq:moderate}.  The achievability part is stated in Theorem~\ref{theo:achievability}, and its proof is given in Section~\ref{ssec:proof:achievability}. Our proof strategy employs Hayashi's bound \cite{Hay07} and the properties of the modified auxiliary function (Proposition~\ref{prop:prop2}). Theorem~\ref{theo:converse} contains the converse part, and is proved in Section~\ref{ssec:proof:converse}. The proof involves a weak sphere-packing bound (Proposition~\ref{prop:weak}), a sharp converse lower bound (Proposition~\ref{prop:strong}), and an approximation of the error-exponent function around capacity (Proposition~\ref{prop:spCh}).

Let $(a_n)_{n\in\mathbb{N}}$ be a sequence of real numbers  satisfying
\begin{align} \label{eq:cond}
\begin{split}
&\textnormal{(i)} \;a_n \to 0, \quad\text{as} \quad n\to +\infty,\\
&\textnormal{(ii)} \; a_n \sqrt{n} \to +\infty, \quad\text{as}\quad n\to +\infty.
\end{split}
\end{align}

\begin{theo}[Achievability] \label{theo:achievability}
For any $\mathscr{W}:\mathcal{X}\to\mathcal{S(H)}$ with $V_\mathscr{W}>0$ and any sequence $(a_n)_{n\geq1}$ satisfying Eq.~\eqref{eq:cond}, there exists a sequence of codes $\{\mathcal{C}_n\}_{n\geq 1}$ with rates $R_n = C_\mathscr{W} - a_n$ so that
\begin{align} \label{eq:sup1}
\limsup_{n\to +\infty} \frac{1}{n a_n^2} \log \bar{\epsilon} \left( \mathscr{W}, \mathcal{C}_n \right) \leq - \frac{1}{2V_\mathscr{W}}.
\end{align}
\end{theo}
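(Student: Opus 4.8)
The plan is to start from Hayashi's upper bound \eqref{eq:Hayashi}, which gives
\[
\bar{\epsilon}(\mathscr{W},\mathcal{C}_n) \leq 4 \exp\left\{ -n \left[ \max_{0\leq s\leq 1} \max_{P\in\mathscr{P}(\mathcal{X})} \left\{ -s R_n + \widetilde{E}_0(s,P,P\mathscr{W}) \right\} \right] \right\},
\]
and to lower bound the bracketed exponent for the particular choice $R_n = C_\mathscr{W} - a_n$. The key idea is that since $a_n\to 0$, the optimal $s$ in the maximization will be close to $0$, so a second-order Taylor expansion of $s\mapsto \widetilde{E}_0(s,P,P\mathscr{W})$ around $s=0$ (justified by Proposition~\ref{prop:prop2}\eqref{prop2-a}) is the right tool. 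First I would fix a capacity-achieving distribution $P^\star$ attaining the minimal peripheral information variance, i.e.\ $I(P^\star,\mathscr{W}) = C_\mathscr{W}$ and $V(P^\star,\mathscr{W}) = V_\mathscr{W}$, and restrict the inner maximization to $P = P^\star$. Writing $\sigma^\star := P^\star\mathscr{W}$, Proposition~\ref{prop:prop2}\eqref{prop2-c} gives $\partial_s \widetilde{E}_0(0,P^\star,\sigma^\star) = D(P^\star\circ\mathscr{W}\|P^\star\otimes\sigma^\star) = I(P^\star,\mathscr{W}) = C_\mathscr{W}$, and Proposition~\ref{prop:prop2}\eqref{prop2-e} gives $\partial_s^2 \widetilde{E}_0(0,P^\star,\sigma^\star) = -U(P^\star,\mathscr{W}) = -V_\mathscr{W}$ (using the identity $U(P^\star,\mathscr{W}) = V(P^\star,\mathscr{W})$ quoted after \eqref{eq:V}).

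Next I would carry out the expansion: by Taylor's theorem with the third derivative bounded uniformly on a neighborhood of $s=0$ (Proposition~\ref{prop:prop2}\eqref{prop2-a}, using compactness of $\mathscr{P}(\mathcal{X})$ to get a uniform bound $K$ on $|\partial_s^3 \widetilde{E}_0|$),
\[
\widetilde{E}_0(s,P^\star,\sigma^\star) \geq s C_\mathscr{W} - \frac{s^2}{2} V_\mathscr{W} - \frac{K}{6} s^3, \qquad s\in[0,\delta],
\]
for some $\delta>0$. Hence the exponent is at least
\[
-s R_n + \widetilde{E}_0(s,P^\star,\sigma^\star) \geq -s(C_\mathscr{W}-a_n) + s C_\mathscr{W} - \frac{s^2}{2}V_\mathscr{W} - \frac{K}{6}s^3 = s a_n - \frac{s^2}{2}V_\mathscr{W} - \frac{K}{6}s^3.
\]
Choosing $s_n = a_n / V_\mathscr{W}$ (which lies in $[0,\delta]\cap[0,1]$ for all large $n$ since $a_n\to 0$) yields a lower bound of the form
\[
\frac{a_n^2}{2 V_\mathscr{W}} - \frac{K}{6 V_\mathscr{W}^3} a_n^3 = \frac{a_n^2}{2V_\mathscr{W}}\left( 1 - O(a_n) \right).
\]
Therefore
\[
\frac{1}{n a_n^2}\log\bar{\epsilon}(\mathscr{W},\mathcal{C}_n) \leq \frac{\log 4}{n a_n^2} - \frac{1}{2V_\mathscr{W}}\left(1 - O(a_n)\right).
\]
Taking $n\to\infty$ and using condition (ii), $n a_n^2 \to \infty$, so $\frac{\log 4}{n a_n^2}\to 0$, and condition (i), $a_n\to 0$, so the $O(a_n)$ term vanishes, giving $\limsup_{n\to\infty} \frac{1}{n a_n^2}\log\bar\epsilon(\mathscr{W},\mathcal{C}_n) \leq -\frac{1}{2V_\mathscr{W}}$.

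The main obstacle I anticipate is bookkeeping the uniformity of the Taylor remainder: one needs the neighborhood $[0,\delta]$ and the third-derivative bound $K$ to be genuinely independent of $n$, which they are because $P^\star$, $\sigma^\star$, and hence the function $s\mapsto \widetilde{E}_0(s,P^\star,\sigma^\star)$ are fixed once and for all — the $n$-dependence enters only through $R_n$ and through the choice of evaluation point $s_n\to 0$. A secondary subtlety is checking $s_n\in[0,1]$ so that Hayashi's bound applies with this $s$; this holds eventually since $a_n\to 0$ forces $a_n/V_\mathscr{W}<1$ for large $n$, and for the finitely many exceptional $n$ the statement is a limsup so they are irrelevant. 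One could alternatively avoid fixing $s_n$ explicitly and instead lower-bound $\max_{0\le s\le 1}\{sa_n - \tfrac{s^2}{2}V_\mathscr{W} - \tfrac{K}{6}s^3\}$ by plugging in $s_n = a_n/V_\mathscr{W}$ inside the max, which is exactly the step above.
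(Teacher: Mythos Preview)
Your proposal is correct and follows essentially the same approach as the paper: start from Hayashi's bound, fix a capacity-achieving distribution $P^\star$ with $V(P^\star,\mathscr{W})=V_\mathscr{W}$, Taylor-expand $\widetilde{E}_0(s,P^\star,P^\star\mathscr{W})$ at $s=0$ using Proposition~\ref{prop:prop2}\eqref{prop2-c},\eqref{prop2-e} (together with $U(P^\star,\mathscr{W})=V(P^\star,\mathscr{W})$), plug in $s_n=a_n/V_\mathscr{W}$, and control the cubic remainder via the continuity in Proposition~\ref{prop:prop2}\eqref{prop2-a}. The paper's argument is identical in structure, differing only cosmetically in how the uniform third-derivative bound is packaged.
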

\noindent The proof is given in Section \ref{ssec:proof:achievability}.

\begin{theo}[Converse] \label{theo:converse}
For any $\mathscr{W}:\mathcal{X}\to\mathcal{S(H)}$ with $V_\mathscr{W}>0$,  any sequence $\{a_n\}_{n\geq 1}$ satisfying Eq.~\eqref{eq:cond}, and any sequence of codes $\left\{ \mathcal{C}_n  \right\}_{n\geq 1}$ with rates $R_n  = C_\mathscr{W} - a_n$, it holds that 
	\begin{align} \label{eq:converse1}
	\liminf_{n\to +\infty} \frac{1}{na_n^2} \log \bar{\epsilon} \left( \mathscr{W}, \mathcal{C}_n \right) \geq - \frac{1}{2V_\mathscr{W}}.
	\end{align}	
\end{theo}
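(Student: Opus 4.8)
The plan is to establish the matching lower bound, i.e.\ $-\tfrac1n\log\bar\epsilon(\mathscr{W},\mathcal{C}_n)\le\tfrac{a_n^2}{2V_\mathscr{W}}(1+o(1))$, by a sphere-packing argument in which \emph{every} error term is kept at the scale $na_n^2$; the essential point is that $na_n^2=(a_n\sqrt n)^2$ may tend to $+\infty$ arbitrarily slowly, so the $\mathrm{poly}(n)$ prefactors and type-counting losses of the textbook sphere-packing bound are not affordable. First I would fix a capacity-achieving input distribution $P^\star$ realizing the minimal peripheral variance, $V(P^\star,\mathscr{W})=V_\mathscr{W}$, and take $\sigma:=P^\star\mathscr{W}$ as the reference state. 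The Kuhn--Tucker (saddle-point) conditions for the Holevo capacity give $D(W_x\|\sigma)\le C_\mathscr{W}$ for all $x\in\mathcal{X}$, hence $D(\mathscr{W}\|\sigma\,|\,P)\le C_\mathscr{W}$ for \emph{every} composition $P\in\mathscr{P}(\mathcal{X})$; this uniform bound is what lets a single reference $\sigma^{\otimes n}$ serve against all codewords simultaneously.

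Second, I would dispatch the easy cases. If $\liminf_n\bar\epsilon(\mathscr{W},\mathcal{C}_n)>0$ the inequality is immediate since $\tfrac1{na_n^2}\log(\mathrm{const})\to0$, so assume $\bar\epsilon\to0$; and the weak sphere-packing bound (Proposition~\ref{prop:weak}) handles codewords whose composition stays bounded away from the capacity-achieving set, for which the rate eventually exceeds the mutual information and the error is forced close to $1$, so such codewords only help. For the remaining codewords one has the elementary converse, valid for any reference state and any $c>0$,
\begin{align*}
\bar\epsilon(\mathscr{W},\mathcal{C}_n)\ \ge\ 1-\frac{c}{|\mathcal{M}|}-\frac1{|\mathcal{M}|}\sum_{m}\Tr\big[\{W_{\mathbf{x}^n(m)}-c\,\sigma^{\otimes n}\}_+\big],
\end{align*}
which follows from $\sum_m\Pi_{n,m}=\mathds{1}$; taking $c=\exp\{n(R_n-\delta_n)\}$ with a carefully chosen $\delta_n\downarrow0$ satisfying $a_n^2=o(\delta_n)$ and $\delta_n=o(a_n)$ makes the first correction negligible and reduces the problem to an upper bound on the tail $\Tr[\{W_{\mathbf{x}^n(m)}-e^{n(R_n-\delta_n)}\sigma^{\otimes n}\}_+]$.

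Third, I would invoke Proposition~\ref{prop:strong}, the sharp converse lower bound based on Chaganty--Sethuraman's strong (Bahadur--Rao type) large-deviation inequality applied, after a change of measure/tilting, to the log-likelihood ratio of $W_{\mathbf{x}^n(m)}$ against $\sigma^{\otimes n}$. In the moderate regime the binding tilt parameter is of order $a_n$, so the prefactor it produces is of order $(a_n\sqrt n)^{-1}=(na_n^2)^{-1/2}$ and contributes only $-\tfrac12\log(na_n^2)=o(na_n^2)$ to $\log\bar\epsilon$ --- precisely the improvement over the generic sphere-packing bound, whose $\mathrm{poly}(n)$ prefactor would cost $O(\log n)$ and be fatal when $na_n^2$ grows slowly; note also that, unlike \cite[Proposition~14]{CHT16b}, Proposition~\ref{prop:strong} allows a blocklength-dependent rate, which is what we need. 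This yields $-\tfrac1n\log\bar\epsilon(\mathscr{W},\mathcal{C}_n)\le E_{\mathrm{sp}}^{(2)}(R_n-\delta_n,P_n,\sigma)+o(a_n^2)$ for the relevant compositions $P_n$. Finally, Proposition~\ref{prop:spCh} expands the error-exponent function near capacity: using $E_{\mathrm{sp}}^{(2)}(R,P,\sigma)=\sup_{s\ge0}\{E_{\mathrm{h}}(s,P,\sigma)-sR\}$ together with $\partial_sE_{\mathrm{h}}|_{s=0}=D(\mathscr{W}\|\sigma|P)$ and $\partial_s^2E_{\mathrm{h}}|_{s=0}=-V(\mathscr{W}\|\sigma|P)$ from Proposition~\ref{prop:prop_h}, a Legendre/saddle-point computation gives $E_{\mathrm{sp}}^{(2)}(D(\mathscr{W}\|\sigma|P)-b,P,\sigma)=\tfrac{b^2}{2V(\mathscr{W}\|\sigma|P)}+o(b^2)$ as $b\downarrow0$. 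Since $D(\mathscr{W}\|\sigma|P_n)\le C_\mathscr{W}$ one gets $D(\mathscr{W}\|\sigma|P_n)-(R_n-\delta_n)\le a_n+\delta_n=a_n(1+o(1))$, and since the relevant $P_n$ approach the (compact) set of capacity-achieving distributions --- on which $P^\star\mathscr{W}=\sigma$ is the unique output and $V(\mathscr{W}\|\sigma|P^\star)=V(P^\star,\mathscr{W})\ge V_\mathscr{W}$ --- continuity forces $V(\mathscr{W}\|\sigma|P_n)\ge V_\mathscr{W}-o(1)$. Combining the three steps gives $-\tfrac1n\log\bar\epsilon(\mathscr{W},\mathcal{C}_n)\le\tfrac{a_n^2}{2V_\mathscr{W}}(1+o(1))$, which is the assertion.

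The main obstacle is Proposition~\ref{prop:strong}: producing a converse whose multiplicative corrections are polynomial in $na_n^2$, not in $n$, uniformly over the entire moderate range. This forces one to use the exact exponential rate of Chaganty--Sethuraman's inequality with its uniformly controlled prefactor, and in the non-commutative setting to perform the tilting so that the pinching/Nussbaum--Szko{\l}a passage to a classical model does not reintroduce a $\mathrm{poly}(n)$ loss. A secondary delicate point is the variance bookkeeping: the estimate must deliver $V(\mathscr{W}\|\sigma|P)$, which matches $V_\mathscr{W}$ in the limit, rather than the log-Euclidean variant $\widetilde V_\mathscr{W}$; this is why $E_{\mathrm{sp}}^{(2)}$ built from the Petz R\'enyi divergence, and the reference state $\sigma=P^\star\mathscr{W}$, are the right objects to work with.
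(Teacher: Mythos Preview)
Your plan is essentially the paper's proof: fix $\sigma=P^\star\mathscr{W}$, split codewords according to whether $D(\mathscr{W}\|\sigma|P_{\mathbf{x}^n})>R_n$, use Proposition~\ref{prop:weak} on the ``bad'' side and Proposition~\ref{prop:strong} on the ``good'' side, and finish with the near-capacity expansion of Proposition~\ref{prop:spCh}. The variance bookkeeping you flag (Petz-based $E_{\mathrm{sp}}^{(2)}$ delivering $V_\mathscr{W}$ rather than $\widetilde V_\mathscr{W}$) is exactly the point of Remark~\ref{remark1}, and your prefactor accounting $\log(s_n^\star\sqrt n)=\tfrac12\log(na_n^2)+O(1)=o(na_n^2)$ matches the paper's use of Eq.~\eqref{eq:spChc3}.

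The only structural difference is the passage from $\bar\epsilon$ to hypothesis testing. The paper expurgates half the codewords (cost $\tfrac1n\log2=o(a_n)$ in rate) to replace $\bar\epsilon$ by $\epsilon_{\max}$, then invokes the meta-converse $\epsilon_{\max}\ge\min_{\mathbf{x}^n}\widehat{\alpha}_{\exp\{-nR_n\}}(W_{\mathbf{x}^n}^{\otimes n}\|\sigma^{\otimes n})$ directly. Your hockey-stick inequality $\bar\epsilon\ge 1-c/M-\tfrac1M\sum_m\Tr[\{W_{\mathbf{x}^n(m)}-c\sigma^{\otimes n}\}_+]$ avoids expurgation but creates a small bookkeeping wrinkle: as stated, Proposition~\ref{prop:strong} lower-bounds $\widehat{\alpha}_\mu$, not the Helstrom quantity $1-\Tr[\{\rho-c\sigma\}_+]=\min_Q(\alpha+c\beta)$ that you need. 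The link is the elementary observation $\min_Q(\alpha+c\beta)\ge\min\{\widehat{\alpha}_\mu,\,c\mu\}$ for any $\mu$; taking $\mu=e^{-nR_n}$ and $c=e^{n(R_n-\delta_n)}$ your condition $a_n^2=o(\delta_n)$ makes $c\mu=e^{-n\delta_n}$ the smaller term vanish against $\widehat{\alpha}_\mu\asymp e^{-na_n^2/(2V_\mathscr{W})}$. So your ``upper bound on the tail'' is indeed the same as the paper's lower bound on $\widehat{\alpha}$, just reached by a slightly different reduction; the paper's expurgation route is marginally cleaner because it plugs straight into Proposition~\ref{prop:strong} without the auxiliary $\delta_n$.
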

\noindent The proof is given in Section \ref{ssec:proof:converse}.

\begin{remark} \label{remark1}
Altu\u{g} and Wagner \cite{AW14b} proved Theorem~\ref{theo:converse} for discrete classical channels by a weak sphere-packing bound with the expression of $\widetilde{E}_\text{sp}$. Although such a weak sphere-packing bound indeed holds for c-q channels (see Proposition~\ref{prop:weak} and Remark \ref{remark2} in Appendix~\ref{app:weak}),
Proposition~\ref{prop:spCh} in Section~\ref{ssec:proof:converse} shows that it will lead to
\begin{align}
\limsup_{n\to +\infty} \frac{1}{n a_n^2} \log \bar{\epsilon} \left( \mathscr{W}, \mathcal{C}_n \right) \leq - \frac{1}{2 \widetilde{V}_\mathscr{W}},
\end{align}
where $\widetilde{V}_\mathscr{W}$ is defined in Eq.~\eqref{eq:V2b}.
Since $\widetilde{V}(\rho\|\sigma) \leq {V}(\rho\|\sigma)$ \cite[Theorem 1.2]{Bou06}, it holds that $\widetilde{V}_\mathscr{W} \leq {V}_\mathscr{W}$ and the equality happens if and only if the channel reduces to classical. Hence, Altu\u{g} and Wagner's method yields a weaker result in quantum regime; namely, a gap between the achievability and the converse. In Section~\ref{ssec:proof:converse}, we will employ a sharp converse bound from strong large deviation theory to achieve our result, Theorem \ref{theo:converse}.
\end{remark}

\subsection{Proof of Achievability:~Theorem \ref{theo:achievability}} \label{ssec:proof:achievability}
Let $\mathscr{W}:\mathcal{X}\to\mathcal{S(H)}$ satisfy $V_\mathscr{W}>0$. 
Let $\{a_n\}_{n\geq 1}$ be any sequence of real numbers satisfying Eq.~\eqref{eq:cond}. 
Since $V_\mathscr{W}>0$, Eq.~\eqref{eq:positive_C} shows that $C_\mathscr{W}>0$.
Hence, we have $C_\mathscr{W} - a_n >0$, for all sufficiently large $n$. Fix such an integer $n$ onwards, Hayashi's upper bound, Eq.~\eqref{eq:Hayashi}, implies that there exists a code $\mathcal{C}_n$ with $R_n = C_\mathscr{W} - a_n$ so that
\begin{align}
\bar{\epsilon}(\mathscr{W},\mathcal{C}_n) \leq 4 \exp \left( -n \left[ \max_{0\leq s \leq 1} \left\{
\widetilde{E}_0(s,P, P\mathscr{W}) - s R_n
\right\} \right] \right),
\end{align}
for all $P\in\mathscr{P}(\mathcal{X})$.  
In the following, we denote by $\widetilde{E}_0(s,P) := \widetilde{E}_0(s,P,P\mathscr{W})$ for notational convenience.
Simple algebra yields
\begin{align} \label{eq:sup5}
\frac{1}{n a_n^2} \log \bar{\epsilon}(W, \mathcal{C}_n) \leq \frac{\log 4}{na_n^2} - \frac1{a_n^2} \max_{0\leq s\leq 1} \left\{ \widetilde{E}_0(s,P) - s R_n\right\},
\end{align}
for all sufficiently large $n$ and any $P\in\mathscr{P}(\mathcal{X})$.

Let $\widetilde{\mathscr{P}}(\mathcal{X})$ be the set of distributions that achieve the minimum in Eq.~\eqref{eq:V2}, and let  $\widetilde{P}\in\widetilde{\mathscr{P}}(\mathcal{X})$.
Note that Ref.~\cite[Lemma 3]{TT13} implies that $\widetilde{\mathscr{P}}(\mathcal{X})$ is compact.
Applying Taylor's theorem to $\widetilde{E}_0(s,\widetilde{P})$ at $s=0$ together with Proposition \ref{prop:prop2} gives
\begin{align} \label{eq:sup2}
\widetilde{E}_0\left(s, \widetilde{P}\right) = sC_\mathscr{W} - \frac{s^2}{2} V_\mathscr{W} + 
\left. \frac{s^3}{6} \frac{\partial^3 \widetilde{E}_0\left(s,\widetilde{P}\right)}{\partial s^3} \right|_{s = \bar{s}},
\end{align}
for some $\bar{s} \in [0,s]$. Let $s_n = a_n/V_\mathscr{W}$. Then  $s_n\leq 1$ for all sufficiently large $n$ by the assumption in Eq.~\eqref{eq:cond} and $V_\mathscr{W} >0$. For all $s_n\leq 1$, Eq.~\eqref{eq:sup2} yields 
\begin{align} 
\max_{0\leq s \leq 1} \left\{\widetilde{E}_0\left(s,\widetilde{P}\right) - sR_n
\right\}
&\geq \widetilde{E}_0\left(s_n,\widetilde{P}\right) - s_n R_n \\
&= 
\frac{a_n}{V_\mathscr{W}} \left( C_\mathscr{W} - R_n \right)
-\frac{a_n^2}{2 V_\mathscr{W}}  + \frac{a_n^3}{6 V_\mathscr{W}^3}
\left. \frac{\partial^3 \widetilde{E}_0\left(s,\widetilde{P}\right)}{\partial s^3} 	\right|_{s = \bar{s}_n} \\
&= \frac{a_n^2}{2 V_\mathscr{W}}  + \frac{a_n^3}{6 V_\mathscr{W}^3}
\left. \frac{\partial^3 \widetilde{E}_0\left(s,\widetilde{P}\right)}{\partial s^3} 	\right|_{s = \bar{s}_n}, \label{eq:sup3}
\end{align}
where $\bar{s}_n\in[0,s_n]$ and Eq.~\eqref{eq:sup3} holds since {$R_n = C_\mathscr{W} - a_n$}.

Define
\begin{align} \label{eq:Upsilon}
\Upsilon = \max_{ (s,P)\in[0,1]\times \widetilde{\mathscr{P}}(\mathcal{X})} 
\left| \frac{\partial^3 \widetilde{E}_0\left(s,{P}\right)}{\partial s^3} \right|,
\end{align}
which is finite due to the compact set $[0,1]\times \widetilde{\mathscr{P}}(\mathcal{X})$ and item \ref{prop2-a} in Proposition \ref{prop:prop2}.
Therefore, Eq.~\eqref{eq:sup3} implies that
\begin{align}
\max_{0\leq s \leq 1} \left\{ \widetilde{E}_0 \left(s, \widetilde{P}\right) - sR_n
\right\}
&\geq \frac{a_n^2}{2 V_\mathscr{W}}  + \frac{a_n^3}{6 V_\mathscr{W}^3}
\left. \frac{\partial^3 \widetilde{E}_0\left(s,\widetilde{P}\right)}{\partial s^3} 	\right|_{s = \bar{s}_n} \\
&\geq \frac{a_n^2}{2 V_\mathscr{W}}  - \frac{a_n^3}{6 V_\mathscr{W}^3}
\left|\left. \frac{\partial^3 \widetilde{E}_0\left(s,\widetilde{P}\right) }{\partial s^3} 	\right|_{s = \bar{s}_n} \right| \\	
&\geq 
\frac{a_n^2}{2 V_\mathscr{W}}  - \frac{a_n^3}{6 V_\mathscr{W}^3}\Upsilon, \label{eq:sup4}
\end{align}
for all sufficiently large $n$.

Substituting Eq.~\eqref{eq:sup4} into Eq.~\eqref{eq:sup5} gives
\begin{align}
\frac{1}{na_n^2} \log \bar\epsilon (\mathscr{W}, \mathcal{C}_n) \leq \frac{\log 4}{ na_n^2}  - \frac{1}{2V_\mathscr{W}} \left( 1 - \Upsilon \frac{a_n}{3V_\mathscr{W}^2} \right).
\end{align}
Recall Eq.~\eqref{eq:cond} and let $n\to+\infty$, which completes the proof:
\begin{align}
\limsup_{n\to +\infty} \frac{1}{na_n^2} \log \bar\epsilon(\mathscr{W},\mathcal{C}_n) \leq -\frac{1}{2V_\mathscr{W}}.
\end{align}
\qed

\subsection{Proof of Converse:~Theorem \ref{theo:converse}} \label{ssec:proof:converse}

Our strategy consists of the following steps.
First, we claim that it suffices to prove Eq.~\eqref{eq:converse1} for the maximal error probability of any code $\mathcal{C}_n$, i.e.~$\epsilon_{\max}(\mathscr{W}, \mathcal{C}_n)$.
Recall the standard expurgation method (see e.g.~\cite[p.~96]{SGB67}, \cite[Theorem 20]{Bla74}, \cite[p.~395]{Bla87}): by removing half codewords with highest error probability to arrive at $\bar{\epsilon} \left( \mathscr{W}, \mathcal{C}_n \right)  \geq \frac12 {\epsilon}_{\max} \left( \mathscr{W}, \mathcal{C}_n' \right) $ with $|\mathcal{C}_n'| = \ceil{|\mathcal{C}_n|/2} \geq \frac12\exp\{nR_n\} = \exp\{n(R_n - \frac1n\log 2)\}$. Since the induced rate back-off is only $ \frac1n \log 2 = o(a_n)$, one might define another sequence $a_n' := a_n - \frac1n \log 2$  satisfying Eq.~\eqref{eq:cond}. Hence, without of loss generality, we only need to prove the converse part for $\epsilon_{\max}$.

Second, we employ the method of Ref.~\cite[Lemma 16]{CHT16b} to relate the error probability $\epsilon_{\max}$ to the minimum type-I error:
\begin{align}
\frac{\log \epsilon_{\max}(\mathscr{W}, \mathcal{C}_n) }{n a_n^2} &\geq \max_{\sigma^n \in \mathcal{S}(\mathcal{H}^{\otimes n})} \min_{\mathbf{x}^n\in\mathcal{X}^n}
\frac{\log \widehat{\alpha}_{ \exp\{-nR_n\} }(W_{\mathbf{x}^n}^{\otimes n}\| \sigma^n) }{n a_n^2}\\
&\geq  \min_{\mathbf{x}^n\in\mathcal{X}^n}
\frac{\log \widehat{\alpha}_{ \exp\{-nR_n\} }(W_{\mathbf{x}^n}^{\otimes n}\|  (P^\star\mathscr{W})^{\otimes n}   ) }{n a_n^2},
 \label{eq:converse0}
\end{align}
where  $P^\star \in \mathscr{P}(\mathcal{X})$ is an arbitrary capacity-achieving distribution, i.e.~$I(P^\star,\mathscr{W}) = C_\mathscr{W}$.

Third, we divide the set of codewords into two groups. Fix an arbitrary $\eta\in(0,\frac12)$. Let $A := \max_{\rho\in\mathscr{S}_\circ} V(\rho\|P^\star\mathscr{W})$ and let $\xi = \sqrt{2A/\eta}$.
Define:
\begin{align} \label{eq:set}
&\Omega_{\text{good}} := \left\{ \mathbf{x}^n \in \mathcal{X}^n: D(\mathscr{W}\|P^\star \mathscr{W}|P_{\mathbf{x}^n}) > R_n  \right\};\\
&\Omega_{\text{bad}} := \mathcal{X}^n \backslash \Omega_{\text{good}}. 
\end{align}
For the codes in $\Omega_{\text{bad}}$, we employ a weak converse bound in Proposition~\ref{prop:weak}, and apply a sharp converse bound, Proposition~\ref{prop:strong}, for $\Omega_{\text{good}}$. Furthermore, we can assume $a_n>0$ for all sufficiently large $n\in\mathbb{N}$ owing to the assumption $\lim_{n\to+\infty} a_n \sqrt{n}  = +\infty$. Subsequently, we will consider such $n$ onwards.

\begin{proof}[Proof of Theorem \ref{theo:converse}]
We start the proof with the case $\Omega_{\text{bad}}$, and further consider two different cases: 
\begin{align}
&\Omega_{\text{bad}}^{(1)} := \left\{ \mathbf{x}^n \in \mathcal{X}^n: D(\mathscr{W}\|P^\star \mathscr{W}|P_{\mathbf{x}^n}) \leq R_n - \frac{2\xi}{\sqrt{n}}   \right\}; \\
&\Omega_{\text{bad}}^{(2)} := \left\{ \mathbf{x}^n \in \mathcal{X}^n: R_n - \frac{2\xi}{\sqrt{n}}< D(\mathscr{W}\|P^\star \mathscr{W}|P_{\mathbf{x}^n}) \leq R_n    \right\}.
\end{align}
We apply the following weak converse bound with $\sigma = P^\star \mathscr{W}$, whose proof is provided in Appendix \ref{app:weak} to further lower bound the right-hand side of Eq.~\eqref{eq:converse0}.
\begin{prop} 
[A Weak Converse Bound] \label{prop:weak}
Consider a classical-quantum channel $\mathscr{W}:\mathcal{X}\to\mathcal{S(H)}$ with $\mathscr{S}_\circ := \overline{\textnormal{\textsf{im}}(\mathscr{W})}$, an arbitrary rate $R\geq 0$, and $\sigma\in\mathcal{S}_{>0}(\mathcal{H})$.
For any $\eta\in(0,\frac12)$, let $N_0\in\mathbb{N}$ such that for all $n\geq N_0$,
\begin{align} \label{eq:weak_cond}
 \mathrm{e}^{ - \xi \sqrt{ n}  } \leq \frac{\eta}{2},
\end{align}
where $\xi = \sqrt{ 2A / \eta}$ and $A := \max_{ \rho\in\mathscr{S}_\circ  }  V(\rho\|\sigma)$.
Then, it holds that for all $n\geq N_0$,
\begin{align} \label{eq:SP2}
\widehat{\alpha}_{\exp\{-nR\}}\left( W_{\mathbf{x}^n}^{\otimes n}\| \sigma^{\otimes n} \right) 
\geq  f(\eta) \exp\left\{-n \left[ \frac{\widetilde{E}_\textnormal{sp}\left(R- \frac{2\xi}{\sqrt{n}} , P_{\mathbf{x}^n} , \sigma \right)}{1-\eta}  \right] \right\},
\end{align}
where $f(\eta) = \exp\left\{ -\frac{h\left(1-\eta\right)}{1-\eta} \right\}$
and $h(p) := - p \log p - (1-p) \log (1-p)$ is the binary entropy function.
\end{prop}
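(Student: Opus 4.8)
The plan is to run a change-of-measure argument against a ``dummy'' channel attaining the sphere-packing exponent, and then supply the missing strong-converse input by hand.

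\textbf{Set-up and the dummy channel.} Fix $\mathbf{x}^n$, write $P\coloneqq P_{\mathbf{x}^n}$ and $R'\coloneqq R-2\xi/\sqrt{n}$. If $\widetilde{E}_{\textnormal{sp}}(R',P,\sigma)=+\infty$ there is nothing to prove, so assume it is finite. Since $\mathscr{S}_\circ$ is compact, $\mathcal{X}$ is finite, and the conditional relative entropy is lower semicontinuous, the minimum in \eqref{eq:sp_2P} defining $\widetilde{E}_{\textnormal{sp}}(R',P,\sigma)$ is attained at some channel $\bar{\mathscr{W}}\colon\mathcal{X}\to\mathscr{S}_\circ$ with $D(\bar{\mathscr{W}}\|\sigma|P)\le R'$ and $D(\bar{\mathscr{W}}\|\mathscr{W}|P)=\widetilde{E}_{\textnormal{sp}}(R',P,\sigma)<\infty$. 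Put $\bar{W}^{\otimes n}\coloneqq\bigotimes_{i=1}^{n}\bar{W}_{x_i}$; then $D(\bar{W}^{\otimes n}\|W_{\mathbf{x}^n}^{\otimes n})=n\,\widetilde{E}_{\textnormal{sp}}(R',P,\sigma)$ and $D(\bar{W}^{\otimes n}\|\sigma^{\otimes n})=n\,D(\bar{\mathscr{W}}\|\sigma|P)\le nR'$. Let $Q$ attain $\widehat{\alpha}\coloneqq\widehat{\alpha}_{\exp\{-nR\}}(W_{\mathbf{x}^n}^{\otimes n}\|\sigma^{\otimes n})$, so $\Tr[(\mathds{1}-Q)W_{\mathbf{x}^n}^{\otimes n}]=\widehat{\alpha}$ and $\Tr[Q\sigma^{\otimes n}]\le\exp\{-nR\}$.

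\textbf{Change of measure.} Applying data processing of the relative entropy to the binary measurement $\{\mathds{1}-Q,\,Q\}$ and abbreviating $\bar{\alpha}\coloneqq\Tr[(\mathds{1}-Q)\bar{W}^{\otimes n}]$ gives $n\,\widetilde{E}_{\textnormal{sp}}(R',P,\sigma)=D(\bar{W}^{\otimes n}\|W_{\mathbf{x}^n}^{\otimes n})\ge\bar{\alpha}\log\tfrac{\bar{\alpha}}{\widehat{\alpha}}+(1-\bar{\alpha})\log\tfrac{1-\bar{\alpha}}{1-\widehat{\alpha}}\ge -h(\bar{\alpha})-\bar{\alpha}\log\widehat{\alpha}$. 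Rearranging, $\widehat{\alpha}\ge\exp\{-[h(\bar{\alpha})+n\widetilde{E}_{\textnormal{sp}}(R',P,\sigma)]/\bar{\alpha}\}$, and since $t\mapsto-[h(t)+c]/t$ is nondecreasing on $(0,1]$ for every $c\ge0$ (its derivative has the sign of $c-\log(1-t)\ge0$), \emph{provided} $\bar{\alpha}\ge1-\eta$ we obtain
\[
\widehat{\alpha}\ \ge\ \exp\!\left\{-\frac{h(1-\eta)+n\,\widetilde{E}_{\textnormal{sp}}(R',P,\sigma)}{1-\eta}\right\}\ =\ f(\eta)\,\exp\!\left\{-\frac{n\,\widetilde{E}_{\textnormal{sp}}(R',P,\sigma)}{1-\eta}\right\},
\]
which is exactly \eqref{eq:SP2}. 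It remains to prove $\bar{\alpha}\ge1-\eta$, i.e.\ $\Tr[Q\bar{W}^{\otimes n}]\le\eta$.

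\textbf{The strong-converse estimate.} Here one must exploit that $D(\bar{W}^{\otimes n}\|\sigma^{\otimes n})\le nR'<nR$ while $\Tr[Q\sigma^{\otimes n}]\le\exp\{-nR\}$: the test $Q$ overshoots the Stein exponent, so its type-I error against $\bar{W}^{\otimes n}$ must be nearly $1$. I would start from the one-shot converse $\Tr[(\mathds{1}-Q)\bar{W}^{\otimes n}]\ge\Tr[\Pi_\lambda\bar{W}^{\otimes n}]-\mathrm{e}^{n\lambda}\Tr[Q\sigma^{\otimes n}]$, where $\Pi_\lambda$ is the spectral projector of $\bar{W}^{\otimes n}-\mathrm{e}^{n\lambda}\sigma^{\otimes n}$ onto its non-positive part; taking $\lambda=R-\xi/\sqrt{n}$ and using \eqref{eq:weak_cond} to control $\mathrm{e}^{n(\lambda-R)}=\mathrm{e}^{-\xi\sqrt{n}}\le\eta/2$ reduces everything to the ``atypicality'' bound $\Tr[(\mathds{1}-\Pi_\lambda)\bar{W}^{\otimes n}]\le\eta/2$. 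To prove this I would pass to a classical model for the non-commuting pair $(\bar{W}^{\otimes n},\sigma^{\otimes n})$ --- e.g.\ the Nussbaum--Szko{\l}a distributions, which factor over the $n$ letters and whose $i$-th pair has relative entropy $D(\bar{W}_{x_i}\|\sigma)$ and relative-entropy variance exactly $V(\bar{W}_{x_i}\|\sigma)\le A$ (or, alternatively, a pinching onto the eigenbasis of $\sigma^{\otimes n}$) --- so that the quantity in question becomes the probability that a sum of $n$ independent log-likelihood ratios, of total mean $n\,D(\bar{\mathscr{W}}\|\sigma|P)\le n(R-2\xi/\sqrt{n})$ and total variance $\le nA$, exceeds $n(R-\xi/\sqrt{n})$, i.e.\ deviates above its mean by at least $\xi\sqrt{n}$. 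Chebyshev's inequality bounds this by $nA/(\xi\sqrt{n})^{2}=A/\xi^{2}=\eta/2$ since $\xi^{2}=2A/\eta$, and $\eta/2+\eta/2=\eta$ finishes the proof.

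\textbf{Main obstacle.} The delicate step is the strong-converse estimate: a bare use of data processing on $D(\bar{W}^{\otimes n}\|\sigma^{\otimes n})$ is far too lossy (it only yields $\bar{\alpha}\ge1-o(1)$), so the product/i.i.d.\ structure has to be used genuinely, and the non-commutativity of $\bar{W}^{\otimes n}$ and $\sigma^{\otimes n}$ must be handled so that (i) the relevant per-letter variance is controlled by $A=\max_{\rho\in\mathscr{S}_\circ}V(\rho\|\sigma)$ and (ii) no polynomial-in-$n$ factor survives into the prefactor $f(\eta)$. Everything else --- the existence of the minimizing $\bar{\mathscr{W}}$, the monotonicity lemma in the change-of-measure step, and the bookkeeping of the $O(1/\sqrt{n})$ rate back-offs --- is routine.
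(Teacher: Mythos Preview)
Your overall architecture is exactly the paper's: reduce to a dummy channel $\bar{\mathscr{W}}$ attaining $\widetilde{E}_{\textnormal{sp}}(R',P,\sigma)$, apply data processing of $D(\bar W^{\otimes n}\|W_{\mathbf{x}^n}^{\otimes n})$ through the binary measurement to get
$\widehat\alpha\ge\exp\{-[h(\bar\alpha)+n\widetilde{E}_{\textnormal{sp}}]/\bar\alpha\}$, use the monotonicity of $t\mapsto -[h(t)+c]/t$, and then establish $\bar\alpha\ge 1-\eta$ by a Wolfowitz-type strong converse whose Chebyshev numerator is $A$. Your one-shot inequality and the paper's hockey-stick route both reduce the last step to the atypicality bound $\Tr\bigl[\bar W^{\otimes n}\{\bar W^{\otimes n}>\mathrm{e}^{n\lambda}\sigma^{\otimes n}\}\bigr]\le\eta/2$, and your arithmetic $A/\xi^2=\eta/2$ is correct.

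The gap is precisely the step you flagged. Neither of your proposed classical reductions controls the \emph{quantum} projector probability in the needed direction. The Nussbaum--Szko\l a map preserves all Petz--R\'enyi divergences (hence $D$ and $V$), but there is no inequality of the form $\Tr[\rho\{\rho>\gamma\sigma\}]\le\Pr_p[p>\gamma q]$; the Audenaert--Nussbaum--Szko\l a machinery yields \emph{lower} bounds on quantum testing errors, not the upper bound you need here. Pinching onto the eigenbasis of $\sigma^{\otimes n}$ changes both the state and the projector simultaneously, so the target quantity is again not controlled --- and, as you anticipated, it would also cost a polynomial-in-$n$ factor from the spectrum size. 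The paper bypasses all of this by invoking a genuinely quantum Chebyshev inequality (Sharma--Warsi, stated in the paper as Lemma~\ref{lemm:Chebyshev}): for $\log\gamma>D(\rho\|\sigma)$,
\[
\Tr\bigl[\rho\{\rho\ge\gamma\sigma\}\bigr]\ \le\ \frac{V(\rho\|\sigma)}{\bigl(\log\gamma-D(\rho\|\sigma)\bigr)^{2}},
\]
applied directly to $(\bar W^{\otimes n},\sigma^{\otimes n})$ with $\log\gamma=nD(\bar{\mathscr{W}}\|\sigma|P)+\xi\sqrt{n}$; additivity of $V$ over tensor products gives $V(\bar W^{\otimes n}\|\sigma^{\otimes n})\le nA$, hence the bound $A/\xi^2=\eta/2$ with no classical detour and no polynomial loss. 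Replace your ``pass to a classical model'' paragraph by this lemma and your proof is complete and coincides with the paper's.
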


Let $\eta$ and $\xi$ be defined as above, and let $N_1$ be an integer satisfying Eq.~\eqref{eq:weak_cond}. 
Then Eq.~\eqref{eq:SP2} gives, for all $n\geq N_1$,
\begin{align}\label{eq:wmod1}
\frac{ \log \widehat{\alpha}_{\exp\{-nR_n\}}(W_{\mathbf{x}^n}^{\otimes n}\| (P^\star\mathscr{W})^{\otimes n})  }{ n a_n^2  } 
&\geq - \frac{ \widetilde{E}_\text{sp}\left( R_n - \frac{2\xi}{\sqrt{n}} , P_{\mathbf{x}^n}, P^\star\mathscr{W} \right) }{ a_n^2 (1-\eta) } + \frac{ \log f(\eta)}{ n a_n^2}.
\end{align}
Further, Eq.~\eqref{eq:weak_0} implies that for all $\mathbf{x}^n \in \Omega_{\text{bad}}^{(1)}$,
\begin{align}
\widetilde{E}_\text{sp}\left( R_n - \frac{2\xi}{\sqrt{n}} , P_{\mathbf{x}^n}, P^\star\mathscr{W} \right) = 0. \label{eq:converse3}
\end{align}
Hence, we have for all $\mathbf{x}^n\in\Omega_{\text{bad}}^{(1)}$,
\begin{align}
\frac{ \log \widehat{\alpha}_{\exp\{-nR_n\}}(W_{\mathbf{x}^n}^{\otimes n}\| (P^\star\mathscr{W})^{\otimes n})  }{ n a_n^2  } 
&\geq  \frac{ \log f(\eta)}{ n a_n^2} \\
&\geq - \frac{1}{2V_\mathscr{W}} +  \frac{ \log f(\eta)}{ n a_n^2} , 
\end{align}
where the last inequality follows from $V_\mathscr{W} > 0$. 
Since $f(\eta)<+\infty$, taking the infimum limit of $n\to +\infty$ and using Eq.~\eqref{eq:cond} give, for all $\mathbf{x}^n \in \Omega_{\text{bad}}^{(1)}$,
\begin{align}
\liminf_{n\to+\infty} \frac{ \log \widehat{\alpha}_{\exp\{-nR_n\}}\left(W_{\mathbf{x}^n}^{\otimes n}\| (P^\star\mathscr{W})^{\otimes n}) \right)  }{ n a_n^2  }\geq -\frac{1}{2V_\mathscr{W}}. \label{eq:conv1}
\end{align}

Next, we move on to $\mathbf{x}^n \in \Omega_{\text{bad}}^{(2)}$. In this case, $\widetilde{E}_\text{sp}$ in Eq.~\eqref{eq:wmod1} is not equal to zero for any finite $n$, we employ Eq.~\eqref{eq:spChc2} in Proposition~\ref{prop:spCh} below with $\delta_n = a_n + 2\xi/\sqrt{n}$ and $b_n = a_n$ to arrive at 
\begin{align}
\liminf_{n\to+\infty} \frac{ \log \widehat{\alpha}_{\exp\{-nR_n\}}\left(W_{\mathbf{x}^n}^{\otimes n}\|(P^\star\mathscr{W})^{\otimes n})\right)  }{ n a_n^2  } &\geq
- \lim_{n\to+\infty}  \frac{4\xi^2}{n \left(a_n + \frac{2\xi}{\sqrt{n}}\right)^2} \cdot \frac{1}{ 2 \widetilde{V}_{\mathscr{W}}   (1-\eta)  } \\
&= 0 \\
&\geq -\frac{1}{2 V_\mathscr{W}}, \label{eq:conv2}
\end{align}
where the equality follows since $\lim_{n\to+\infty} n a_n^2 = +\infty$.

In the last case of $\mathbf{x}^n \in \Omega_{\text{good}}$, we employ a tighter bound, Proposition~\ref{prop:strong}, to lower bound the right-hand side of Eq.~\eqref{eq:converse0}.  The proof is delayed to Appendix \ref{app:sharp}.

\begin{prop} [A Sharp Converse Bound] \label{prop:strong}
Consider a classical-quantum channel $\mathscr{W}:\mathcal{X}\to\mathcal{S(H)}$ and a state $\sigma\in\mathcal{S(H)}$. Suppose the sequence $\mathbf{x}^n\in\mathcal{X}^n$ satisfies 	\begin{align} \label{eq:cond_V2}
\nu \leq V\left( \mathscr{W} \| \sigma | P_{\mathbf{x}^n} \right) < +\infty
\end{align}
for some $\nu > 0$, and suppose the sequence of rates $(R_n)_{n\in\mathbb{N}}$ satisfies\footnote{Note that $D_0(\mathscr{W}\|\sigma|P) = D(\mathscr{W}\|\sigma|P)$ implies $W_x = \sigma $ for all $x\in\textsf{supp}(P)$ \cite[Collorary~4.1]{Tom16}. This further gives $V(\mathscr{W}\|\sigma|P) = 0$. However, the assumption in Eq.~\eqref{eq:cond_V2} ensures that $\liminf_{n\in\mathbb{N}} D(\mathscr{W}\|\sigma|P_{\mathbf{x}^n}) - D_0(\mathscr{W}\|\sigma|P_{\mathbf{x}^n}) > 0$. Hence, the intervals $[D_0(\mathscr{W}\|\sigma|P_{\mathbf{x}^n}),D(\mathscr{W}\|\sigma|P_{\mathbf{x}^n})]$ for all $\mathbf{x}^n$ satisfying Eq.~\eqref{eq:cond_V2} are not measure zero.} $ D_0(\mathscr{W}\|\sigma|P_{\mathbf{x}^n})  < R_n  < D(\mathscr{W}\|\sigma|P_{\mathbf{x}^n})$. Then, there exists an $N_0\in\mathbb{N}$ such that, for all $n\geq N_0$,
\begin{align}  \label{eq:SP3}
\widehat{\alpha}_{\exp\{-nR_n\}}(W_{\mathbf{x}^n}^{\otimes n}\| \sigma^{\otimes n}) 
\geq 
\frac{A}{s_n^\star\sqrt{n}}	\exp\left\{ 	-n  E_\textnormal{sp}^{(2)} \left(R_n - c_n ,P_{\mathbf{x}^n}, \sigma \right) \right\},	
\end{align}
where $c_n = \frac{K\log n}{n}$ and $A,K>0$ are finite constants independent of the sequence $\mathbf{x}^n$, and 
\begin{align}
s_n^\star := \argmax_{s\geq 0} \left\{ E_\textnormal{h}(s,P_{\mathbf{x}^n}, \sigma) - s R_n
\right\}.
\end{align}
\end{prop}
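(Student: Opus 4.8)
The plan is to establish a one-shot lower bound on $\widehat{\alpha}_{\exp\{-nR_n\}}(W_{\mathbf{x}^n}^{\otimes n}\|\sigma^{\otimes n})$ and then apply a strong-large-deviation (exact asymptotics) estimate to the tail probability that appears. First I would pass from the operator quantity to a purely classical one: since $W_{\mathbf{x}^n}^{\otimes n}$ and $\sigma^{\otimes n}$ need not commute, I would use the \emph{Nussbaum--Szko{\l}a distributions} (or, as in \cite{CHT16b}, the pinching of $W_{\mathbf{x}^n}^{\otimes n}$ by the spectral projections of $\sigma^{\otimes n}$ together with the pinching inequality), which reduces the hypothesis-testing quantity to a problem about i.i.d.\ (more precisely, independent but not identically distributed across the blocks of constant-composition type $P_{\mathbf{x}^n}$) real random variables $Z_1,\dots,Z_n$ with $Z_i$ the log-likelihood ratio associated with the letter $x_i$. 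Concretely, one gets a Neyman--Pearson type bound of the form $\widehat{\alpha}_{\exp\{-nR_n\}} \gtrsim \Pr\big[\sum_i Z_i \le n\gamma_n\big]\,\mathrm{e}^{-n\gamma_n}$ for an appropriately chosen threshold $\gamma_n$ close to $R_n$, up to polynomial pinching factors that are absorbed into $c_n = K\log n/n$.

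Next I would choose the threshold by tilting. Define the tilted distribution at parameter $s_n^\star$, the maximizer of $E_{\mathrm{h}}(s,P_{\mathbf{x}^n},\sigma)-sR_n$; by Proposition~\ref{prop:prop_h}\,\ref{prop_h-b} this maximizer is unique and, because $D_0(\mathscr{W}\|\sigma|P_{\mathbf{x}^n})<R_n<D(\mathscr{W}\|\sigma|P_{\mathbf{x}^n})$ together with \eqref{eq:R_inf} and Proposition~\ref{prop:prop_h}\,\ref{prop_h-c}--\ref{prop_h-d}, it lies in $(0,\infty)$ and satisfies the stationarity condition $\partial E_{\mathrm{h}}/\partial s|_{s_n^\star}=R_n$. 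Under the $s_n^\star$-tilt the mean of $\frac1n\sum_i Z_i$ equals $R_n$ (after the $c_n$ shift), so the event $\{\sum_i Z_i \le nR_n\}$ becomes a \emph{central} event of probability bounded below by a constant, not an exponentially small tail; the exponential cost $\mathrm{e}^{-nR_n}$ combined with the Radon--Nikodym factor produces exactly $\mathrm{e}^{-n[ E_{\mathrm{h}}(s_n^\star,P_{\mathbf{x}^n},\sigma)-s_n^\star R_n]} = \mathrm{e}^{-nE_{\mathrm{sp}}^{(2)}(R_n,P_{\mathbf{x}^n},\sigma)}$, matching \eqref{eq:Esp_2}. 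The remaining polynomial prefactor $\tfrac{A}{s_n^\star\sqrt n}$ is where the strong large deviation input enters: one needs a Berry--Esseen / Bahadur--Rao type statement giving that the tilted probability of landing just below its own mean is $\Theta(1/(s_n^\star\sqrt n))$, uniformly over the relevant family of $\mathbf{x}^n$. This is precisely the role of Chaganty and Sethuraman's concentration inequality cited in the abstract; the lower bound on the variance in \eqref{eq:cond_V2}, $\nu\le V(\mathscr{W}\|\sigma|P_{\mathbf{x}^n})$, is exactly what makes the Gaussian fluctuation nondegenerate and the $\Theta(1/\sqrt n)$ uniform, while finiteness of the variance (and of higher moments, which follow from finite dimension) controls the third-moment term.

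To make the uniformity explicit I would note that $Z_i$ takes values in a fixed bounded set determined by the finitely many operators $W_x$ and $\sigma$ on the finite-dimensional $\mathcal{H}$, so all moments of the tilted variables are bounded by constants depending only on $\mathscr{W},\sigma$, and the tilting parameter ranges over a compact subset of $(0,\infty)$ by the argument above (using that $R_n$ stays bounded away from both endpoints $D_0$ and $D$, which is ensured by \eqref{eq:cond_V2} and the footnote's observation that $\liminf_n[D(\mathscr{W}\|\sigma|P_{\mathbf{x}^n})-D_0(\mathscr{W}\|\sigma|P_{\mathbf{x}^n})]>0$); hence the constants $A,K$ can be taken independent of $\mathbf{x}^n$. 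Finally, the shift from $R_n$ to $R_n-c_n$ with $c_n=K\log n/n$ absorbs the $\mathrm{poly}(n)$ pinching losses from the first step, and Proposition~\ref{prop:prop_h}\,\ref{prop_h-a} guarantees $E_{\mathrm{sp}}^{(2)}$ is continuous enough that this replacement is legitimate. The main obstacle I anticipate is the second step's uniformity: getting the $\Theta(1/(s_n^\star\sqrt n))$ estimate to hold with constants independent of the (varying with $n$) composition $P_{\mathbf{x}^n}$ and the moving tilt $s_n^\star$ — this is where the strong large deviation machinery, rather than a soft Cram\'er-type bound, is genuinely needed, and care is required to check that all the hypotheses of Chaganty--Sethuraman (non-lattice or appropriately handled lattice structure, moment conditions, uniform nondegeneracy of the variance via $\nu>0$) are met uniformly.
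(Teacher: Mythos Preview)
Your plan is essentially the paper's: reduce to classical via the Nussbaum--Szko{\l}a distributions, then invoke Chaganty--Sethuraman's strong large deviation lower bound with the tilt $t_n^\star=s_n^\star/(1+s_n^\star)$, using the variance floor $\nu>0$ in \eqref{eq:cond_V2} to make the prefactor uniform in $\mathbf{x}^n$.

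Two points where your sketch departs from what the paper actually does. First, the quantum-to-classical step is \emph{Nagaoka's argument}, not pinching: for any test $Q_n$ one has
\[
\alpha(Q_n;\rho^n)+\delta\,\beta(Q_n;\sigma^n)\ \ge\ \tfrac12\bigl(\alpha(\mathscr{U};p^n)+\delta\,\beta(\mathscr{U};q^n)\bigr),
\]
with $\delta=\exp\{n\tilde R_n-n\phi_n(\tilde R_n)\}$ and $\mathscr{U}$ the classical Neyman--Pearson set. This forces you to lower-bound \emph{both} $\alpha(\mathscr{U};p^n)$ and $\beta(\mathscr{U};q^n)$, so Chaganty--Sethuraman is applied twice (once under $p^n$, once under $q^n$; the symmetry $\Lambda_{0}''(t)=\Lambda_{1}''(1-t)$ handles both with the same $V_{\min}(\nu)$). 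Your one-sided formula $\widehat\alpha\gtrsim\Pr[\sum Z_i\le n\gamma_n]\,\mathrm{e}^{-n\gamma_n}$ does not by itself give a converse: without the $\beta$-side lower bound you cannot subtract off $\delta\,\beta(Q_n;\sigma^n)\le\delta\,\mathrm{e}^{-nR_n}$ and keep a positive remainder.

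Second, the rate back-off $c_n=K\log n/n$ is \emph{not} absorbing pinching losses (there is no pinching in this proof). It comes from the $1/\sqrt{n}$ prefactor in the Chaganty--Sethuraman bound for $\beta(\mathscr{U};q^n)$: one chooses $\tilde R_n=R_n-\gamma_n$ with $\gamma_n=\tfrac{\log n}{2n}+O(1/n)$ precisely so that $\beta(\mathscr{U};q^n)\ge \tfrac{1-\eta}{\sqrt{2\pi nV_{\min}(\nu)}}\mathrm{e}^{-n\tilde R_n}=2\mathrm{e}^{-nR_n}$, which is what makes the subtraction above yield the final $\alpha$ lower bound. The monotonicity of $r\mapsto E_{\mathrm{sp}}^{(2)}(r,\cdot,\cdot)$ then lets you replace $\tilde R_n$ by $R_n-c_n$.
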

Before applying Proposition~\ref{prop:strong}, we verify that the condition, Eq.~\eqref{eq:cond_V2}, is satisfied.
Define
\begin{align}
v(\delta) := \min_{P\in\mathscr{P}(\mathcal{X})} \left\{ V\left(\mathscr{W}\| P^\star \mathscr{W}| P \right) : D(\mathscr{W}\|P^\star\mathscr{W}|P) \geq C_\mathscr{W}-\delta   \right\}.
\end{align}
Note that the map $\delta\mapsto v(\delta)$ is monotone decreasing and continuous at $0$ from above, i.e.~$\lim_{\delta \downarrow 0} v(\delta ) =  v(0) = V_\mathscr{W}$ \cite[Lemma 22]{TV15}.
For any $\kappa\in(0,1)$, we can choose a sufficiently small $\gamma>0$ independent of the sequence $\mathbf{x}^n$ such that $v(\gamma) \geq (1-\kappa) V_\mathscr{W} =: \nu >0$.  
Further, let $N_2\in\mathbb{N}$ such that $a_n \leq \gamma$ for all $n\geq N_2$.
Then, one finds, for all $ \mathbf{x}^n \in \Omega_{\text{good}}$ and $n\geq N_2$,
\begin{align}\label{eq_condV}
V\left( \mathscr{W}\| P^\star \mathscr{W} | P_{\mathbf{x}^n} \right) \geq v(\gamma) \geq  \nu >0.
\end{align}
Moreover, since $V_\mathscr{W}>0$ implies that $C_\mathscr{W} = \max_{P\in\mathscr{P}(\mathcal{X})} D(\mathscr{W}\|P^\star \mathscr{W} |P) > \max_{P\in\mathscr{P}} D_0(\mathscr{W}\|P^\star \mathscr{W}|P)$, one can choose a sufficiently large $n$, say $N_3\in\mathbb{N}$, such that $R_n > D_0(\mathscr{W}\|P^\star \mathscr{W}|P_{\mathbf{x}^n})$ for all $n\geq N_3$.
Now, we have for all $\mathbf{x}^n\in\Omega_\text{good}$ and $n\geq \max_\{N_2, N_3\}$ that
\begin{align}
\max_{P\in\mathscr{P}(\mathcal{X})} D_0(\mathscr{W}\|P^\star \mathscr{W}|P) &< R_n < D(\mathscr{W}\|P^\star\mathscr{W}| P_{\mathbf{x}^n}); \\
0&<\nu\leq V(\mathscr{W}\|P^\star\mathscr{W}|P_{\mathbf{x}^n}).
\end{align}
Together with Eqs.~\eqref{eq:converse0} and \eqref{eq_condV} and letting $\sigma = P^\star \mathscr{W}$, Proposition~\ref{prop:strong} yields, for all $\mathbf{x}^n\in\Omega_{\text{good}}$ and all sufficiently large $n$, say $n\geq N_4\in\mathbb{N}$,
\begin{align}
\frac{ \log \widehat{\alpha}_{\exp\{-nR_n\}}\left(W_{\mathbf{x}^n}^{\otimes n}\|(P^\star \mathscr{W})^{\otimes n} \right)  }{ n a_n^2  }
&\geq 
-\frac{E_\textnormal{sp}^{(2)} \left(R_n - c_n ,P_{\mathbf{x}^n}, P^\star \mathscr{W} \right)}{a_n^2} - \frac{ \log s_n^\star\sqrt{n}  }{n a_n^2} + \frac{\log A}{n a_n^2}. \label{eq:strong2}
\end{align}
Recall Eq.~\eqref{eq:spChc3} in Proposition~\ref{prop:spCh} below with $b_n = 0$ and $\delta_n = a_n + c_n$ that $\limsup_{n\to+\infty} \frac{s_n^\star}{a_n+c_n}\leq \frac{1}{V_\mathscr{W}}$. 
Hence, one can fix an arbitrary $\zeta>0$ and there exists an $N_5\in\mathbb{N}$ such that $\frac{s_n^\star\sqrt{n}}{(a_n+c_n)\sqrt{n}} \leq \frac{1}{V_\mathscr{W}}+\zeta$ for all $n\geq N_5$. This then leads to for all sufficiently large $n\geq \max\{N_2,N_3,N_4,N_5\}$ and all $\mathbf{x}^n\in\Omega_{\text{good}}$,
\begin{align}
\frac{ \log \widehat{\alpha}_{\exp\{-nR_n\}}\left(W_{\mathbf{x}^n}^{\otimes n}\|(P^\star \mathscr{W})^{\otimes n} \right)  }{ n a_n^2  }
&\geq 
-\frac{E_\textnormal{sp}^{(2)} \left(R_n - c_n ,P_{\mathbf{x}^n}, P^\star \mathscr{W} \right)}{a_n^2} - \frac{ \log (a_n+c_n)\sqrt{n}  }{n a_n^2} + \frac{\log \frac{A}{ \frac{1}{V_\mathscr{W}}+\zeta } }{n a_n^2}. \label{eq:strong3}
\end{align}
Taking $n\to+\infty$, the second and the third terms on the right-hand side of Eq.~\eqref{eq:strong3} vanish since $c_n = K \frac{\log n}{n} = o(a_n)$ and the assumption $\lim_{n\to+\infty} a_n \sqrt{n} = +\infty$.

Next, we apply Eq.~\eqref{eq:spChc1} in Proposition~\ref{prop:spCh} again
to bound the error-exponent function $E_\text{sp}^{(2)}$ in Eq.~\eqref{eq:strong2}: for all $\mathbf{x}^n \in \Omega^{(3)}$
\begin{align}
\liminf_{n \to +\infty} \frac{ \log \widehat{\alpha}_{\exp\{-nR_n\}}\left(W_{\mathbf{x}^n}^{\otimes n}\| (P^\star \mathscr{W})^{\otimes n} \right)  }{ n a_n^2  } &\geq -\limsup_{n \to +\infty} \frac{E_\textnormal{sp}^{(2)} \left(C_\mathscr{W} - \delta_n ,P_{\mathbf{x}^n}, P^\star \mathscr{W} \right)}{a_n^2}   \\
&= -\limsup_{n \to +\infty} \frac{E_\textnormal{sp}^{(2)} \left(C_\mathscr{W} - \delta_n ,P_{\mathbf{x}^n}, P^\star \mathscr{W} \right)}{\delta_n^2}   \\
&\geq -\frac{1}{2V_\mathscr{W}}. \label{eq:conv3}
\end{align}
Finally, combining Eqs.~\eqref{eq:converse0}, \eqref{eq:conv1}, \eqref{eq:conv2} and \eqref{eq:conv3}
concludes the desired Eq.~\eqref{eq:converse1}.

\begin{prop}[Error Exponent around Capacity] \label{prop:spCh}
	Let $(b_n)_{n\in\mathbb{N}}$ be a sequence of real numbers with $\lim_{n\to+\infty}  b_n = 0$ and let  $(\delta_n)_{n\in\mathbb{N}}$ be a sequence of positive numbers with $\lim_{n\to+\infty} \delta_n = 0$.
	Suppose the sequence of distributions $(P_n)_{n\in\mathbb{N}}$ satisfies
	\begin{align}
	C_{\mathscr{W}} - \delta_n < D(\mathscr{W}\| P^\star \mathscr{W}| P_n) \leq 	C_{\mathscr{W}} - b_n. \label{eq:spChc}
	\end{align}
	The following hold:
\begin{align}
\limsup_{n\to+\infty} \frac{ E_\textnormal{sp}^{(2)} \left(C_\mathscr{W} -\delta_n, P_n, P^\star \mathscr{W} \right) }{ \delta_n^2 } &\leq \limsup_{n\to+\infty} \frac{(\delta_n - b_n )^2}{2 V_\mathscr{W} \delta_n^2}; \label{eq:spChc1} \\
\limsup_{n\to+\infty} \frac{ \widetilde{E}_\textnormal{sp} \left(C_\mathscr{W} -\delta_n, P_n, P^\star \mathscr{W} \right) }{ \delta_n^2 } &\leq \limsup_{n\to+\infty} \frac{(\delta_n - b_n )^2}{2 \widetilde{V}_\mathscr{W} \delta_n^2}; \label{eq:spChc2} \\
\limsup_{n\to+\infty} \frac{s_n^\star}{\delta_n} &\leq \frac{1}{V_\mathscr{W}}, \label{eq:spChc3}
\end{align}
where 
\begin{align}
s_n^\star := \argmax_{s\geq 0} \left\{ E_\textnormal{h}(s,P_n, P^\star\mathscr{W}) - s \left( C_\mathscr{W} - \delta_n  \right)
\right\}.
\end{align}
\end{prop}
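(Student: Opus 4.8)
The plan is to exploit the variational (sup over $s\geq 0$) representations of $E^{(2)}_\textnormal{sp}$ and $\widetilde{E}_\textnormal{sp}$ in Eqs.~\eqref{eq:Esp_2} and \eqref{eq:sp_2Pa}, and then Taylor-expand the auxiliary functions $E_\textnormal{h}$ and $\widetilde{E}_\textnormal{h}$ around $s=0$ using Propositions~\ref{prop:prop_h} and \ref{prop:prop_b}. The key point is that the relevant derivative data at $s=0$ is controlled: by item (c) of each proposition, $\partial_s E_\textnormal{h}(0,P_n)=\partial_s\widetilde{E}_\textnormal{h}(0,P_n)=D(\mathscr{W}\|P^\star\mathscr{W}|P_n)$, which by hypothesis \eqref{eq:spChc} lies in $(C_\mathscr{W}-\delta_n,\,C_\mathscr{W}-b_n]$; and by item (e), $\partial_s^2 E_\textnormal{h}(0,P_n)=-V(\mathscr{W}\|P^\star\mathscr{W}|P_n)$, $\partial_s^2\widetilde{E}_\textnormal{h}(0,P_n)=-\widetilde{V}(\mathscr{W}\|P^\star\mathscr{W}|P_n)$. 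Since the $P_n$ are forced toward capacity-achieving distributions as $\delta_n\to0$ (their mutual information tends to $C_\mathscr{W}$), a continuity/compactness argument gives $\liminf_n V(\mathscr{W}\|P^\star\mathscr{W}|P_n)\geq V_\mathscr{W}$ and similarly $\liminf_n\widetilde{V}(\mathscr{W}\|P^\star\mathscr{W}|P_n)\geq\widetilde{V}_\mathscr{W}$; this uses a lemma of the $v(\delta)$-type already invoked in the converse proof (cf.~\cite[Lemma 22]{TV15}).

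For \eqref{eq:spChc1}: write $g_n(s):=E_\textnormal{h}(s,P_n,P^\star\mathscr{W})-s(C_\mathscr{W}-\delta_n)$. Using Taylor's theorem with the second-order remainder and items (a), (c), (e),
\[
g_n(s)=s\big(D(\mathscr{W}\|P^\star\mathscr{W}|P_n)-C_\mathscr{W}+\delta_n\big)-\tfrac{s^2}{2}V(\mathscr{W}\|P^\star\mathscr{W}|P_n)+\tfrac{s^3}{6}\partial_s^3 E_\textnormal{h}(\bar s,P_n)
\]
for some $\bar s\in[0,s]$. The linear coefficient is at most $\delta_n - b_n$ by \eqref{eq:spChc} and is nonnegative, so choosing the near-optimal $s=s^\dagger_n:=(\delta_n-b_n)/V_\mathscr{W}$ (or, cleaner, the exact maximiser $s_n^\star$) and bounding the cubic remainder uniformly via compactness of a neighbourhood of the capacity-achieving set (item (a)) gives $E^{(2)}_\textnormal{sp}(C_\mathscr{W}-\delta_n,P_n,P^\star\mathscr{W})=\sup_{s\geq 0}g_n(s)\leq \tfrac{(\delta_n-b_n)^2}{2V_\mathscr{W}}+O(\delta_n^3)$, which after dividing by $\delta_n^2$ and taking $\limsup$ yields \eqref{eq:spChc1}. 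The proof of \eqref{eq:spChc2} is verbatim with $E_\textnormal{h}\to\widetilde{E}_\textnormal{h}$, $V_\mathscr{W}\to\widetilde{V}_\mathscr{W}$, using Proposition~\ref{prop:prop_b} and the representation \eqref{eq:sp_2Pa}. For \eqref{eq:spChc3}: by concavity (item (b)) the maximiser $s_n^\star$ of $g_n$ satisfies the first-order condition $\partial_s E_\textnormal{h}(s_n^\star,P_n)=C_\mathscr{W}-\delta_n$ (or $s_n^\star=0$, which is impossible once $\delta_n$ is small since then the slope at $0$ exceeds $C_\mathscr{W}-\delta_n$ strictly by \eqref{eq:spChc}). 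Taylor-expanding $\partial_s E_\textnormal{h}(s,P_n)$ at $s=0$ to first order, $D(\mathscr{W}\|P^\star\mathscr{W}|P_n)-s_n^\star V(\mathscr{W}\|P^\star\mathscr{W}|P_n)+O((s_n^\star)^2)=C_\mathscr{W}-\delta_n$; combined with $D(\mathscr{W}\|P^\star\mathscr{W}|P_n)\leq C_\mathscr{W}-b_n\leq C_\mathscr{W}$ and $\liminf_n V(\mathscr{W}\|P^\star\mathscr{W}|P_n)\geq V_\mathscr{W}>0$, this forces $s_n^\star\to 0$ and then $\limsup_n s_n^\star/\delta_n\leq 1/V_\mathscr{W}$.

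The main obstacle is twofold. First, one must ensure the cubic (third-derivative) remainders are bounded \emph{uniformly in $n$}: this requires that, for large $n$, the distributions $P_n$ lie in a fixed compact set on which $\partial_s^3 E_\textnormal{h}$ and $\partial_s^3\widetilde{E}_\textnormal{h}$ are continuous — this follows from \eqref{eq:spChc} pinning $I(P_n,\mathscr{W})$ near $C_\mathscr{W}$ together with item (a), but the argument should be stated carefully (e.g.~restrict to $n$ large enough that $P_n$ is within a $\delta$-ball of the compact capacity-achieving set, on which all derivatives through order three are bounded). Second, transferring $\liminf_n V(\mathscr{W}\|P^\star\mathscr{W}|P_n)\geq V_\mathscr{W}$ — i.e.~that the variance does not drop below its minimal peripheral value in the limit — needs the lower semicontinuity/$v(\delta)$ argument; it is exactly the place where the hypothesis $b_n\to 0$, $\delta_n\to0$ is used and where a naive bound would instead produce $\widetilde{V}_\mathscr{W}$ in \eqref{eq:spChc1}, which is the gap flagged in Remark~\ref{remark1}.
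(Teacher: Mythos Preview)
Your proposal is correct and follows essentially the same route as the paper's proof: Taylor-expand $E_\textnormal{h}$ (resp.\ $\widetilde{E}_\textnormal{h}$) at $s=0$ using Propositions~\ref{prop:prop_h}--\ref{prop:prop_b}, use the first-order condition together with compactness and the fact that limit points of $(P_n)$ are capacity-achieving to show $s_n^\star\to 0$ with $\limsup s_n^\star/\delta_n\le 1/V_\mathscr{W}$, and then evaluate the expansion at the exact maximiser $s_n^\star$ and control the cubic remainder uniformly. The paper differs only in minor organisational details: it first restricts $s_n^\star$ to $[0,1]$ via a critical-rate argument (needed for the compactness step, which your sketch should make explicit), proves \eqref{eq:spChc3} before \eqref{eq:spChc1} since the latter relies on $s_n^\star=O(\delta_n)$ to make the cubic term $O(\delta_n^3)$, and bounds $|\partial_s^3 E_\textnormal{h}|$ over all of $[0,1]\times\mathscr{P}(\mathcal{X})$ rather than just a neighbourhood of the capacity-achieving set.
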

\noindent The proof of Proposition~\ref{prop:spCh} is provided in Appendix \ref{proof:spCh}.

\end{proof}

\section{Moderate Deviations for Quantum Hypothesis Testing} \label{sec:HT}
In this section, we show that a special case of channel coding yields the moderate deviation result for quantum hypothesis testing. The achievability part is given in Theorem \ref{theo:achievability_HT}. In Section \ref{ssec:achievability_HT}, we provide two proofs. The first proof follows the idea of asymptotic expansions in Theorem \ref{theo:achievability}; however, we will employ Audenaet \textit{et al.}'s quantum Hoeffding bound \cite{ANS+08}, instead of Hayashi's inequality \cite{Hay07}. The second proof relies on a martingale inequality \cite{Sas12}. The converse part and its proof are provided in Theorem \ref{theo:converse_HT} and Section \ref{ssec:converse_HT}, respectively.

\begin{theo}[Achievability] \label{theo:achievability_HT}
Let $\rho,\sigma \in\mathcal{S(H)}$ be the density operators with finite relative variance $V:=V\left( \rho\|\sigma \right)>0$. For any sequence of real numbers $(a_n)_{n\in\mathbb{N}}$ satisfying Eq.~\eqref{eq:cond},	there exists a sequence $r_n:= D\left(\rho\|\sigma\right) - a_n$ such that
\begin{align}
\limsup_{n\to+\infty} \frac{1}{na_n^2} \log \widehat{\alpha}_{\exp\{-n r_n \}}\left( \rho^{\otimes n} \| \sigma^{\otimes n} \right) \leq -\frac{1}{2V}.
\end{align}
\end{theo}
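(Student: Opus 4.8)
The plan is to derive this achievability statement for hypothesis testing in essentially the same way as Theorem~\ref{theo:achievability} for channels, but replacing Hayashi's bound \eqref{eq:Hayashi} with a suitable upper bound on $\widehat\alpha$ coming from the quantum Hoeffding bound of Audenaert \textit{et al.}~\cite{ANS+08}. Concretely, for $0\le s\le 1$ one has a bound of the shape
\begin{align} \label{eq:hoeffding-upper}
\widehat{\alpha}_{\exp\{-n r_n\}}\left( \rho^{\otimes n} \| \sigma^{\otimes n} \right)
\;\le\; \exp\left\{ -n \left[ \sup_{0\le s\le 1} \left( \psi(s) - s r_n \right) \right] \right\},
\end{align}
where $\psi(s) := s D_{1-s}(\rho\|\sigma) = -\log\Tr[\rho^{1-s}\sigma^{s}]$ (the relevant one-parameter function, analogous to $\widetilde E_0(s,P)$ in the channel case). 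The first step is therefore to record the precise form of the Hoeffding-type achievability bound in terms of $\psi$ and to note that $\psi$ has exactly the analytic properties we need: $\psi$ is continuous with continuous derivatives up to third order on a neighbourhood of $s=0$, is concave in $s$, satisfies $\psi(0)=0$, $\psi'(0)=D(\rho\|\sigma)$, and $\psi''(0)=-V(\rho\|\sigma)$. These are the scalar analogues of Proposition~\ref{prop:prop2}(a)--(e), and they are standard facts about Petz's R\'enyi divergence (differentiability of $\alpha\mapsto\Tr[\rho^\alpha\sigma^{1-\alpha}]$, with first and second derivatives at $\alpha=1$ giving $D$ and $V$).

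The second step is the Taylor expansion. Since $\psi$ is $C^3$ near $0$, for $s$ in a neighbourhood of $0$ we can write
\begin{align} \label{eq:psi-taylor}
\psi(s) = s D(\rho\|\sigma) - \frac{s^2}{2} V(\rho\|\sigma) + \frac{s^3}{6} \psi'''(\bar s)
\end{align}
for some $\bar s\in[0,s]$, and $\psi'''$ is bounded on a fixed compact neighbourhood of $0$, say by a constant $\Upsilon<\infty$. Now choose $s_n := a_n / V$; by Eq.~\eqref{eq:cond} we have $a_n\to 0$, so $s_n\in[0,1]$ (indeed $s_n$ stays in the compact neighbourhood) for all sufficiently large $n$. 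Plugging $r_n = D(\rho\|\sigma) - a_n$ and $s=s_n$ into \eqref{eq:psi-taylor} gives
\begin{align} \label{eq:psi-lower}
\psi(s_n) - s_n r_n = \frac{a_n}{V}\, a_n - \frac{a_n^2}{2V} + \frac{a_n^3}{6 V^3} \psi'''(\bar s_n)
= \frac{a_n^2}{2V} + \frac{a_n^3}{6V^3}\psi'''(\bar s_n)
\ge \frac{a_n^2}{2V} - \frac{\Upsilon\, a_n^3}{6V^3}.
\end{align}
Since the supremum over $0\le s\le 1$ in \eqref{eq:hoeffding-upper} dominates the value at $s=s_n$, combining \eqref{eq:hoeffding-upper} and \eqref{eq:psi-lower} yields
\begin{align}
\frac{1}{n a_n^2}\log \widehat{\alpha}_{\exp\{-n r_n\}}\left( \rho^{\otimes n} \| \sigma^{\otimes n} \right)
\;\le\; -\frac{1}{2V}\left( 1 - \frac{\Upsilon\, a_n}{3 V^2} \right),
\end{align}
and letting $n\to\infty$ (using $a_n\to 0$) gives the claimed $\limsup \le -1/(2V)$. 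This is the whole argument; it is a direct transcription of the proof of Theorem~\ref{theo:achievability} with the channel auxiliary function replaced by its scalar counterpart, and with the only use of condition~\eqref{eq:cond}(ii) being implicit (we do not even need $a_n\sqrt n\to\infty$ for this half, though it will be needed to make the resulting rate a genuine ``moderate'' rate).

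The one point that needs genuine care — and which I expect to be the main (minor) obstacle — is pinning down the exact statement of the Hoeffding-type upper bound \eqref{eq:hoeffding-upper} from \cite{ANS+08} in a form valid uniformly for the relevant range of $s$, and making sure the optimization is over $s\in[0,1]$ rather than $s\in[0,\infty)$ (which matters only in that $s_n\le 1$ must be checked, and that is immediate from $a_n\to 0$). One should also double-check the sign conventions: with $\psi(s)=-\log\Tr[\rho^{1-s}\sigma^s]$ we need $\psi'(0)=D(\rho\|\sigma)>0$ and $\psi''(0)=-V(\rho\|\sigma)<0$, so that $r_n<D(\rho\|\sigma)$ puts us on the increasing part of $s\mapsto \psi(s)-sr_n$ near $0$ and the interior optimizer $s_n$ is positive — this is exactly what \eqref{eq:psi-lower} exploits. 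The paper also promises a second proof via the martingale inequality of Sason~\cite{Sas12}; that proof would instead decompose $\log\frac{d\rho^{\otimes n}}{d\sigma^{\otimes n}}$ (appropriately interpreted via the Nussbaum--Szko{\l}a distributions, or the common eigenbasis after a suitable measurement) as a sum of i.i.d.\ terms with mean $D(\rho\|\sigma)$ and variance $V(\rho\|\sigma)$, and then invoke a moderate-deviation bound for martingales/i.i.d.\ sums; but for the achievability statement as stated the Hoeffding-bound route above is the cleanest and I would present that one first.
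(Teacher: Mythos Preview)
Your proposal is correct and follows the paper's first proof essentially verbatim: invoke the quantum Hoeffding achievability bound of \cite{ANS+08}, Taylor-expand the resulting one-parameter exponent function at $s=0$ (using that the first derivative is $D(\rho\|\sigma)$ and the second is $-V(\rho\|\sigma)$), plug in $s_n=a_n/V$, and take the limit. The only cosmetic discrepancy is the parametrization of the Hoeffding exponent: the paper substitutes $\alpha=\tfrac{1}{1+s}$ in Lemma~\ref{lemm:Audenaert} to obtain $E_{\mathrm h}(s)=sD_{1/(1+s)}(\rho\|\sigma)$ (cf.\ Proposition~\ref{prop:prop_h}) rather than your $\psi(s)=sD_{1-s}(\rho\|\sigma)$, but since both functions satisfy $E_{\mathrm h}'(0)=\psi'(0)=D$ and $E_{\mathrm h}''(0)=\psi''(0)=-V$, the expansion and the conclusion are identical.
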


\begin{theo}[Converse] \label{theo:converse_HT}
Let $\rho,\sigma \in\mathcal{S(H)}$ be the density operators with non-zero and finite relative variance $V:=V\left( \rho\|\sigma \right)>0$. For any sequence of real numbers $\{a_n\}_{n\in\mathbb{N}}$ satisfying	Eq.~\eqref{eq:cond},	there exists a sequence $r_n:= D\left(\rho\|\sigma\right) - a_n$ such that
\begin{align}
\liminf_{n\to+\infty} \frac{1}{na_n^2} \log \widehat{\alpha}_{\exp\{-n r_n \}}\left( \rho^{\otimes n} \| \sigma^{\otimes n} \right) \geq -\frac{1}{2V}.
\end{align}
\end{theo}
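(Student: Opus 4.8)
The plan is to deduce Theorem~\ref{theo:converse_HT} directly from the channel-coding converse, Theorem~\ref{theo:converse}, by realizing binary hypothesis testing as a degenerate classical-quantum channel. Specifically, I would fix the input alphabet $\mathcal{X} = \{0,1\}$ and define a c-q channel $\mathscr{W}$ by $W_0 := \rho$ and $W_1 := \sigma$. Actually, a cleaner route is to avoid channel coding altogether and instead invoke the \emph{sharp converse bound} of Proposition~\ref{prop:strong} directly, together with the error-exponent estimate implicit in Proposition~\ref{prop:spCh} specialized to a single state. Let me sketch the latter, which is self-contained: take $P_{\mathbf{x}^n}$ to be the trivial ``distribution'' supported at the single letter whose channel output is $\rho$, so that $D(\mathscr{W}\|\sigma|P_{\mathbf{x}^n}) = D(\rho\|\sigma)$ and $V(\mathscr{W}\|\sigma|P_{\mathbf{x}^n}) = V(\rho\|\sigma) = V > 0$ for every $n$.

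First I would verify the hypotheses of Proposition~\ref{prop:strong}: since $V(\rho\|\sigma) = V > 0$, we have $D_0(\rho\|\sigma) < D(\rho\|\sigma)$ (using the footnote's observation that equality would force $\rho = \sigma$ and hence $V = 0$), and because $a_n \to 0$ we have $r_n = D(\rho\|\sigma) - a_n \in (D_0(\rho\|\sigma), D(\rho\|\sigma))$ for all sufficiently large $n$; the variance condition $\nu \le V(\mathscr{W}\|\sigma|P_{\mathbf{x}^n}) < +\infty$ holds trivially with $\nu = V$. Then Proposition~\ref{prop:strong} gives, for all large $n$,
\begin{align}
\widehat{\alpha}_{\exp\{-n r_n\}}\left( \rho^{\otimes n} \| \sigma^{\otimes n} \right) \geq \frac{A}{s_n^\star \sqrt{n}} \exp\left\{ -n\, E_\textnormal{sp}^{(2)}\left( r_n - c_n, P_{\mathbf{x}^n}, \sigma \right) \right\},
\end{align}
where $c_n = K \log n / n$ and $s_n^\star = \argmax_{s \ge 0}\{ E_\textnormal{h}(s, P_{\mathbf{x}^n}, \sigma) - s r_n\}$. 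Taking $\frac{1}{n a_n^2}\log(\cdot)$ of both sides, the prefactor contributes $\frac{1}{n a_n^2}\log\frac{A}{s_n^\star\sqrt{n}}$, which I must show is $o(1)$.

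Next I would control the two remaining pieces using the scalar analogues of Proposition~\ref{prop:spCh}. For the exponent term, apply the single-state version of Eq.~\eqref{eq:spChc1} with $\delta_n = a_n + c_n$ and $b_n = 0$ (note $c_n = o(a_n)$, so $\delta_n \sim a_n$), which yields $\limsup_{n} E_\textnormal{sp}^{(2)}(r_n - c_n, P_{\mathbf{x}^n}, \sigma)/\delta_n^2 \le 1/(2V)$, and hence $\limsup_n E_\textnormal{sp}^{(2)}(r_n - c_n, \cdot)/a_n^2 \le 1/(2V)$ after accounting for $\delta_n^2/a_n^2 \to 1$. For the prefactor, Eq.~\eqref{eq:spChc3} gives $\limsup_n s_n^\star / \delta_n \le 1/V$, so $s_n^\star \sqrt{n} = O(\delta_n\sqrt{n}) = O(a_n\sqrt{n})$, and therefore $\frac{1}{n a_n^2}\log(s_n^\star\sqrt{n}) \to 0$ because $n a_n^2 \to +\infty$ dominates the logarithm; likewise $\frac{\log A}{n a_n^2} \to 0$. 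Assembling these estimates gives
\begin{align}
\liminf_{n\to+\infty} \frac{1}{n a_n^2}\log \widehat{\alpha}_{\exp\{-n r_n\}}\left( \rho^{\otimes n}\|\sigma^{\otimes n}\right) \geq -\frac{1}{2V},
\end{align}
which is exactly the claim.

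The main obstacle I anticipate is a purely bookkeeping one: Propositions~\ref{prop:strong} and~\ref{prop:spCh} are stated for channels with a prior $P$ and for sequences of empirical types $P_{\mathbf{x}^n}$, so I must make sure the ``trivial channel'' specialization genuinely falls under their scope — in particular that the constants $A, K$ in Proposition~\ref{prop:strong} are uniform (they are, being independent of $\mathbf{x}^n$) and that the continuity/derivative hypotheses behind $E_\textnormal{h}$ and $E_\textnormal{sp}^{(2)}$ (Proposition~\ref{prop:prop_h}) hold for the single-point support, which they do since $W_x \ll \sigma$ is the only requirement and here $\rho \ll \sigma$ may be assumed (otherwise $D(\rho\|\sigma) = +\infty$ and the statement is vacuous or trivial). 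A secondary subtlety is the interchange $D(\mathscr{W}\|P^\star\mathscr{W}|P) \leftrightarrow D(\rho\|\sigma)$: in the hypothesis-testing setting the reference state $\sigma$ is given a priori rather than being $P^\star\mathscr{W}$, but Proposition~\ref{prop:strong} is already phrased for an arbitrary $\sigma \in \mathcal{S(H)}$, so no capacity-achieving-distribution machinery is needed here. Alternatively, if one prefers the ``via channel coding'' route, the obstacle shifts to checking that the $2$-letter channel $\{W_0 = \rho, W_1 = \sigma\}$ has $V_\mathscr{W} = V(\rho\|\sigma)$ at its (possibly non-unique) capacity-achieving input, which requires a short lemma relating $C_\mathscr{W}$ and $V_\mathscr{W}$ to $D(\rho\|\sigma)$ and $V(\rho\|\sigma)$; I would prefer the direct route above to sidestep that.
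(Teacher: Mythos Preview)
Your proposal is correct and follows essentially the same route as the paper: both specialize to the trivial one-letter channel $W_x=\rho$ with reference state $\sigma$, apply Proposition~\ref{prop:strong} directly to obtain the sharp lower bound with exponent $E_\textnormal{sp}^{(2)}(r_n-c_n,\cdot,\sigma)$ and prefactor $A/(s_n^\star\sqrt{n})$, and then invoke Proposition~\ref{prop:spCh} with $\delta_n=a_n+c_n$, $b_n=0$ (and the substitution $P^\star\mathscr{W}\leftrightarrow\sigma$) to control both the exponent term and $s_n^\star/\delta_n$. Your bookkeeping remarks about the single-point support and the role of $\sigma$ versus $P^\star\mathscr{W}$ match exactly the specialization the paper performs.
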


\subsection{Proof of Achievability: Theorem \ref{theo:achievability_HT}} \label{ssec:achievability_HT}
In this section, we present two proofs for Theorem \ref{theo:achievability_HT}.
The first one relies on the quantum Hoeffding bound \cite{ANS+08} and the Taylor's expansion of the exponent function $E_\text{h}$.

\begin{proof}[The first proof of Theorem \ref{theo:achievability_HT}]

Recall the following achievability of the quantum Hoeffding bound:
\begin{lemm}[Theorem 5, Section 5.5 of \cite{ANS+08}] \label{lemm:Audenaert}
Let $\rho,\sigma \in \mathcal{S(H)}$. For any $r\geq 0$ and any $n\in\mathbb{N}$, we have	
\begin{align} \label{eq:Audenaert}
\widehat{\alpha}_{\exp\{ -nr \}} \left( \rho^{\otimes n} \| \sigma^{\otimes n} \right)
\leq \exp\left\{  -n \left[ \sup_{ 0<\alpha\leq 1 } \left\{ \frac{\alpha-1}{\alpha} \left( r - D_\alpha\left( \rho\|\sigma \right) \right) \right\} \right]  \right\}.
\end{align}
\end{lemm}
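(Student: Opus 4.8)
The plan is to reproduce the argument of Audenaert \emph{et al.}: build the bound from a single Neyman--Pearson/Holevo--Helstrom test together with one non-commutative trace inequality, letting the i.i.d.\ structure enter only through the multiplicativity $\Tr\big[(\rho^{\otimes n})^{\alpha}(\sigma^{\otimes n})^{1-\alpha}\big]=\big(\Tr[\rho^{\alpha}\sigma^{1-\alpha}]\big)^{n}$. Fix $n\in\mathbb{N}$ and $\alpha\in(0,1)$ (the value $\alpha=1$ only contributes the trivial bound $\widehat{\alpha}\le 1$ and is appended at the end). For a threshold $\gamma>0$ to be chosen, let $T_n:=\{\rho^{\otimes n}-\gamma\,\sigma^{\otimes n}\ge 0\}$ be the spectral projector onto the non-negative part of $\rho^{\otimes n}-\gamma\,\sigma^{\otimes n}$; this is a valid test, and $\widehat{\alpha}_{\exp\{-nr\}}(\rho^{\otimes n}\|\sigma^{\otimes n})\le \alpha(T_n;\rho^{\otimes n})$ provided $\beta(T_n;\sigma^{\otimes n})\le\exp\{-nr\}$, which we verify below.

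The key input is the trace inequality underlying the quantum Chernoff bound \cite{ANS+08}: for positive semidefinite $A,B$ and $\beta\in[0,1]$, with $P=\{A-B\ge 0\}$,
\[
\Tr\big[(\mathds{1}-P)A\big]+\Tr\big[PB\big]\;=\;\Tr[A]-\Tr\big[(A-B)_+\big]\;\le\;\Tr\big[A^{\beta}B^{1-\beta}\big].
\]
Since both terms on the left are non-negative, each is individually bounded by $\Tr[A^{\beta}B^{1-\beta}]$. Applying this with $A=\rho^{\otimes n}$, $B=\gamma\,\sigma^{\otimes n}$, $\beta=\alpha$, and using $\Tr[(\rho^{\otimes n})^{\alpha}(\sigma^{\otimes n})^{1-\alpha}]=\mathrm{e}^{n(\alpha-1)D_\alpha(\rho\|\sigma)}$, one gets
\[
\alpha(T_n;\rho^{\otimes n})\le \gamma^{\,1-\alpha}\,\mathrm{e}^{n(\alpha-1)D_\alpha(\rho\|\sigma)},\qquad
\beta(T_n;\sigma^{\otimes n})\le \gamma^{-\alpha}\,\mathrm{e}^{n(\alpha-1)D_\alpha(\rho\|\sigma)}.
\]
Now choose $\gamma=\gamma_n(\alpha)$ so that the right-hand side of the $\beta$-bound equals $\mathrm{e}^{-nr}$, i.e.\ $\gamma_n(\alpha)=\exp\{\tfrac{n}{\alpha}(r+(\alpha-1)D_\alpha(\rho\|\sigma))\}$; then $T_n$ is feasible for $\widehat{\alpha}_{\exp\{-nr\}}$. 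Substituting $\gamma_n(\alpha)$ into the $\alpha$-bound and simplifying the exponent (the two $D_\alpha$-contributions combine to $\tfrac{\alpha-1}{\alpha}D_\alpha$) yields
\[
\widehat{\alpha}_{\exp\{-nr\}}(\rho^{\otimes n}\|\sigma^{\otimes n})\;\le\;\exp\Big\{-n\,\tfrac{\alpha-1}{\alpha}\big(r-D_\alpha(\rho\|\sigma)\big)\Big\}.
\]
Since this holds for every $\alpha\in(0,1)$ and trivially for $\alpha=1$ (where the right-hand side is $\ge 1$), taking the infimum of the right-hand side over $\alpha\in(0,1]$ --- equivalently the supremum in the exponent --- is exactly \eqref{eq:Audenaert}.

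The only non-routine ingredient is the trace inequality above in the genuinely non-commutative case; everything else is Chernoff/Markov-type bounding and a one-dimensional optimization. In a self-contained write-up I would either cite it directly (Theorem~5 and its supporting lemmas in \cite{ANS+08}, equivalently the quantum Chernoff bound of Audenaert \emph{et al.}), or reprove it from the operator concavity of $t\mapsto t^{\beta}$ on $[0,1]$ together with a pinching/$2\times2$-block reduction to the commuting situation. Since the paper uses Lemma~\ref{lemm:Audenaert} only as a black box, quoting \cite{ANS+08} suffices here.
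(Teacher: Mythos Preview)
The paper does not supply its own proof of Lemma~\ref{lemm:Audenaert}; it is stated purely as a citation of \cite[Theorem~5, Section~5.5]{ANS+08} and used as a black box in the first proof of Theorem~\ref{theo:achievability_HT}. Your argument is correct and is precisely the Audenaert \emph{et al.} derivation: the Neyman--Pearson projector test combined with the quantum Chernoff trace inequality $\Tr[(\mathds{1}-P)A]+\Tr[PB]\le\Tr[A^{\beta}B^{1-\beta}]$, followed by the choice of threshold $\gamma_n(\alpha)$ that saturates the type-II constraint, and finally optimization over $\alpha$. The algebra in the exponent checks out, and the handling of the endpoint $\alpha=1$ is fine. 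So there is nothing to compare against in the paper itself; you have simply filled in what the paper outsources to \cite{ANS+08}.
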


Since $ D(\rho\|\sigma ) > 0$ (due to Eq.~\eqref{eq:positive_D}), we have
\begin{align}
r_n:=D(\rho\|\sigma) - a_n >0
\end{align}
for all sufficiently large $n$. Choose such $n$ onwards, then Eq.~\eqref{eq:Audenaert} implies that:
\begin{align}
\frac{1}{n a_n^2} \log \widehat{\alpha}_{\exp\{ -nr_n \}}\left( \rho^{\otimes n} \| \sigma^{\otimes n } \right) 
&\leq - \frac{1}{a_n^2} \sup_{ 0<\alpha\leq 1 } \left\{ \frac{\alpha-1}{\alpha} \left( r_n - D_\alpha\left( \rho\|\sigma \right) \right) \right\} \\
&= - \frac{1}{a_n^2} \sup_{ s\geq 0 } \left\{ E_\text{h}(s) - s r_n  \right\}, \label{eq:H4}
\end{align}
where we substitute $s = \frac{1-\alpha}{\alpha}$ and let 
\begin{align} 
E_\text{h}(s) :=  
s  D_{\frac{1}{1+s}} \left( \rho\|\sigma \right).
\end{align}

Taylor's theorem followed by simple calculation yields
\begin{align} \label{eq:H1}
E_\text{h}(s) = s D(\rho\|\sigma) - \frac{s^2}{2} V + \left.\frac{s^3}{6} \frac{\partial^3 E_\text{h}(s)}{\partial s^3}\right|_{s = \bar{s}}
\end{align}
for some $\bar{s}\in[0,s]$ and all $s\geq 0$. The above equation is also a simple consequence of items \ref{prop_h-c} and \ref{prop_h-e} in Proposition \ref{prop:prop_h}.
Now let $s_n = a_n/V$, for all $n\in\mathbb{N}$.
Then for all sufficiently large $n$ and for some $\bar{s}_n \in [0, s_n]$,
Eq.~\eqref{eq:H1} yields 
\begin{align}
\sup_{s\geq 0}\left\{ E_\text{h}(s) - sr_n  \right\} &\geq  E_h(s_n) - s_n r_n \\
&= \frac{a_n}{V}\left( D(\rho\|\sigma) - r_n \right) - \frac{a_n^2}{2V} + \left.\frac{a_n^3}{6V^3} \frac{\partial^3 E_\text{h}(s)}{\partial s^3}\right|_{s = \bar{s}_n} \\
&= \frac{a_n^2}{2V} + \left.\frac{a_n^3}{6V^3} \frac{\partial^3 E_\text{h}(s)}{\partial s^3}\right|_{s = \bar{s}_n}, \label{eq:H2}
\end{align}
where we substitute $r_n = D(\rho\|\sigma ) - a_n$ in Eq.~\eqref{eq:H2}.

Define
\begin{align} \label{eq:H5}
\Upsilon := \max_{s\in[0,1]} \left| \frac{\partial^3 E_\text{h}(s)}{\partial s^3}\right|,
\end{align}
which is finite.
Therefore, Eq.~\eqref{eq:H2} leads to
\begin{align}
\sup_{s\geq 0}\left\{ E_\text{h}(s) - sr_n  \right\} 
&\geq  \frac{a_n^2}{2V} + \left.\frac{a_n^3}{6V^3} \frac{\partial^3 E_\text{h}(s)}{\partial s^3}\right|_{s = \bar{s}_n} \\
&\geq \frac{a_n^2}{2V} - \frac{a_n^3}{6V^3} \Upsilon \label{eq:H3}
\end{align}
for all sufficiently large $n$. Substituting Eq.~\eqref{eq:H3} into Eq.~\eqref{eq:H4} yields
\begin{align}
\frac{1}{n a_n^2} \log \widehat{\alpha}_{\exp\left\{-n r_n\right\}}\left(\rho\|\sigma\right) \leq  - \frac{1}{2V} \left( 1 - \Upsilon \frac{a_n}{3 V^2} \right),
\end{align}
which implies the desired achievability part:
\begin{align}
\limsup_{n\to+\infty} \frac{1}{n a_n^2} \log \widehat{\alpha}_{\exp\left\{-n r_n\right\}}\left(\rho\|\sigma\right) \leq -\frac{1}{2V}.
\end{align}

\end{proof}

In the following, we give an alternative proof of Theorem \ref{theo:achievability_HT} by employing a martingale inequality \cite{Sas12}.

\begin{proof}[The second proof of Theorem \ref{theo:achievability_HT}]
We follow the idea in Ref.~\cite{Li14} to write the eigendecomposition of $\rho^{\otimes n}$ and $\sigma^{\otimes n}$, respectively, as
\begin{align} \label{eq:eigen}
\rho^{\otimes n} = \sum_{x^n} \lambda^n(x^n) \proj{f_{x^n}^n}; \quad \sigma^{\otimes n} = \sum_{y^n} \gamma^n(y^n) \proj{g_{y^n}^n}, 
\end{align}
where $x^n := x_1x_2\ldots x_n$; $y^n:=y_1y_2\ldots y_n$; $\lambda^n(x^n) = \prod_{i=1}^n \lambda(x_i)$; $\mu^n (y^n) = \prod_{i=1}^n \mu(y_i)$; $|f_{x^n}^n\rangle = |f_{x_1}\rangle \otimes |f_{x_2}\rangle \otimes \cdots \otimes |f_{x_n}\rangle$; and $|g_{y^n}^n\rangle = |g_{y_1}\rangle \otimes |g_{y_2}\rangle \otimes \cdots \otimes |g_{y_n}\rangle$. 
Further, we define a pair of random variables $(X,Y)$ via the Nussbaum-Szko{\l}a mapping \cite{NS09}, i.e.~$P_{X,Y}(x,y) = \lambda(x)|\gamma_{xy}|^2$, where $\gamma_{xy} := \langle g_y | f_x \rangle \in\mathbb{C}$. It is well-known that
\begin{align}
D(\rho\|\sigma) &= 
D(\lambda(X)\|\mu(Y)) = 
\mathbb{E}_{(X,Y)}\left[ \log \frac{\lambda(X)}{\mu(Y)}\right], \label{eq:NSD}\\
V(\rho\|\sigma) &= 
V(\lambda(X)\|\mu(Y)) =
\mathrm{Var}_{(X,Y)}\left[ \log \frac{\lambda(X)}{\mu(Y)}\right]. \label{eq:NSV}
\end{align}
Let $T_n := \exp\left\{ n r_n \right\}$.
For every sequence $x^n$, we define a sub-normalized vector:	 
\begin{align}
|\xi_{x_n}^n\rangle := \sum_{y^n: \lambda^n(x^n)/\mu^n(y^n) \geq T_n} \gamma_{x^n y^n}^n |g_{y^n}^n\rangle
\end{align}
with $\gamma_{x^n y^n}^n = \prod_{i=1}^n \gamma_{x_iy_i}$ and $\sum_{x} |\gamma_{xy}|^2 = \sum_{y} |\gamma_{xy}|^2 = 1$.
Applying the Gram-Schmidt orthonormalization process on $\left\{ |\xi_{x^n}^n\rangle \right\}_{x^n}$ to obtain an orthonormal vectors
\begin{align} \label{eq:HT1}
|\hat{\xi}_{x^n}^n\rangle = \sum_{y^n: \lambda^n(x^n)/\mu^n(y^n) \geq T_n} t_{x^n y^n}^n |g_{y^n}^n\rangle
\end{align}
for some $t_{x^n y^n}^n\in\mathbb{C}$ and 
\begin{align} \label{eq:HT2}
\sum_{y^n: \lambda^n(x^n)/\mu^n(y^n) \geq T^n} |t_{x^n y^n}^n|^2 = 1.
\end{align}
We define a test of the hypotheses by
\begin{align}
Q_n := \sum_{x^n} |\hat{\xi}_{x^n}^n\rangle \langle \hat{\xi}_{x^n}^n|.
\end{align}
Then, it suffices to show $\beta\left(Q_n; \sigma^{\otimes n}\right) \leq \exp\{-n r_n\}$ and 
\begin{align}
\lim_{n\to+\infty}\frac{1}{na_n^2} \log 	\alpha\left( Q_n; \rho^{\otimes n} \right) \leq - \frac{1}{2V}
\end{align}
to complete the proof. The former follows Eqs.~\eqref{eq:eigen}, \eqref{eq:HT1}, and \eqref{eq:HT2}:
\begin{align}
\beta\left(Q_n; \sigma^{\otimes n}\right) 
&= \sum_{x^n} \Tr\left[ \sigma^{\otimes n}  |\hat{\xi}_{x^n}^n\rangle \langle \hat{\xi}_{x^n}^n| \right] \notag\\
&= \sum_{x^n} \sum_{y^n: \lambda^n(x^n)/\mu^n(y^n) \geq T^n} |t_{x^n y^n}^n|^2 \mu^n(y^n) \notag\\
&\leq \sum_{x^n} \frac{\lambda^n(x^n)}{T_n}=  \frac{1}{T_n} = \exp\{-n r_n\}.
\end{align}	
Likewise, since $\frac{|\xi_{x^n}\rangle\langle \xi_{x^n}|}{|\langle \xi_{x^n}|\xi_{x^n}\rangle|^2  }\leq Q_n$, one can verify that
\begin{align}
\alpha\left(Q_n; \rho^{\otimes n}\right) &\leq 
1-\sum_{x^n} \lambda^n(x^n) \langle \xi_{x^n}^n| \xi_{x^n}^n \rangle\\
&= \Pr\left\{ \frac{\lambda^n(X^n)}{\mu^n(Y^n)} < T_n \right\} \\
&= \Pr\left\{ \log \frac{\lambda^n(X^n)}{\mu^n(Y^n)} < n r_n \right\}. \label{eq:HT3}
\end{align}

Next, we adopt Sason's approach \cite{Sas12} to construct a martingale sequence $\left\{U_k, \mathfrak{M}_k  \right\}_{k=0}^n$, where	$\mathfrak{M}_k$ denotes the sigma-algebra formed by $(X_l,Y_l)_{l=1}^k$;	$\mathfrak{M}_0\subseteq \mathfrak{M}_1 \subseteq \ldots \subseteq \mathfrak{M}_n$ is the filtration; and 
\begin{align}
U_k &:= \mathbb{E}_{(X^n,Y^n)} \left[ \left.\log \frac{\lambda^n(X^n)}{\mu^n(Y^n)}\right| \mathfrak{M}_k \right]\\
&= \sum_{i=1}^k \log \frac{\lambda(X_i)}{\mu(Y_i)} + \sum_{i=k+1}^n \mathbb{E}_{X^n}\left[ \log \frac{\lambda(X_i)}{\mu(Y_i)}\right] \\
&= \sum_{i=1}^k \log \frac{\lambda(X_i)}{\mu(Y_i)} + (n-k) D(\lambda(X)\|\mu(Y)).
\end{align}
In particular, we have
\begin{align}
U_0 = n D\left(\lambda(X)\|\mu(Y)\right);\,
U_n = \log \frac{\lambda(X^n)}{\mu(Y^n)} = \sum_{i=1}^n\log \frac{\lambda(X_i)}{\mu(Y_i)}. \notag
\end{align}
Hence, it can be verified that:
\begin{align*}
&U_k - U_{k-1} = \log \frac{\lambda(X_k)}{\mu(Y_k)} - D(\lambda(X)\|\mu(Y)); \\
&\mathbb{E}_{X^n} \left[ \left.U_k - U_{k-1} \right| \mathfrak{M}_{k-1} \right] = 0;\\
&\mathbb{E}_{X^n} \left[ \left.\left(U_k - U_{k-1}\right)^2 \right| \mathfrak{M}_{k-1} \right] 
= V\left( \lambda(X) \| \mu(Y) \right) = V.
\end{align*}
Let
\begin{align}
b := \max_{(x,y):x=y} \left| \log \frac{\lambda(x)}{\mu(y)} - D(\lambda(X)\|\mu(Y))\right|,
\end{align}
which is a finite number due to the assumption of the finite-dimensional Hilbert space.
	Then, we have $| U_k - U_{k-1} | \leq b$ almost surely for every $k\in[n]$.
	Equipped with the notation above, Eq.~\eqref{eq:HT3} can be expressed as:
	\begin{align}
	\alpha\left(Q_n;\rho^{\otimes n}\right) 
	&= \Pr\left\{  U_n - U_0 \leq -n a_n \right\}. \label{eq:HT4}
	\end{align}
	
	In the following, we borrow the idea from Sason \cite{Sas12} to employ a martingale inequality to upper bound Eq.~\eqref{eq:HT4}.
	

	\begin{theo}
		[Refined Azuma's Inequality {\cite[Theorem 2]{Sas12}}] \label{theo:Bennet}
		Let $\left(X_k\right)_{k=1}^n$ be a martingale with respect to the filtration $\left( \mathfrak{M}_k \right)_{k=0}^n$ such that the following requirements are satisfied almost surely:
		\textnormal{(i)} $\mathbb{E}\left[ X_k | \mathfrak{M}_{k-1} \right] = 0$;
		\textnormal{(ii)} $\mathbb{E}\left[ X_k^2 | \mathfrak{M}_{k-1} \right] \leq v$;
		\textnormal{(iii)} $\|X_k\|_{\infty} \leq b_k$.
		For any $x\geq0$,
		\begin{align} 
		&\Pr\left\{ \sum_{k=1}^n X_k \geq x n \right\} 
		= \Pr\left\{ \sum_{k=1}^n X_k \leq -x n \right\} 
		\notag  \\
		&\leq 2\exp\left\{ -n h\left(\left.\frac{bx+v}{b^2+v} \right\| \frac{v}{b^2+v} \right)\right\}, \label{eq:Bennett}		
		\end{align}
		where 	
		$h(p\|q) := p\log\frac{p}{q} + (1-p)\log \frac{1-p}{1-q}$.
	\end{theo}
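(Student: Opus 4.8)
\textbf{Proof strategy for Theorem~\ref{theo:Bennet} (Refined Azuma's Inequality).} The plan is to run the classical Chernoff--Cram\'er exponential moment argument, but to bound the conditional moment generating function using the \emph{quadratic} information in condition (ii) rather than only the boundedness in condition (iii). First I would fix $\lambda > 0$ and consider $\E\big[\prod_{k=1}^n \e^{\lambda X_k}\big]$. Conditioning successively on $\mathfrak{M}_{n-1}, \mathfrak{M}_{n-2}, \ldots$ and using the tower property, it suffices to produce a deterministic bound $\E[\e^{\lambda X_k}\mid \mathfrak{M}_{k-1}] \leq \psi(\lambda)$ valid almost surely for every $k$; then $\E[\e^{\lambda \sum_k X_k}] \leq \psi(\lambda)^n$, and Markov's inequality gives $\Pr\{\sum_k X_k \geq xn\} \leq \e^{-n(\lambda x - \log\psi(\lambda))}$, after which one optimizes over $\lambda \geq 0$.

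\textbf{The key lemma.} The heart of the argument is the single-variable inequality: if $Z$ is a (conditional) random variable with $\E[Z] = 0$, $\E[Z^2] \leq v$, and $|Z| \leq b$ a.s., then $\E[\e^{\lambda Z}] \leq \frac{b^2}{b^2+v}\,\e^{-\lambda v / b}\! \cdot\! 1 \;+\; \frac{v}{b^2+v}\,\e^{\lambda b}$ --- i.e.\ the MGF is maximized (over all such $Z$) by the two-point distribution supported on $\{-v/b,\, b\}$ with the matching mean and second-moment constraints. I would establish this by the standard convexity trick: the function $z \mapsto \e^{\lambda z}$ lies below the affine-in-$(z, z^2)$ function that interpolates it at the two points $z = -v/b$ and $z = b$ on the interval $[-v/b, b]$ (this uses convexity of $\e^{\lambda z}$ together with the fact that a convex function on an interval is dominated by the quadratic through any two interior-spanning points only after one checks the relevant chord condition --- more precisely, the correct statement is that $\e^{\lambda z} \le \alpha + \beta z + \gamma z^2$ for suitable $\alpha,\beta,\gamma$ with $\gamma \ge 0$, with equality at the two chosen points), then take expectations and use $\E[Z]=0$, $\E[Z^2]\le v$, noting $\gamma\ge 0$ lets us replace $\E[Z^2]$ by the upper bound $v$. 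Since $|Z|\le b$ and $-v/b \ge -b$ (as $v \le b^2$, which follows from $\E[Z^2]\le v$ being compatible with $|Z|\le b$ only in the regime of interest; otherwise the bound is vacuous), the interval $[-v/b,b]$ contains the support after a harmless enlargement. Carrying out the optimization of $\lambda x - \log\psi(\lambda)$ in closed form produces exactly the binary relative entropy expression $h\!\left(\frac{bx+v}{b^2+v}\,\big\|\,\frac{v}{b^2+v}\right)$, and the symmetry $\Pr\{\sum X_k \ge xn\} = \Pr\{\sum X_k \le -xn\}$ follows because $(-X_k)$ is also a martingale satisfying the same three hypotheses, contributing the factor $2$.

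\textbf{Applying it here.} With the lemma in hand, I would invoke it with $X_k = U_k - U_{k-1}$, $v = V(\rho\|\sigma) = V$, $b_k = b$ as defined just above the theorem, $x = a_n$, and $n$ copies, yielding $\Pr\{U_n - U_0 \leq -n a_n\} \leq 2\exp\{-n\, h(\tfrac{ba_n + V}{b^2+V}\|\tfrac{V}{b^2+V})\}$. The final asymptotic step (to be completed after the theorem statement in the main text) is a Taylor expansion of the binary relative entropy: writing $q = \frac{V}{b^2+V}$ and $p = q + \frac{b a_n}{b^2+V}$, one has $h(p\|q) = \frac{(p-q)^2}{2q(1-q)} + O((p-q)^3) = \frac{b^2 a_n^2/(b^2+V)^2}{2 \cdot \frac{V}{b^2+V}\cdot\frac{b^2}{b^2+V}} + O(a_n^3) = \frac{a_n^2}{2V} + O(a_n^3)$, so that $\frac{1}{na_n^2}\log \alpha(Q_n;\rho^{\otimes n}) \to -\frac{1}{2V}$, using $a_n \to 0$ and $na_n^2 \to +\infty$ to kill the $\frac{\log 2}{na_n^2}$ and $O(a_n)$ remainders.

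\textbf{Main obstacle.} I expect the delicate point to be the precise justification of the pointwise bound $\e^{\lambda z} \leq \alpha + \beta z + \gamma z^2$ on $[-v/b, b]$ with a \emph{non-negative} coefficient $\gamma$ and equality at the two endpoints --- this is what makes the substitution $\E[Z^2] \le v$ (an inequality, not an equality) legitimate, and it is where the specific choice of the left endpoint $-v/b$ (rather than $-b$) is forced. One must verify that the parabola through $(-v/b, \e^{-\lambda v/b})$ and $(b, \e^{\lambda b})$ that is tangent-like in the right sense actually dominates the exponential on the whole interval; the cleanest route is to note the function $g(z) := \alpha + \beta z + \gamma z^2 - \e^{\lambda z}$ is concave wherever $\gamma \le \tfrac{\lambda^2}{2}\e^{\lambda z}$... safer still is to follow Sason's original derivation, which reduces this to a one-line convexity statement after an affine reparametrization. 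Everything else is bookkeeping.
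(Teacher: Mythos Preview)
The paper does not prove Theorem~\ref{theo:Bennet}; it is quoted verbatim from Sason~\cite[Theorem~2]{Sas12} and used as a black box in the second proof of Theorem~\ref{theo:achievability_HT}. There is therefore no ``paper's own proof'' to compare against. Your outline --- Chernoff--Cram\'er plus a Bennett-type conditional MGF bound, reduced to the extremal two-point law on $\{-v/b,\,b\}$ with $\Pr(Z=b)=v/(b^2+v)$ --- is the standard route and is essentially how Sason derives it; your ``Applying it here'' paragraph then reproduces exactly what the paper does in Eqs.~\eqref{eq:NPa3}--\eqref{eq:NPa4}, including the scalar inequality $(1+u)\log(1+u)\ge u+u^2/2-u^3/6$ that the paper borrows from \cite[Lemma~1]{Sas12} to unpack the binary relative entropy.

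Two small remarks. First, the equality sign in the displayed statement between $\Pr\{\sum X_k\ge xn\}$ and $\Pr\{\sum X_k\le -xn\}$ is a typographical artifact in the paper, not a claim you need to justify; what is meant (and what Sason proves) is that each one-sided tail obeys the bound, and the factor $2$ is for the two-sided union. Your symmetry remark via $(-X_k)$ handles this correctly. Second, your ``main obstacle'' is well identified: the inequality $\mathrm{e}^{\lambda z}\le \alpha+\beta z+\gamma z^2$ with $\gamma\ge 0$ on an interval containing the support is not quite how Bennett's original argument runs. The cleaner path (and the one Sason follows) is to use the monotonicity of $z\mapsto (\mathrm{e}^{\lambda z}-1-\lambda z)/z^2$ to get $\mathrm{e}^{\lambda z}\le 1+\lambda z+\tfrac{z^2}{b^2}(\mathrm{e}^{\lambda b}-1-\lambda b)$ for $z\le b$, take conditional expectations, and then observe that the resulting bound $1+\tfrac{v}{b^2}(\mathrm{e}^{\lambda b}-1-\lambda b)$ is itself dominated by the two-point MGF (or, equivalently, that its Legendre transform already yields the binary KL after a short calculation). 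Either way the bookkeeping is routine once you commit to one of these variants.
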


Apply Theorem \ref{theo:Bennet} to Eq.~\eqref{eq:HT4} with $x = a_n$, $X_k = U_k-U_{k-1}$ for ever $k\in[n]$:
\begin{align}
\alpha\left(Q_n; \rho^{\otimes n}\right)  
\leq 2\exp \left\{ -n h\left(\left.\frac{ba_n+V}{b^2+V} \right\| \frac{V}{b^2+V} \right) \right\}. \label{eq:NPa3}
\end{align} 
By  using a scalar inequality \cite[Lemma 1]{Sas12}:
\begin{align}
(1+u) \log (1+u) \geq u + \frac{u^2}{2} - \frac{u^3}{6}, \quad u\geq 0,
\end{align}
and the definition of $h(\cdot\|\cdot)$ in Theorem \ref{theo:Bennet},
Eq.~\eqref{eq:NPa3} leads to
\begin{align}
\alpha\left(Q_n; \rho^{\otimes n}\right)   
&\leq 2\exp\left\{ -n \left[ \frac{a_n^2 }{2V } \left( 1 - \frac{a_n b}{3 V(1+V/b^2)} \right) \right]
\right\}. \label{eq:NPa4}
\end{align}
Finally, recall that $\lim_{n\to+\infty} a_n = 0$ in Eq.~\eqref{eq:cond}, then 
\begin{align*}
\limsup_{n\to+\infty} \frac{1}{n a_n^2} \log {\alpha}_{n}\left( \eta_{n} \right) \leq -\frac{1}{2V}.
\end{align*}
	
\end{proof}

\subsection{Proof of Converse: Theorem \ref{theo:converse_HT}} \label{ssec:converse_HT}

The converse part is a direct consequence of  the sharp converse Hoeffding bound, Theorem \ref{prop:strong}.

Let $\mathcal{X} = \{x\}$ and $W_x = \rho$. We apply Theorem \ref{prop:strong} with $r= r_n$ to obtain
\begin{align}
\widehat{\alpha}_{\exp\left\{-n r_n\right\}} \left( \rho^{\otimes n} \| \sigma^{\otimes n} \right) 
&\geq 
\frac{A}{s_n^\star\sqrt{n}}	\exp\left\{ 	-n  \left[ \sup_{0<\alpha\leq 1} \frac{\alpha-1}{\alpha} \left(r_n - c_n - D_\alpha\left(\rho\|\sigma\right)\right) \right] \right\},\label{eq:converse_HT1}
\end{align}
for sufficiently large $n\in\mathbb{N}$ and some constant $A>0$. Here
\begin{align}
s_n^\star := \argmax_{s\geq 0} \left\{   s D_{\frac{1}{1+s}}(\rho\|\sigma) - s r_n  \right\}.
\end{align}
Now let 
\begin{align}
\delta_n := a_n + c_n, \quad \forall n\in\mathbb{N},
\end{align}
and invoke Proposition \ref{prop:spCh} with $W_x = \rho$, $P(x) = 1$, and substitute $P^\star \mathscr{W}$ with $\sigma$ to obtain
\begin{align}
\limsup_{n\to+\infty} \frac{ \sup_{s\geq 0} \left\{  - s \left( D\left(\rho\|\sigma\right) - \delta_n\right) + sD_{\frac{1}{1+s}}(\rho\|\sigma) \right\} }{\delta_n^2} \leq \frac{1}{2V}. \label{eq:converse_HT6}
\end{align}
Moreover, Eq.~\eqref{eq:spChc3} in Proposition \ref{prop:spCh} gives that $\lim_{n\to+\infty}\frac{s_n^\star}{\delta_n} = 1/V$.
Combining Eqs.~\eqref{eq:converse_HT1} and \eqref{eq:converse_HT6} concludes our claim:
\begin{align}
\liminf_{n\to+\infty} \frac{ \log \widehat{\alpha}_{\exp\left\{-nr_n\right\}} \left( \rho^{\otimes n} \| \sigma^{\otimes n} \right)  }{n \delta_n^2}
&\geq -\frac{1}{2V}.
\end{align}

\section{Conclusion} \label{sec:conclutions}
A practical question in quantum information theory is that---is it possible for a reliable communication through a c-q channel when the transmission rate approaches capacity in blocklength?
In this paper, we propose a moderate deviation analysis for c-q channel and thus give an affirmative answer.
Moreover, we also establish the moderate deviations for quantum hypothesis testing.

Our proof strategy is based on a strong large deviation theory \cite{CHT16bb, CHT16b} and the study of the asymptotic behaviour of the error exponent function. 
As a result, we successfully bridge the connection between small error regime and the medium error regime.
On the other hand, the recent work from the authors \cite{CCT+16b} also obtains the moderate deviation result via the techniques in the non-vanishing error regime.
It is remarkable that both methods from different regimes arrive at the same place, and hence both this work along with Ref.~\cite{CCT+16b} illuminate the whole picture of the three regimes in quantum information theory.
\qed

\section*{Acknowledgements}
MH is supported by an ARC Future Fellowship under Grant FT140100574.  We would like to thank Vincent Tan for introducing us to Altu\u{g} and Wagner's work of moderate deviation analysis. We also thank Christopher Chubb and Marco Tomamichel for the insightful discussions and the useful comments.

\appendix

\section{Properties of Auxiliary Functions} \label{app:aux}
This section contains proofs of Propositions \ref{prop:prop2} and \ref{prop:prop_h}. Most results follow from properties of Petz quantum R\'enyi divergence \cite{Pet86} (see also \cite{MO15, LT15, Tom16}).

\subsection{Proof of Proposition \ref{prop:prop2}} \label{app:prop2}	
	
\begin{prop4}[Properties of $\widetilde{E}_0(s,P,\sigma)$]
	For any classical-quantum channel $\mathscr{W}:\mathcal{X}\to\mathcal{S(H)}$, the modified auxiliary function $\widetilde{E}_0(s,P,\sigma)$ admits the following properties.
	\begin{enumerate} [(a)]
		\item\label{prop2-aa}
		$\widetilde{E}_0(s,P,\sigma)$ and its partial derivatives $\partial \widetilde{E}_0(s,P,\sigma)/\partial s$, $\partial^2 \widetilde{E}_0(s,P,\sigma)/\partial s^2$, $\partial^3 \widetilde{E}_0(s,P,\sigma)/\partial s^3$ are all continuous in $(s,P)\in \mathbb{R}_{\geq 0} \times \mathscr{P}(\mathcal{X})$.
		
		\item\label{prop2-bb}
		For every $P\in\mathscr{P}(\mathcal{X})$, the function $\widetilde{E}_0(s,P,\sigma)$ is concave in $s\in\mathbb{R}_{\geq 0}$.
		
		\item\label{prop2-cc}
		For every $P\in\mathscr{P}(\mathcal{X})$,
		\begin{align} 
		\left.\frac{ \partial \widetilde{E}_0(s,P,\sigma)}{\partial s}\right|_{s = 0} =  D(P\circ\mathscr{W}\|P\otimes\sigma)
		\end{align}
		
		\item\label{prop2-dd}
		For every $P\in\mathscr{P}(\mathcal{X})$,
		\begin{align} 
		\lim_{s\to+\infty} \frac{ \partial \widetilde{E}_0(s,P)}{\partial s}\leq 
		\frac{ \partial \widetilde{E}_0(s,P)}{\partial s} \leq D(P\circ\mathscr{W}\|P\otimes\sigma), \; \forall s\in\mathbb{R}_{\geq 0}. \label{eq:tE0_I11}
		\end{align} 
		
		\item\label{prop2-ee}
		For every $P\in\mathscr{P}(\mathcal{X})$,
		\begin{align}
		\left.\frac{ \partial^2 \widetilde{E}_0(s,P)}{\partial s^2}\right|_{s = 0} = -V(P\circ\mathscr{W}\|P\otimes\sigma).
		\end{align}
	\end{enumerate}
\end{prop4}	

\begin{proof}[Proof of Proposition~\ref{prop:prop2}]	
	~\\
\begin{itemize}
\item[(\ref{prop:prop2}-\ref{prop2-aa})] The continuity can be proved by the standard approach of functional calculus (see e.g.~\cite[Lemma III.1]{MO15} and \cite[Section 4.2]{LT15}). Let $\widetilde{F}(s) := \sum_{x\in\mathcal{X}} P(x) \Tr\left[ W_x^{1-s} (\sigma)^s
\right]$. Direct calculation shows that 
\begin{align}
\frac{\partial \widetilde{E}_0(s,P,\sigma)}{\partial s} &= -\frac{\widetilde{F}'(s)}{\widetilde{F}(s)}, \label{eq:tE01} \\
\frac{\partial^2 \widetilde{E}_0(s,P,\sigma)}{\partial s^2} &= -\frac{\widetilde{F}''(s)}{\widetilde{F}(s)} + \left(  \frac{\partial \widetilde{E}_0(s,P,\sigma)}{\partial s} \right)^2, \label{eq:tE02} \\
\frac{ \partial^3 \widetilde{E}_0(s,P,\sigma)}{\partial s^3} &= -\frac{\widetilde{F}'''(s,P)}{\widetilde{F}(s,P)} + 3 \frac{ \partial \widetilde{E}_0(s,P,\sigma)}{\partial s} \frac{ \partial^2 \widetilde{E}_0(s,P,\sigma)}{\partial s^2} - \left( \frac{ \partial \widetilde{E}_0(s,P,\sigma)}{\partial s} \right)^3, \label{eq:tE03}		
\end{align}
and
\begin{align}
\widetilde{F}'(s) &= \sum_{x\in\mathcal{X}} P(x) \Tr \left[ - W_x^{1-s}  {\log} (W_x) (\sigma)^s + W_x^{1-s}   (\sigma)^s  {\log} (\sigma) \right], \label{eq:tF1}\\
\begin{split} \label{eq:tF2}
\widetilde{F}''(s) &= \sum_{x\in\mathcal{X}} P(x) \Tr \left[ W_x^{1-s}   {\log}^2 (W_x)  (\sigma)^s  -W_x^{1-s}  {\log} (W_x)  (\sigma)^s  {\log} (\sigma) \right.\\ &
\left. \quad - W_x^{1-s}  {\log} (W_x) (\sigma)^s  {\log}  (\sigma) + W_x^{1-s}   (\sigma)^s  {\log}^2 (\sigma )
\right],
\end{split} \\
\begin{split} \label{eq:tF3}
\widetilde{F}'''(s) &= \sum_{x\in\mathcal{X}} P(x) \Tr \left[ -W_x^{1-s}  {\log}^3 (W_x)     (\sigma)^s + W_x^{1-s}  {\log}^2 (W_x) (\sigma)^s  {\log} (\sigma) \right.\\ 
& \left. \quad +2 W_x^{1-s}  {\log}^2 (W_x)  (\sigma)^s  {\log} (\sigma)
-2W_x^{1-s}  {\log} (W_x) (\sigma)^s  {\log}^2 (\sigma)
\right.\\ &
\left.\quad -W_x^{1-s}  {\log} (W_x)  (\sigma)^s  {\log}^2 (\sigma) + W_x^{1-s}  (\sigma)^s {\log}^3 (\sigma )
\right].
\end{split}
\end{align}
Since the matrix power function is continuous (with respect to the strong topology; see e.g.~\cite[Theorem 1.19]{Hig08}), we conclude the continuity of the partial derivatives Eqs.~\eqref{eq:tE01}-\eqref{eq:tE03} in item \ref{prop2-aa}.		
		
\item[(\ref{prop:prop2}-\ref{prop2-bb})] 
The claim follows from the concavity of the map $s\mapsto sD_{1-s}(\,\cdot\,\|\,\cdot\,)$ (see e.g.~\cite[Lemma III.11]{MO14b}).
		
\item[(\ref{prop:prop2}-\ref{prop2-cc})] The results can be derived from evaluating Eqs.~\eqref{eq:tE01}, \eqref{eq:tE02}, \eqref{eq:tF1}, and \eqref{eq:tF2} at $s=0$.  We provide an alternative proof here. One can verify
\begin{align}
\left.\frac{\partial \widetilde{E}_0(s,P,\sigma)}{\partial s}\right|_{s=0} &=  \left.D_{1-s} \left( P\circ \mathscr{W} \| P \otimes \sigma \right) - s  D_{1-s}'\left( P\circ \mathscr{W}  \| P \otimes \sigma \right)\right|_{s=0} \\
&= \left. D_{1-s}\left( P\circ \mathscr{W}  \| P \otimes \sigma \right) \right|_{s=0} \\
&= D(P\circ\mathscr{W}\|P\otimes \sigma). \label{eq:tE_I}
\end{align}
		
\item[(\ref{prop:prop2}-\ref{prop2-dd})] The concavity of the map $s\mapsto \widetilde{E}(s,P,\sigma)$ in item \ref{prop2-bb} ensures that  $\partial \widetilde{E}(s,P,\sigma)/\partial s$ is non-increasing in $s$. Along with Eq.~\eqref{eq:tE_I}, we conclude Eq.~\eqref{eq:tE0_I1}.

\item[(\ref{prop:prop2}-\ref{prop2-ee})] Following from item \ref{prop2-cc}, one obtain
\begin{align}
\left.\frac{\partial^2 \widetilde{E}_0(s,P,\sigma)}{\partial s^2}\right|_{s=0} &= \left. -2 D_{1-s}' \left( P\circ \mathscr{W}  \| P \otimes \sigma \right) + s D_{1-s}''\left( P\circ \mathscr{W}  \| P \otimes \sigma \right)\right|_{s=0} \\
&= \left. -2 D_{1-s}' \left( P\circ \mathscr{W}  \| P \otimes \sigma \right)\right|_{s=0} \\
&= - V(P\circ\mathscr{W}\|P\otimes \sigma), \label{eq:tE_V2}
\end{align}
where the last equality \eqref{eq:tE_V2} follows from the fact $D_{1/1+s}'(\cdot\|\cdot)|_{s=0} = V (\cdot\|\cdot)/2$ \cite[Theorem 2]{LT15}.

	\end{itemize}
	\end{proof}	

\subsection{Proof of Proposition \ref{prop:prop_h}} \label{app:prop_h}
\begin{prop5}[Properties of $E_\text{h}(s,P,\sigma)$] 
	Consider a classical-quantum channel $\mathscr{W}:\mathcal{X}\to \mathcal{S(H)}$, a distribution $P\in\mathscr{P}(\mathcal{X})$, and a state $\sigma\in\mathcal{S(H)}$ with $W_x\ll \sigma$ for all $x\in \textnormal{\texttt{supp}}(P)$. Then $E_\textnormal{h}(s,P,\sigma)$ defined in Eq.~\eqref{eq:Eh} enjoys the following properties.
	\begin{enumerate}[(a)]
		\item\label{prop_h-aa}
		The partial derivatives $\partial E_\textnormal{h}(s,P,\sigma)/\partial s$, $\partial^2 E_\textnormal{h}(s,P,\sigma)/\partial s^2$, $\partial^3 E_\textnormal{h}(s,P,\sigma)/\partial s^3$, and $E_\textnormal{h}(s,P)$ are all continuous for $(s,P)\in \mathbb{R}_{\geq 0} \times \mathscr{P}(\mathcal{X})$.
		
		\item\label{prop_h-bb}
		For every $P\in\mathscr{P}(\mathcal{X})$, the function $E_\textnormal{h}(s,P,\sigma)$ is concave in $s$ for all $s\in\mathbb{R}_{\geq 0}$.
		
		\item\label{prop_h-cc}
		For every $P\in\mathscr{P}(\mathcal{X})$,
		\begin{align} \label{eq:first_hh}
		\left.\frac{ \partial E_\textnormal{h}(s,P,\sigma)}{\partial s}\right|_{s = 0} = D\left(\mathscr{W}\|\sigma|P\right).
		\end{align}
		
		\item\label{prop_h-dd}
		For every $P\in\mathscr{P}(\mathcal{X})$,
		\begin{align} \label{eq:E_h_II}
		\lim_{s\to+\infty} \frac{ \partial E_\textnormal{h}(s,P,\sigma)}{\partial s}\leq 
		\frac{ \partial E_\textnormal{h}(s,P,\sigma)}{\partial s} \leq D\left(\mathscr{W}\|\sigma|P\right), \; \forall s\in\mathbb{R}_{\geq 0}.  
		\end{align}
		
		\item\label{prop_h-ee}
		For every $P\in\mathscr{P}(\mathcal{X})$,
		\begin{align} \label{eq:second_hh}
		\left.\frac{ \partial^2 E_\textnormal{h}(s,P,\sigma)}{\partial s^2}\right|_{s = 0} = -V\left( \mathscr{W}\|\sigma|P\right).
		\end{align}
	\end{enumerate}
\end{prop5}
\begin{proof}[Proof Proposition~\ref{prop:prop_h}]
	~\\
\begin{itemize}
\item[(\ref{prop:prop_h}-\ref{prop_h-aa})]		
Direct calculation yields that
\begin{align}
\frac{\partial E_\text{h} (s,P,\sigma)}{\partial s} &= 
D_{\frac{1}{1+s}} \left( \mathscr{W} \| \sigma |P \right) - \frac{s}{(1+s)^2} D_{\frac{1}{1+s}}'\left( \mathscr{W}\|\sigma|P \right) \label{eq:tE_1h} \\
\frac{\partial^2 E_\text{h}(s,P,\sigma) }{\partial s^2}&= 
-\frac{2}{(1+s)^3} D_{\frac{1}{1+s}}' \left(\mathscr{W} \| \sigma |P \right) + \frac{s}{(1+s)^4} D_{\frac{1}{1+s}}'' \left( \mathscr{W}\| \sigma |P \right) \label{eq:tE_2h} \\
\frac{\partial^3 E_\text{h}(s,P,\sigma) }{\partial s^3}&= 
\frac{6}{(1+s)^4} D_{\frac{1}{1+s}}' \left( \mathscr{W} \| \sigma |P \right) + \frac{3-3s}{(1+s)^5} D_{\frac{1}{1+s}}'' \left( \mathscr{W} \| \sigma |P \right) \notag\\
&\quad - \frac{s}{(1+s)^6} D_{\frac{1}{1+s}}''' \left( \mathscr{W} \| \sigma |P \right).
\label{eq:tE_3h}
\end{align}
From Eqs.~\eqref{eq:tE_1h}-\eqref{eq:tE_3h} and the fact that $D_{1/(1+s)}\left(\mathscr{W}\|\sigma|P\right)$, $D_{1/(1+s)}'\left(\mathscr{W}\|\sigma|P\right)$, $D_{1/(1+s)}''\left(\mathscr{W}\|\sigma|P\right)$, and $D_{1/(1+s)}'''\left(\mathscr{W}\|\sigma|P\right)$ are continuous for $(s,P)\in \mathbb{R}_{\geq 0}\times \mathscr{P}(\mathcal{X})$, we deduce the continuity property in item \ref{prop_h-aa}.

\item[(\ref{prop:prop_h}-\ref{prop_h-bb})]	
The proof strategy follows closely with \cite[Appendix B]{MO14b}. Let $\psi(\alpha) = \sum_{x\in\mathcal{X}} P(x) \log \Tr\left[ W_x^{\alpha} \sigma^{1-\alpha} \right]$.
Since $\alpha \mapsto \psi(\alpha)$ is convex for all $\alpha\in(0,1]$ \cite[Lemma III.11]{MO14b}, it can be written as the supremum of affine functions, i.e.~
\begin{align}
\psi(\alpha)  = \sup_{i\in\mathcal{I}} \left\{ c_i \alpha + d_i  \right\}
\end{align}
for some index set $\mathcal{I}$.
Hence,
\begin{align} \label{eq:E_h_convex}
- E_\text{h}(s,P,\sigma) = (1+s) \psi\left( \frac{1}{1+s} \right)  = \sup_{i\in\mathcal{I}} \left\{ c_i + d_i (1+s)\right\}.
\end{align}
The right-hand side of Eq.~\eqref{eq:E_h_convex}, in turn, implies that the map $s\mapsto E_\text{h}(s,P,\sigma)$ is convex for all $s\in\mathbb{R}_{\geq 0}$.
		
\item[(\ref{prop:prop_h}-\ref{prop_h-cc})]	 From Eqs.~\eqref{eq:tE_1h} and \eqref{eq:tE_2h}, one finds
\begin{align}
\left.\frac{\partial E_\text{h} (s,P,\sigma)}{\partial s}\right|_{s=0} 
		&= D\left(\mathscr{W}\|\sigma | P\right). \label{eq:tE_I_h}
\end{align}
		
\item[(\ref{prop:prop_h}-\ref{prop_h-dd})]	 The concavity of the map $s\mapsto E_\text{h}(s,P,\sigma)$ in item \ref{prop_h-bb} ensures that  $\partial {E}_\text{h}(s,P,\sigma)/\partial s$ is non-increasing in $s$. Along with Eq.~\eqref{eq:tE_I_h} in item \ref{prop_h-c}, we conclude Eq.~\eqref{eq:E_h_I}.

\item[(\ref{prop:prop_h}-\ref{prop_h-ee})]	Applying $D_{1/1+s}'(\cdot\|\cdot)|_{s=0} = V (\cdot\|\cdot)/2$ \cite[Theorem 2]{LT15}, it holds that 
\begin{align}
\left.\frac{\partial^2 E_\text{h}(s,P,\sigma) }{\partial s^2}\right|_{s=0} 
&= - V\left(\mathscr{W}\|\sigma|P\right). \label{eq:tE_V2_h}
\end{align}
\end{itemize}	
\end{proof}

\subsection{Proof of Proposition \ref{prop:prop_b}} \label{app:prop_b}
\begin{prop7}[Properties of $\widetilde{E}_\text{h}(s,P,\sigma)$]
	Consider a classical-quantum channel $\mathscr{W}:\mathcal{X}\to \mathcal{S(H)}$, a distribution $P\in\mathscr{P}(\mathcal{X})$, and a state $\sigma\in\mathcal{S(H)}$ with $W_x\ll \sigma$ for all $x\in \textnormal{\texttt{supp}}(P)$. Then $\widetilde{E}_\textnormal{h}(s,P,\sigma)$ defined in Eq.~\eqref{eq:Eb} enjoys the following properties.
	\begin{enumerate}[(a)]
		\item\label{prop_b-aa}
		The partial derivatives $\partial \widetilde{E}_\textnormal{h}(s,P,\sigma)/\partial s$, $\partial^2 \widetilde{E}_\textnormal{h}(s,P,\sigma)/\partial s^2$, $\partial^3 \widetilde{E}_\textnormal{h}(s,P,\sigma)/\partial s^3$, and $\widetilde{E}_\textnormal{h}(s,P,\sigma)$ are all continuous for $(s,P)\in \mathbb{R}_{\geq 0} \times \mathscr{P}(\mathcal{X})$.
		
		\item\label{prop_b-bb}
		For every $P\in\mathscr{P}(\mathcal{X})$, the function $\widetilde{E}_\textnormal{h}(s,P,\sigma)$ is concave in $s$ for all $s\in\mathbb{R}_{\geq 0}$.
		
		\item\label{prop_b-cc}
		For every $P\in\mathscr{P}(\mathcal{X})$,
		\begin{align} \label{eq:first_bb}
		\left.\frac{ \partial \widetilde{E}_\textnormal{h}(s,P,\sigma)}{\partial s}\right|_{s = 0} = D\left(\mathscr{W}\|\sigma|P\right).
		\end{align}
		
		\item\label{prop_b-dd}
		For every $P\in\mathscr{P}(\mathcal{X})$,
		\begin{align} \label{eq:E_b_II}
		\lim_{s\to+\infty} \frac{ \partial \widetilde{E}_\textnormal{h}(s,P,\sigma)}{\partial s}\leq 
		\frac{ \partial \widetilde{E}_\textnormal{h}(s,P,\sigma)}{\partial s} \leq D\left(\mathscr{W}\|\sigma|P\right), \; \forall s\in\mathbb{R}_{\geq 0}.  
		\end{align}
		
		\item\label{prop_b-ee}
		For every $P\in\mathscr{P}(\mathcal{X})$,
		\begin{align} \label{eq:second_bb}
		\left.\frac{ \partial^2 \widetilde{E}_\textnormal{h}(s,P,\sigma)}{\partial s^2}\right|_{s = 0} = -\widetilde{V}\left( \mathscr{W}\|\sigma|P\right).
		\end{align}
	\end{enumerate}
\end{prop7}

\begin{proof}[Proof of Proposition~\ref{prop:prop_b}]
	This proof follows similarly from Proposition \ref{prop:prop_h}.
	\begin{itemize}
		\item[(\ref{prop:prop_b}-\ref{prop_b-aa})]		
		Direct calculation yields that
		\begin{align}
			\frac{\partial \widetilde{E}_\text{h} (s,P,\sigma)}{\partial s} &= 
			\widetilde{D}_{\frac{1}{1+s}} \left( \mathscr{W} \| \sigma |P \right) - \frac{s}{(1+s)^2} \widetilde{D}_{\frac{1}{1+s}}'\left( \mathscr{W}\|\sigma|P \right) \label{eq:tE_1b} \\
			\frac{\partial^2 \widetilde{E}_\text{h}(s,P,\sigma) }{\partial s^2}&= 
			-\frac{2}{(1+s)^3} \widetilde{D}_{\frac{1}{1+s}}' \left(\mathscr{W} \| \sigma |P \right) + \frac{s}{(1+s)^4} \widetilde{D}_{\frac{1}{1+s}}'' \left( \mathscr{W}\| \sigma |P \right) \label{eq:tE_2b} \\
			\frac{\partial^3 \widetilde{E}_\text{h}(s,P,\sigma) }{\partial s^3}&= 
			\frac{6}{(1+s)^4} \widetilde{D}_{\frac{1}{1+s}}' \left( \mathscr{W} \| \sigma |P \right) + \frac{3-3s}{(1+s)^5} \widetilde{D}_{\frac{1}{1+s}}'' \left( \mathscr{W} \| \sigma |P \right) \notag\\
			&\quad - \frac{s}{(1+s)^6} \widetilde{D}_{\frac{1}{1+s}}''' \left( \mathscr{W} \| \sigma |P \right).
			\label{eq:tE_3b}
		\end{align}
		From Eqs.~\eqref{eq:tE_1b}-\eqref{eq:tE_3b} and the fact that $\widetilde{D}_{1/(1+s)}\left(\mathscr{W}\|\sigma|P\right)$,
		$\widetilde{D}_{1/(1+s)}'\left(\mathscr{W}\|\sigma|P\right)$,
		$\widetilde{D}_{1/(1+s)}''\left(\mathscr{W}\|\sigma|P\right)$, and $D_{1/(1+s)}'''\left(\mathscr{W}\|\sigma|P\right)$ are continuous for $(s,P)\in \mathbb{R}_{\geq 0}\times \mathscr{P}(\mathcal{X})$, we deduce the continuity property in item \ref{prop_b-aa}.

		\item[(\ref{prop:prop_b}-\ref{prop_b-bb})]	
		The proof strategy follows closely with \cite[Appendix B]{MO14b}. Let 
		\begin{align}
		\tilde{\psi}(\alpha) = \sum_{x\in\mathcal{X}} P(x) \log \Tr\left[ \mathrm{e}^{ \alpha \log W_x+(1-\alpha)\log \sigma} \right].
		\end{align}
		Since $\alpha \mapsto \tilde{\psi}(\alpha)$ is convex for all $\alpha\in(0,1]$ \cite[Lemma III.11]{MO14b}, it can be written as the supremum of affine functions, i.e.~
		\begin{align}
			\tilde{\psi}(\alpha)  = \sup_{i\in\mathcal{I}} \left\{ c_i \alpha + d_i  \right\}
		\end{align}
		for some index set $\mathcal{I}$.
		Hence,
		\begin{align} \label{eq:E_b_convex}
			- \widetilde{E}_\text{h}(s,P,\sigma) = (1+s) \tilde{\psi}\left( \frac{1}{1+s} \right)  = \sup_{i\in\mathcal{I}} \left\{ c_i + d_i (1+s)\right\}.
		\end{align}
		The right-hand side of Eq.~\eqref{eq:E_b_convex}, in turn, implies that the map $s\mapsto \widetilde{E}_\text{h}(s,P,\sigma)$ is convex for all $s\in\mathbb{R}_{\geq 0}$.
		
		\item[(\ref{prop:prop_b}-\ref{prop_b-cc})]	 From Eqs.~\eqref{eq:tE_1b} and \eqref{eq:tE_2b} and recalling \cite[Lemma III.4]{MO14b}, one finds
		\begin{align}
			\left.\frac{\partial \widetilde{E}_\text{h} (s,P,\sigma)}{\partial s}\right|_{s=0} 
			&= D\left(\mathscr{W}\|\sigma | P\right). \label{eq:tE_I_b}
		\end{align}
		
		\item[(\ref{prop:prop_b}-\ref{prop_b-dd})]	 The concavity of the map $s\mapsto E_\text{h}(s,P)$ in item \ref{prop_b-bb} ensures that  $\partial {E}_\text{h}(s,P)/\partial s$ is non-increasing in $s$. Along with Eq.~\eqref{eq:tE_I_b} in item \ref{prop_b-c}, we conclude Eq.~\eqref{eq:E_b_I}.
		
		\item[(\ref{prop:prop_b}-\ref{prop_h-ee})]	Following similar steps in \cite[Proposition 4]{LT15}, it can be verifies that
		\begin{align} \label{eq:tEb1}
		\left.\widetilde{D}'_{\alpha}(\rho\|\sigma)\right|_{\alpha=1} = \lim_{\alpha\uparrow 1} \frac12 \frac{\mathrm{d}^2}{\mathrm{d}\alpha^2} \log f(\alpha)
		= \frac{f(1) f''(1) - (f'(1))^2}{2 (f(1))^2},
		\end{align}
		where $f(\alpha) := \Tr\left[ \mathrm{e}^{\alpha\log \rho + (1-\alpha)\sigma}  \right]$.
		Further, the Fr\'echet derivative of the exponential (see e.g.~\cite[Example X.4.2]{Bha97}) gives
		\begin{align}
		&f'(\alpha) = \Tr\left[ \mathrm{e}^{\alpha\log \rho + (1-\alpha)\sigma} \left( \log \rho - \log \sigma \right) \right], \\
		&f''(\alpha) = \int_{0}^1 \mathrm{d}t \Tr\left[ \mathrm{e}^{t(\alpha\log \rho + (1-\alpha)\sigma)} \left( \log \rho - \log \sigma \right) \mathrm{e}^{(1-t)(\alpha\log \rho + (1-\alpha)\sigma)  } \left( \log \rho - \log \sigma \right) \right],
		\end{align}
		Therefore, Eq.~\eqref{eq:tEb1} equals
		\begin{align}
		\left.\widetilde{D}'_{\alpha}(\rho\|\sigma)\right|_{\alpha=1} &= \frac12 \left(  
		\int_{0}^1\mathrm{d} t\Tr\left[ \rho^{1-t} (\log \rho - \log \sigma ) \rho^t (\log \rho - \log \sigma ) \right] - D(\rho\|\sigma)^2
		\right) \\
		&= \frac12\widetilde{V}(\rho\|\sigma).
		\end{align}
		Finally, combining with Eq.~\eqref{eq:tE_2b} yields
		\begin{align}
			\left.\frac{\partial^2 \widetilde{E}_\text{h}(s,P,\sigma) }{\partial s^2}\right|_{s=0} 
			&= - \widetilde{V}\left(\mathscr{W}\|\sigma|P\right). \label{eq:tE_V2_b}
		\end{align}
	\end{itemize}
\end{proof}

\section{A Weak Converse Bound: Proof of Proposition \ref{prop:weak}} \label{app:weak}

\begin{prop2}
	[Weak Converse Bound with Polynomial Prefactors]
	Consider a classical-quantum channel $\mathscr{W}:\mathcal{X}\to\mathcal{S(H)}$ with $\mathscr{S}_\circ := \overline{\textnormal{\textsf{im}}(\mathscr{W})}$, an arbitrary rate $R\geq 0$, and $\sigma\in\mathcal{S}_{>0}(\mathcal{H})$.
	For any $\eta\in(0,\frac12)$ and $c>0$, let $N_0\in\mathbb{N}$ such that for all $n\geq N_0$,
	\begin{align}  \label{eq:weak2}
	c\cdot \mathrm{e}^{ - \xi \sqrt{ n}  } \leq \frac{\eta}{2},
	\end{align}
	where $\xi = \sqrt{ 2A / \eta}$ and $A := \max_{ \rho\in\mathscr{S}_\circ  }  V(\rho\|\sigma)$.
	Then, it holds that for all $n\geq N_0$,
	\begin{align}
	\widehat{\alpha}_{c\exp\{-nR\}}\left( W_{\mathbf{x}^n}^{\otimes n}\| \sigma^{\otimes n} \right) 
	\geq 
	f(\eta)
	\exp\left\{-n \left[ \frac{\widetilde{E}_\textnormal{sp}\left(R- \frac{2\xi}{\sqrt{n}} , P_{\mathbf{x}^n} , \sigma \right)}{1-\eta}  \right] \right\},
	\end{align}
	where $f(\eta) = \exp\left\{ -\frac{h\left(1-\eta\right)}{1-\eta} \right\}$
	and $h(p) := - p \log p - (1-p) \log (1-p)$ is the binary entropy function.
\end{prop2}

\begin{remark} \label{remark2}
	Consider a constant composition code with common type $P_{\mathbf{x}^n}$ on a finite input alphabet $\mathcal{X}$.
	Recall the definition of the weak sphere-packing exponent  \cite{Win99,CHT16b}: 
	\begin{align} \label{eq:weakSP}
	\widetilde{E}_\text{sp}(R,P_{\mathbf{x}^n}) :=
	\min_{\bar{\mathscr{W}}:\mathcal{X}\to\mathcal{S(H)}} \left\{
	D\left(\bar{\mathscr{W}}\|\mathscr{W}|P_{\mathbf{x}^n}\right): I(P_{\mathbf{x}^n},\bar{\mathscr{W}}) \leq R \right\}.
	\end{align}
	 Proposition \ref{prop:weak}, along with \cite[Lemma 11]{CHT16b}, establishes a weak sphere-packing bound with polynomial prefactors, which generalizes Altu{\u{g}} and Wagner's result \cite[Lemma 3]{AW14b} to c-q channels: for any $\eta\in(0,\frac12)$ and for all sufficiently large $n$ such that Eq.~\eqref{eq:weak2} holds, we have
	\begin{align}
	\epsilon_{\max}(\mathscr{W},P_{\mathbf{x}^n}) 
	&\geq 
	\max_{\sigma\in\mathcal{S(H)}}\widehat{\alpha}_{\exp\{-nR\}}\left( W_{\mathbf{x}^n}^{\otimes n} \| \sigma^{\otimes n} \right) \\
	&\geq \widehat{\alpha}_{\exp\{-nR\}}\left( W_{\mathbf{x}^n}^{\otimes n} \| (\sigma^\star)^{\otimes n} \right) \\
	&\geq  
	f(\eta)
	\exp\left\{-n \left[ \frac{\widetilde{E}_\textnormal{sp}\left(R- \frac{2\xi}{\sqrt{n}} , P_{\mathbf{x}^n} \right)}{1-\eta}  \right] \right\}, \label{eq:weak3}
	\end{align}
	where $\sigma^\star := P_{\mathbf{x}^n} \bar{\mathscr{W}}^\star$ and $\bar{\mathscr{W}}^\star$ is an arbitrary minimizer in Eq.~\eqref{eq:weakSP}.
	Moreover, Eq.~\eqref{eq:weak3} improves the prefactor of Winter's weak sphere-packing bound \cite{Win99} from the order of subexponential to polynomial.
\end{remark}

\begin{proof}[Proof of Proposition~\ref{prop:weak}]
Consider an arbitrary sequence $\mathbf{x}^n\in\mathcal{X}^n$ and a test $Q_n$ on $\mathcal{H}^{\otimes n}$. For two c-q channels $\bar{\mathscr{W}},\mathscr{W}: \mathcal{X} \to \mathscr{S}_\circ$, the data-processing inequality implies that
\begin{align}
D\left( \bar{W}_{\mathbf{x}^n}^{\otimes n} \| {W}_{\mathbf{x}^n}^{\otimes n} \right) 
&\geq \left[1-\alpha(Q_n;\bar {W}_{\mathbf{x}^n}^{\otimes n}) \right] \log \frac{1-\alpha(Q_n;\bar{W}_{\mathbf{x}^n}^{\otimes n})}{1-\alpha(Q_n;{W}_{\mathbf{x}^n}^{\otimes n})} + \alpha(Q_n;\bar{W}_{\mathbf{x}^n}^{\otimes n}) \log \frac{\alpha(Q_n;\bar{W}_{\mathbf{x}^n}^{\otimes n})}{\alpha(Q_n;{W}_{\mathbf{x}^n}^{\otimes n})}\\
&= - h\left(  \alpha(Q_n;\bar{W}_{\mathbf{x}^n}^{\otimes n}) \right) - \alpha(Q_n;\bar{W}_{\mathbf{x}^n}^{\otimes n}) \log \alpha(Q_n; {W}_{\mathbf{x}^n}^{\otimes n})  \notag\\
&\quad\quad - \left[1-\alpha(Q_n;\bar{W}_{\mathbf{x}^n}^{\otimes n})\right] \log \left(1-\alpha(Q_n;{W}_{\mathbf{x}^n}^{\otimes n})\right) \label{eq:change4} \\
&\geq  - \alpha(Q_n;\bar{W}_{\mathbf{x}^n}^{\otimes n}) \log \alpha(Q_n;{W}_{\mathbf{x}^n}^{\otimes n}) - h\left(  \alpha(Q_n;\bar{W}_{\mathbf{x}^n}^{\otimes n}) \right), \label{eq:change5}
\end{align}
where the last inequality \eqref{eq:change5} follows since the third term in \eqref{eq:change4} is non-negative. Continuing from Eq.~\eqref{eq:change5}, we have
\begin{align}
\alpha(Q_n;{W}_{\mathbf{x}^n}^{\otimes n})
&\geq \exp\left\{ - \frac{	D\left( \left.\bar W^{\otimes n}_{\mathbf{x}^n}\right\| W^{\otimes n}_{\mathbf{x}^n} \right) + h\left(  \alpha(Q_n;\bar{W}_{\mathbf{x}^n}^{\otimes n}) \right)}{\alpha(Q_n;\bar{W}_{\mathbf{x}^n}^{\otimes n}) } \right\} \\
&=\exp\left\{ - \frac{	n\,D\left( \left.\left. \bar{\mathscr{W}}\right\| \mathscr{W} \right| P_{\mathbf{x}^n}\right) + h\left( \alpha(Q_n;\bar{W}_{\mathbf{x}^n}^{\otimes n}) \right)}{ \alpha(Q_n;\bar{W}_{\mathbf{x}^n}^{\otimes n}) } \right\}, \label{eq:change2}
\end{align}
where Eq.~\eqref{eq:change2} follows from the additivity of the relative entropy and the empirical distribution $P_{\mathbf{x}^n}$.

The next step is to replace $\alpha(Q_n;{W}_{\mathbf{x}^n}^{\otimes n})$ with a lower bound that does not depend on the dummy channel $\bar W$, provided that $\bar W$ satisfies certain conditions. This can be done using Proposition~\ref{prop:sc}, Wolfowitz's strong converse bound. We delay its proof in Appendix \ref{app:sc}.


\begin{prop}
	[Wolfowitz's Strong Converse] \label{prop:sc}
	Let $\mathscr{S}_\circ\subseteq \mathcal{S(H)}$ be closed and let
	$\bar{\mathscr{W}} : \mathcal{X}\to \mathscr{S}_\circ$ be an arbitrary classical-quantum channel.
	Consider the binary hypothesis testing:
	\begin{align}
	&\mathsf{H}_0:   \bar{W}_{\mathbf{x}^n}^{\otimes n},\\
	&\mathsf{H}_1:   \sigma^{\otimes n},
	\end{align}
	where $\mathbf{x}^n\in\mathcal{X}^n$ and $\sigma\in\mathcal{S}_{>0}(\mathcal{H})$.
	For any test $Q_n$ such that $\beta(Q_n; \sigma^{\otimes n}) \leq \mathrm{e}^{-nR}$ and $D\left(\bar W_{\mathbf{x}^n} \| \sigma | P_{\mathbf{x}^n}\right) \leq R - 2\kappa$, it holds that
	\begin{align}
	\alpha\left( Q_n; \bar{W}_{\mathbf{x}^n}^{\otimes n} \right) > 1 - \frac{A}{n\kappa^2} - \mathrm{e}^{-n\kappa  },
	\end{align}
	where $A := \max_{ \rho\in\mathscr{S}_\circ} V\left( \rho \| \sigma \right)$.
\end{prop}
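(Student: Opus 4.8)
The plan is to establish the quantum counterpart of Wolfowitz's strong converse for the test $\mathsf{H}_0:\bar W_{\mathbf{x}^n}^{\otimes n}$ versus $\mathsf{H}_1:\sigma^{\otimes n}$, following the classical recipe---threshold the likelihood ratio and then apply Chebyshev's inequality---while the non-commutative likelihood ratio is handled through the Nussbaum-Szko{\l}a distributions. It suffices to upper bound $\Tr[Q_n\,\bar W_{\mathbf{x}^n}^{\otimes n}]=1-\alpha(Q_n;\bar W_{\mathbf{x}^n}^{\otimes n})$. Put $\gamma:=\mathrm{e}^{n(R-\kappa)}$; from the elementary operator inequality $\bar W_{\mathbf{x}^n}^{\otimes n}\leq\gamma\,\sigma^{\otimes n}+\big(\bar W_{\mathbf{x}^n}^{\otimes n}-\gamma\,\sigma^{\otimes n}\big)_+$, together with $0\leq Q_n\leq\mathds{1}$ and the hypothesis $\beta(Q_n;\sigma^{\otimes n})=\Tr[Q_n\sigma^{\otimes n}]\leq\mathrm{e}^{-nR}$, one gets
\[
\Tr\big[Q_n\,\bar W_{\mathbf{x}^n}^{\otimes n}\big]\;\leq\;\gamma\,\mathrm{e}^{-nR}+\Tr\big[\big(\bar W_{\mathbf{x}^n}^{\otimes n}-\gamma\,\sigma^{\otimes n}\big)_+\big]\;=\;\mathrm{e}^{-n\kappa}+\Tr\big[\big(\bar W_{\mathbf{x}^n}^{\otimes n}-\gamma\,\sigma^{\otimes n}\big)_+\big],
\]
which already produces the $\mathrm{e}^{-n\kappa}$ term, so it remains to bound the positive part by $A/(n\kappa^2)$.

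To that end, let $\widetilde P_x,\widetilde Q_x$ be the Nussbaum-Szko{\l}a distributions of the pair $(\bar W_x,\sigma)$ (cf.\ the Nussbaum-Szko{\l}a mapping used in Section~\ref{ssec:achievability_HT}), which satisfy $D(\widetilde P_x\|\widetilde Q_x)=D(\bar W_x\|\sigma)$ and $V(\widetilde P_x\|\widetilde Q_x)=V(\bar W_x\|\sigma)$; because $\bar W_{\mathbf{x}^n}^{\otimes n}$ and $\sigma^{\otimes n}$ are tensor products, the Nussbaum-Szko{\l}a pair of $(\bar W_{\mathbf{x}^n}^{\otimes n},\sigma^{\otimes n})$ is $\widetilde P_{\mathbf{x}^n}=\bigotimes_{i=1}^n\widetilde P_{x_i}$ and $\widetilde Q_{\mathbf{x}^n}=\bigotimes_{i=1}^n\widetilde Q_{x_i}$. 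The key analytic input is the one-shot converse estimate (of information-spectrum type) that dominates the positive part by the classical tail of the log-likelihood ratio of this pair,
\[
\Tr\big[\big(\bar W_{\mathbf{x}^n}^{\otimes n}-\gamma\,\sigma^{\otimes n}\big)_+\big]\;\leq\;\Pr\Big[\log\tfrac{\widetilde P_{\mathbf{x}^n}(Z^n)}{\widetilde Q_{\mathbf{x}^n}(Z^n)}\;\geq\;\log\gamma\Big],
\]
with $Z^n$ distributed according to $\widetilde P_{\mathbf{x}^n}$. Under $\widetilde P_{\mathbf{x}^n}$ this log-likelihood ratio is a sum of $n$ independent terms whose total mean is $\sum_{i=1}^n D(\bar W_{x_i}\|\sigma)=n\,D(\bar W_{\mathbf{x}^n}\|\sigma|P_{\mathbf{x}^n})\leq n(R-2\kappa)$ and whose total variance is $\sum_{i=1}^n V(\bar W_{x_i}\|\sigma)=n\,V(\bar W_{\mathbf{x}^n}\|\sigma|P_{\mathbf{x}^n})\leq nA$, where $A=\max_{\rho\in\mathscr{S}_\circ}V(\rho\|\sigma)$. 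Since $\log\gamma=n(R-\kappa)$ exceeds this mean by at least $n\kappa$, Chebyshev's inequality bounds the tail probability by $nA/(n\kappa)^2=A/(n\kappa^2)$; combining the two steps yields $\alpha(Q_n;\bar W_{\mathbf{x}^n}^{\otimes n})=1-\Tr[Q_n\,\bar W_{\mathbf{x}^n}^{\otimes n}]>1-A/(n\kappa^2)-\mathrm{e}^{-n\kappa}$.

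The main obstacle is the one-shot converse estimate displayed above: one must verify that the non-commutative trace $\Tr[(\rho-\gamma\tau)_+]$ is controlled by the classical tail of the Nussbaum-Szko{\l}a log-likelihood ratio. This is the only step that genuinely uses quantum structure; I would either cite it in the form already employed for the second-order converse of quantum Stein's lemma, or reprove it via the standard Nussbaum-Szko{\l}a dominance argument. A slightly lossier alternative avoids the Nussbaum-Szko{\l}a distributions altogether: by the Ogawa-Nagaoka inequality, $\Tr[(\bar W_{\mathbf{x}^n}^{\otimes n}-\gamma\,\sigma^{\otimes n})_+]\leq\gamma^{-s}\Tr[(\bar W_{\mathbf{x}^n}^{\otimes n})^{1+s}(\sigma^{\otimes n})^{-s}]=\exp\{-ns(R-\kappa-D_{1+s}(\bar W_{\mathbf{x}^n}\|\sigma|P_{\mathbf{x}^n}))\}$ for $s\in[0,1]$; Taylor expanding $D_{1+s}=D+\tfrac{s}{2}V+O(s^2)$ with a remainder uniform over $\mathscr{S}_\circ$ (exactly as the constant $\Upsilon$ is handled in the proof of Theorem~\ref{theo:achievability}, using $\partial_sD_{1+s}|_{s=0}=\tfrac12 V$ from Proposition~\ref{prop:prop_h}), choosing $s$ of order $\kappa$ so that the exponent is of order $n\kappa^2$, and applying $\mathrm{e}^{-x}\leq 1/x$, one obtains the same bound up to an absolute multiplicative constant in front of $A/(n\kappa^2)$.
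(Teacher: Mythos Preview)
Your reduction to bounding $\Tr\big[(\bar W_{\mathbf{x}^n}^{\otimes n}-\gamma\sigma^{\otimes n})_+\big]$ is correct and coincides with the paper's first step (the paper phrases it through the hockey-stick divergence and its data-processing inequality, but the content is identical). The genuine gap is the ``key analytic input'' you isolate:
\[
\Tr\big[(\rho-\gamma\sigma)_+\big]\;\leq\;\Pr_{\widetilde P}\!\left[\log\tfrac{\widetilde P}{\widetilde Q}\geq\log\gamma\right].
\]
This inequality is \emph{false} in general. The Nussbaum--Szko{\l}a correspondence matches only the Petz--R\'enyi quantities $\Tr[\rho^\alpha\sigma^{1-\alpha}]$ for $\alpha\in[0,1]$ (and hence $D$ and $V$ as limits); it does \emph{not} dominate the hockey-stick divergence from above. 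A two-dimensional counterexample: take $\rho=|0\rangle\langle 0|$, $\sigma=0.9\,|+\rangle\langle+|+0.1\,|-\rangle\langle-|$ and $\gamma=1.2$. Then $\Tr[(\rho-\gamma\sigma)_+]\approx 0.593$, while the Nussbaum--Szko{\l}a tail equals $\Pr_{\widetilde P}[\widetilde P/\widetilde Q\geq 1.2]=\tfrac12$. The ``standard NS dominance argument'' you allude to goes the \emph{other} way (it lower-bounds quantum error quantities by classical ones, which is how the Chernoff lower bound is obtained), so there is nothing to cite here.

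The paper circumvents this entirely: after the same reduction it bounds $\Tr[(\rho^n-\gamma\sigma^{\otimes n})_+]\leq\Tr\big[\{\rho^n\geq\gamma\sigma^{\otimes n}\}\,\rho^n\big]$ and then invokes a genuinely quantum Chebyshev inequality (Sharma--Warsi, stated as Lemma~\ref{lemm:Chebyshev}),
\[
\Tr\big[\rho\,\{\rho\geq\gamma\sigma\}\big]\;\leq\;\frac{V(\rho\|\sigma)}{\big(\log\gamma-D(\rho\|\sigma)\big)^2},
\]
applied to the product states with $\log\gamma=n\,D(\bar{\mathscr W}\|\sigma|P_{\mathbf{x}^n})+n\kappa$. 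This yields exactly $A/(n\kappa^2)$ with the stated constant $A=\max_{\rho\in\mathscr{S}_\circ}V(\rho\|\sigma)$, with no classical detour. Your Ogawa--Nagaoka alternative is sound in spirit but, as you note, only delivers the bound up to an unspecified multiplicative constant; it would suffice for the downstream moderate-deviation application, but it does not prove the proposition as stated.
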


Fix $0<\eta < \frac12$, and let $\xi^2 := \frac{2A}{\eta}$. Note that $\xi^2$ is finite because $A<+\infty$. For all $n\geq N_0$, we have
\begin{align}
	c \cdot \mathrm{e}^{-\xi\sqrt{n}} &\leq \frac{\eta}{2} \label{eq:cond2}
\end{align}
by assumption in Proposition \ref{prop:weak}.
Choose $\kappa = \xi/\sqrt{n}$. For any $\bar{\mathscr{W}} :\mathcal{X}\to\mathscr{S}_\circ$ with $D\left( \bar{\mathscr{W}} \| \sigma | P_{\mathbf{x}^n} \right) \leq R - \frac{2\xi}{\sqrt{n}}$ and any test $Q_n$ such that $\beta(Q_n;\sigma^{\otimes n})\leq \mathrm{e}^{-nR}$, Proposition \ref{prop:sc} gives a lower bound to the type-I error:
\begin{align}
	\alpha(Q_n; \bar{W}_{\mathbf{x}^n}^{\otimes n}) &\geq 1 - \frac{A}{n\kappa^2} -  \mathrm{e}^{-n\kappa} 
	\geq 1 - \eta. \label{eq:lowerr}
\end{align}
Hence, combining Eqs.~\eqref{eq:change2} and \eqref{eq:lowerr} yields that, for any $\beta(Q_n;\sigma^{\otimes n}) \leq c\mathrm{e}^{-nR}$,
\begin{align}
\alpha(Q_n;{W}_{\mathbf{x}^n}^{\otimes n})
&\geq \max_{\bar{\mathscr{W}}: D\left(\bar{\mathscr{W}}\|\sigma| P_{\mathbf{x}^n}\right)\leq R- \frac{2\xi}{\sqrt{n}}} \exp\left\{ - \frac{	n\,D\left( \left.\left.\bar{\mathscr{W}}\right\| \mathscr{W} \right| P_{\mathbf{x}^n}\right) + h\left(1-\eta\right) }{ 1-\eta } \right\}, \\
& =\exp\left\{ -\frac{h\left(1-\eta\right)}{1-\eta} \right\} \exp\left\{ - \frac{	n\,\widetilde{E}_{\text{sp}}\left(R-\frac{2\xi}{\sqrt{n}},P_{\mathbf{x}^n},\sigma \right) }{ 1-\eta } \right\},
\label{eq:change12}
\end{align}
which concludes Proposition \ref{prop:weak}.

\end{proof}

\subsection{Proof of Wolfowitz's Strong Converse: Proposition \ref{prop:sc}} \label{app:sc}

To prove our claim, we first introduce notation for generalized divergences.
For any $\rho,\sigma\in\mathcal{S}(\mathcal{H})$, and $\gamma > 0$, define the \emph{hockey-stick divergence} by
\begin{align} \label{eq:hockey}
\mathcal{D}_\gamma (\rho \| \sigma) := \Tr \left[ \left(\rho - \gamma \sigma \right)_+\right],
\end{align}
where $A_+ := A\{A\geq 0\}$ denotes the self-adjoint matrix contributed only by its positive part.
This divergence satisfies the data-processing inequality (DPI):
\begin{align}
\Tr \left[ \left(\rho - \gamma \varrho \right)_+\right]
\geq \Tr \left[ \left( \mathcal{N}(\rho) - \gamma \mathcal{N}(\varrho) \right)_+\right],
\end{align}
for any completely positive and trace-preserving map $\mathcal{N}: \mathcal{S}(\mathcal{H}_\text{in}) \to \mathcal{S}(\mathcal{H}_\text{out})$ \cite[Lemma 4]{SW12}. Let 
\begin{align}
\rho_p &:= p |0\rangle\langle 0| + (1-p) |1\rangle\langle 1|,\quad \text{and}\quad
\sigma_q := q |0\rangle\langle 0| + (1-q) |1\rangle\langle 1|,
\end{align}
for $0\leq p,q\leq 1$ and some orthonormal basis $\left\{ |0\rangle, |1\rangle \right\}$, and define
\begin{align} \label{eq:hockey_c}
\mathrm{d}_\gamma \left( p\|q \right) := \mathcal{D}_\gamma\left( \rho_p \| \sigma_q \right).
\end{align}
Note that the quantity $\mathrm{d}_\gamma\left( p\|q \right)$ is independent of the choice of the basis $\left\{ |0\rangle, |1\rangle \right\}$.
Now we are ready to prove Proposition \ref{prop:sc}.

\begin{proof}[Proof of Proposition \ref{prop:sc}]
	Fix an arbitrary test $Q_n$ on $\mathcal{H}^{\otimes n}$.
	For notational convenience, we shorthand $\rho^n = \bar W_{\mathbf{x}^n}^{\otimes n}$, $\tau^n = \sigma^{\otimes n}$, $\alpha = \alpha(Q_n; \rho^n)$ and $\beta = \beta = (Q_n; \tau^n)$.
	Further, we assume $\beta(Q_n;\tau^{ n}) \leq  \mathrm{e}^{ -nR }$.
	From the definition of the classical divergence, Eqs.~\eqref{eq:hockey} and \eqref{eq:hockey_c}, and any $\gamma >0$, we find
	\begin{align}
	\mathrm{d}_\gamma (1- \alpha \| \beta)  &= \left( 1-\alpha  - \gamma \beta\right)_+ + \left(\alpha  - \gamma \left[ 1-\beta \right] \right)_+ \\
	&\geq 1 - \alpha  - \gamma \beta\\
	&\geq 1 - \alpha  - \gamma \mathrm{e}^{-nR}.  \label{eq:sc10}
	\end{align}
	On the other hand, DPI and the measurement map $\Tr[Q_n(\cdot)]|0\rangle\langle0| + (1-\Tr[Q_n(\cdot)])|1\rangle\langle 1|$ imply that
	\begin{align}
	 \mathcal{D}_\gamma \left( \rho^n \| \tau^n \right) \geq 
	\mathrm{d}_\gamma \left(\Tr[Q_n \rho^n] \| \Tr[Q_n\tau^n]\right) = \mathrm{d}_\gamma (1- \alpha \| \beta). \label{eq:sc9}
	\end{align}
	Hence, Eqs.~\eqref{eq:sc10} and \eqref{eq:sc9} lead to
	\begin{align}\label{eq:sc1}
	\alpha &\geq 1 - \mathcal{D}_\gamma \left( \rho^n \| \tau^n \right) - \gamma \mathrm{e}^{-nR}.
	\end{align}
	Since
	\begin{align}
	\mathcal{D}_\gamma \left( \rho^n \| \tau^n \right) 
	&= \Tr\left[ \left\{\rho^n - \gamma \tau^n \geq 0 \right\} \left( \rho^n - \gamma \tau^n \right) \right] \\
	&\leq \Tr\left[ \left\{ \rho^n - \gamma \tau^n \geq 0 \right\}  \rho^n \right], \label{eq:sc2}
	\end{align}
	continuing from Eq.~\eqref{eq:sc1} gives
	\begin{align}
	\alpha \geq 1 -\Tr\left[ \left\{ \rho^n - \gamma \tau^n \geq 0 \right\}  \rho_{^n} \right] - \gamma \mathrm{e}^{-nR}. \label{eq:sc3}
	\end{align}
	
	Next, invoking Lemma \ref{lemm:Chebyshev} below, for all $\log \gamma >  D\left( \rho^n\| \tau^n \right)$, we have
	\begin{align}
	\alpha &\geq 1 - \frac{ V\left(\rho^n\|\tau^n \right)
	}{ \left[\log \gamma - D\left(\rho^n\|\tau^n  \right) \right]^2}   - \gamma \mathrm{e}^{-nR}  \\
	&= 1 - \frac{ V\left( \bar{\mathscr{W}}\|\sigma | P_{\mathbf{x}^n} \right)
	}{ n \left[ \frac{\log \gamma}{n}  - D\left( \bar{\mathscr{W}}\|\sigma | P_{\mathbf{x}^n} \right) \right]^2}   - \gamma \mathrm{e}^{-nR}  \label{eq:sc3new}
	\end{align}
	Finally, recall $D \left( \bar{\mathscr{W}}\|\sigma | P_{\mathbf{x}^n} \right) \leq R - 2\kappa$
	and $A := \max_{\rho\in\mathscr{S}_\circ} V\left( \rho \| \sigma \right)$ and choose $\log \gamma =  n D \left( \bar{\mathscr{W}}\|\sigma | P_{\mathbf{x}^n} \right) + n\kappa$.
	Then, Eq.~\eqref{eq:sc3new} yields, for any test $Q_n$ and $\beta(Q_n; \sigma^{\otimes n})\leq  \mathrm{e}^{-nR}$,
	\begin{align}
	\alpha\left(Q_n; \bar{\mathscr{W}}_{\mathbf{x}^n}^{\otimes n}\right)
	&\geq 1 - \frac{V\left( \bar{\mathscr{W}}\|\sigma | P_{\mathbf{x}^n} \right)}{n\kappa^2} - \mathrm{e}^{-n\kappa} \\
	&\geq 1 - \frac{ A }{n\kappa^2} - \mathrm{e}^{-n\kappa},
	\end{align}
	which concludes the proof.
	
	\begin{lemm}[Quantum Chebyshev's Inequality {\cite[Lemma 6]{SW12}}] \label{lemm:Chebyshev}
		Let $\rho,\sigma \in \mathcal{S}(\mathcal{H})$ and assume $\log \gamma > D(\rho\|\sigma)$. Then
		\begin{align} \label{eq:Che}
		\Tr\left[ \rho \left\{ \rho - \gamma \sigma\geq 0  \right\} \right] \leq 
		\frac{ V(\rho\|\sigma) }{ \left[  \log \gamma - D(\rho\|\sigma)\right]^2}.
		\end{align}
	\end{lemm}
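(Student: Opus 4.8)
The plan is to transfer the statement to a genuinely classical Chebyshev inequality through the Nussbaum--Szko\l a mapping \cite{NS09}. First note that if $\rho\not\ll\sigma$ then $D(\rho\|\sigma)=+\infty$, so the hypothesis $\log\gamma>D(\rho\|\sigma)$ is vacuous; hence we may assume $\rho\ll\sigma$, and by restricting to $\mathrm{supp}(\sigma)$ we may assume $\sigma$ is invertible. Writing the spectral decompositions $\rho=\sum_{x}\lambda(x)\proj{f_x}$ and $\sigma=\sum_{y}\mu(y)\proj{g_y}$, introduce the Nussbaum--Szko\l a pair of distributions $P_{X,Y}(x,y):=\lambda(x)\,|\langle g_y|f_x\rangle|^2$ and the random variable $Z:=\log\lambda(X)-\log\mu(Y)$, exactly as in the second proof of Theorem~\ref{theo:achievability_HT}. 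The defining property of this mapping (see Eqs.~\eqref{eq:NSD}--\eqref{eq:NSV}) is that $\mathbb{E}_{P_{X,Y}}[Z]=D(\rho\|\sigma)$ and $\Var_{P_{X,Y}}[Z]=V(\rho\|\sigma)$.

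The core step is the comparison
\begin{align}\label{eq:che-core}
\Tr\!\left[\rho\,\{\rho-\gamma\sigma\geq0\}\right]\;\leq\;\Pr_{P_{X,Y}}\!\left\{\lambda(X)\geq\gamma\,\mu(Y)\right\}\;=\;\Pr_{P_{X,Y}}\!\left\{Z\geq\log\gamma\right\},
\end{align}
which says that the quantum Neyman--Pearson test spreads the log-likelihood ``mass'' no less favourably than the associated classical test. To establish it I would expand $\Tr[\rho\,\{\rho-\gamma\sigma\geq0\}]=\sum_{x}\lambda(x)\,\langle f_x|\{\rho-\gamma\sigma\geq0\}|f_x\rangle$ and, for each $x$, compare the diagonal element $\langle f_x|\{\rho-\gamma\sigma\geq0\}|f_x\rangle$ with the classical quantity $\langle f_x|\{\lambda(x)\mathds{1}-\gamma\sigma\geq0\}|f_x\rangle=\sum_{y:\,\mu(y)\leq\lambda(x)/\gamma}|\langle g_y|f_x\rangle|^2$, exploiting that $(\rho-\gamma\sigma)|f_x\rangle=(\lambda(x)\mathds{1}-\gamma\sigma)|f_x\rangle$ together with the structure of the overlap weights $|\langle g_y|f_x\rangle|^2$; summing against $\lambda(x)$ then produces precisely the right-hand side of \eqref{eq:che-core}. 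An alternative route is to pass through the hockey-stick divergence $\mathcal{D}_\gamma$ of \eqref{eq:hockey} and its data-processing inequality (recorded before Lemma~\ref{lemm:Chebyshev}), extracting the same bound. I expect this comparison to be essentially the only obstacle, since the spectral projection $\{\rho-\gamma\sigma\geq0\}$ is diagonal in neither eigenbasis, so one must genuinely use the common overlaps of the Nussbaum--Szko\l a construction (and possibly only after summing over $x$, rather than term by term).

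Granting \eqref{eq:che-core}, the conclusion is immediate from the classical Chebyshev inequality applied to $Z$: since $\log\gamma>D(\rho\|\sigma)=\mathbb{E}_{P_{X,Y}}[Z]$,
\begin{align}
\Pr_{P_{X,Y}}\!\left\{Z\geq\log\gamma\right\}\;\leq\;\Pr_{P_{X,Y}}\!\left\{|Z-\mathbb{E}_{P_{X,Y}}[Z]|\geq\log\gamma-D(\rho\|\sigma)\right\}\;\leq\;\frac{\Var_{P_{X,Y}}[Z]}{\big[\log\gamma-D(\rho\|\sigma)\big]^2}\;=\;\frac{V(\rho\|\sigma)}{\big[\log\gamma-D(\rho\|\sigma)\big]^2},
\end{align}
which is exactly the claimed bound. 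Everything apart from \eqref{eq:che-core} — the reduction to $\rho\ll\sigma$, the invocation of the Nussbaum--Szko\l a identities, and the final display — is routine.
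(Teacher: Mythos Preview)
First, note that the paper does not supply its own proof of this lemma; it is quoted from \cite[Lemma~6]{SW12} and invoked as a black box inside the proof of Proposition~\ref{prop:sc}. So there is no in-paper argument to compare against, and your proposal must stand on its own.

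On its own merits, your argument has a genuine gap: the comparison inequality \eqref{eq:che-core} is false in general. Take $\rho=\proj{+}$ with $\ket{+}=\tfrac{1}{\sqrt2}(\ket{0}+\ket{1})$ and $\sigma=\tfrac34\proj{0}+\tfrac14\proj{1}$, and set $\gamma=2.5$ (so $\log\gamma\approx0.916>D(\rho\|\sigma)\approx0.837$, and the hypothesis of the lemma is met). The Nussbaum--Szko\l a distribution is supported on $(X,Y)\in\{(+,0),(+,1)\}$ with equal weight $\tfrac12$, and $\lambda(+)/\mu(Y)\in\{4/3,\,4\}$, so $\Pr_{P_{X,Y}}\{\lambda(X)\geq\gamma\,\mu(Y)\}=\tfrac12$. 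On the other hand a direct computation shows that $\rho-\gamma\sigma$ has a unique positive eigenvalue with normalized eigenvector $\ket{v_+}\approx 0.33\,\ket{0}+0.94\,\ket{1}$, whence $\Tr\!\left[\rho\,\{\rho-\gamma\sigma\geq0\}\right]=|\langle +|v_+\rangle|^2\approx0.81>\tfrac12$. Thus \eqref{eq:che-core} fails. Your sketched justification cannot be repaired along the lines you indicate: the identity $(\rho-\gamma\sigma)\ket{f_x}=(\lambda(x)\mathds{1}-\gamma\sigma)\ket{f_x}$ says nothing about the spectral projector $\{\rho-\gamma\sigma\geq0\}$, which is determined by the full eigenbasis of $\rho-\gamma\sigma$ and is diagonal in neither the $\{\ket{f_x}\}$ nor the $\{\ket{g_y}\}$ basis. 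The hockey-stick route does not close the gap either: data processing controls $\mathcal{D}_\gamma(\rho\|\sigma)=\Tr[(\rho-\gamma\sigma)_+]$, but $\Tr[\rho\,\{\rho-\gamma\sigma\geq0\}]=\mathcal{D}_\gamma(\rho\|\sigma)+\gamma\,\Tr[\sigma\,\{\rho-\gamma\sigma\geq0\}]$, and the second term is not bounded in the direction you need.
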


\end{proof}

\section{A Sharp Converse Bound from Strong Large Deviation}  \label{app:sharp}
In this section, we provide the proof of Proposition \ref{prop:strong}.
Our technique highly relies on a strong large deviation inequality.

\subsection{A Strong Large Deviation Inequality}
Let $\left(X_i\right)_{i\in\mathbb{N}}$ be a sequence of independent, real-valued random variables with probability measures $\left(\mu_i\right)_{i=1}^n$. Let
$Z_n := \sum_{i=1}^n X_i$ and let
 $\Lambda_n(t) :=  \log \mathbb{E}\left[\mathrm{e}^{t Z_n}\right]$. Define the Legendre-Fenchel transform of $\frac1n \Lambda_n(\cdot)$ by:
\begin{align} \label{eq:LF}
\Lambda_n^*(z) := \sup_{t \in \mathbb{R}} \left\{ zt - \frac1n  \Lambda_n(t) \right\}, \quad \forall z\in\mathbb{R}.
\end{align}
Let $\left(T_n\right)_{n\in\mathbb{N}}$ be a bounded sequence of real numbers and $\left(t_n^\star\right)_{n\in\mathbb{N}}$ be a sequence satisfying for all $n\in\mathbb{N}$
\begin{align}
t_n^\star &\in(0,1);\\
	T_n &=\frac1n \Lambda_n'(t_n^\star); \\
	\Lambda_n^*(T_n) &= T_n t_n^\star - \frac1n \Lambda_n (t_n^\star). \label{eq:Rao1}
\end{align}
With these definitions, we can now state the following sharp concentration inequality for $\frac1n Z_n$:

\begin{theo}[Chaganty-Sethuraman's Concentration Inequality {\cite[Theorem 3.3]{CS93}} ] \label{theo:strong}
For any $\eta\in(0,1)$, there exists an $N_0\in\mathbb{N}$ such that, for all $n\geq N_0$,
\begin{align}
\Pr\left\{\frac1n Z_n \geq T_n, \right\} \geq \frac{1-\eta}{  t_n^\star \sqrt{2\pi n m_{2,n}} } \exp\{ -n\Lambda_n^\star(T_n)  \},
\end{align}
where  $m_{2,n} := \frac1n \sum_{i=1}^n \Var_{\tilde{\mu}_{n,i}}\left[{X}_i\right]$, and the measure $\tilde{\mu}_{n,i}$ is defined via
\begin{align}
\frac{\mathrm{d}\tilde{\mu}_{n,i}}{\mathrm{d}\mu_i} (y) := \frac{\mathrm{e}^{y t_n^\star}}{ \mathbb{E}\left[\mathrm{e}^{t_n^\star X_i}\right]}.
\end{align}
\end{theo}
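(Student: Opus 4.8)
The plan is to prove Theorem~\ref{theo:strong} along the lines of \cite{CS93}, via an exponential change of measure (Esscher tilting) followed by a \emph{local} limit theorem for the tilted sum. Write $\tilde{P}_n := \prod_{i=1}^n \tilde{\mu}_{n,i}$ for the tilted product law and put $W_n := Z_n - n T_n$. By the definition of $\tilde{\mu}_{n,i}$, the Radon--Nikodym derivative of $\tilde{P}_n$ with respect to $\prod_{i=1}^n \mu_i$ is $\exp\{t_n^\star Z_n - \Lambda_n(t_n^\star)\}$; the stationarity condition $T_n = \tfrac1n \Lambda_n'(t_n^\star)$ makes $\mathbb{E}_{\tilde{P}_n}[W_n] = 0$, and $\Var_{\tilde{P}_n}[W_n] = \sum_{i=1}^n \Var_{\tilde{\mu}_{n,i}}[X_i] = n\, m_{2,n}$. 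Changing measure and then inserting the Legendre identity~\eqref{eq:Rao1},
\begin{align}
\Pr\left\{ \tfrac1n Z_n \geq T_n \right\}
&= \mathbb{E}_{\tilde{P}_n}\!\left[ \mathrm{e}^{-t_n^\star Z_n + \Lambda_n(t_n^\star)}\,\mathbf{1}\{W_n \geq 0\} \right] \\
&= \mathrm{e}^{-n \Lambda_n^*(T_n)}\,\mathbb{E}_{\tilde{P}_n}\!\left[ \mathrm{e}^{-t_n^\star W_n}\,\mathbf{1}\{W_n \geq 0\} \right].
\end{align}
It therefore suffices to show $\mathbb{E}_{\tilde{P}_n}[\mathrm{e}^{-t_n^\star W_n}\mathbf{1}\{W_n \geq 0\}] \geq (1-\eta)/(t_n^\star\sqrt{2\pi n\,m_{2,n}})$ for all sufficiently large $n$.

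Next I would use the fact that, under the regularity hypotheses of \cite{CS93} (a Cram\'er-type non-lattice condition on the characteristic functions together with uniform moment bounds, which in particular keep $t_n^\star$ in a fixed compact subinterval of $(0,1)$ and $m_{2,n}$ bounded away from $0$ and $\infty$), $W_n$ admits for large $n$ a density $\tilde{p}_n$ under $\tilde{P}_n$ obeying a \emph{uniform local limit theorem}:
\begin{align}
\lim_{n\to\infty}\ \sup_{w\in\mathbb{R}} \left| \sqrt{n\,m_{2,n}}\;\tilde{p}_n(w) - \varphi\!\left( \frac{w}{\sqrt{n\,m_{2,n}}} \right) \right| = 0,
\end{align}
with $\varphi$ the standard normal density (in the lattice case one argues identically with the point masses on the lattice). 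Fix $L>0$ large enough, depending on $\eta$, that $1-\mathrm{e}^{-L} \geq 1-\eta/2$, and restrict the integral to $[0,L/t_n^\star]$. Since $t_n^\star$ is bounded below and $m_{2,n}$ above, $L/(t_n^\star\sqrt{n\,m_{2,n}}) \to 0$, so on this window $\varphi(w/\sqrt{n\,m_{2,n}}) \geq (1-o(1))/\sqrt{2\pi}$, and the local limit theorem then gives $\tilde{p}_n(w) \geq (1-o(1))/\sqrt{2\pi n\,m_{2,n}}$ uniformly there.

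Combining these,
\begin{align}
\mathbb{E}_{\tilde{P}_n}\!\left[ \mathrm{e}^{-t_n^\star W_n}\mathbf{1}\{W_n \geq 0\} \right]
&\geq \int_0^{L/t_n^\star} \mathrm{e}^{-t_n^\star w}\,\tilde{p}_n(w)\,\mathrm{d}w
\;\geq\; \frac{1-o(1)}{\sqrt{2\pi n\,m_{2,n}}}\int_0^{L/t_n^\star}\mathrm{e}^{-t_n^\star w}\,\mathrm{d}w \\
&= \frac{(1-o(1))\,(1-\mathrm{e}^{-L})}{t_n^\star\sqrt{2\pi n\,m_{2,n}}}
\;\geq\; \frac{1-\eta}{t_n^\star\sqrt{2\pi n\,m_{2,n}}}
\end{align}
for all $n$ exceeding some $N_0 = N_0(\eta)$, which is the assertion. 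The main obstacle is the uniform local limit theorem in the second display: one must establish Gaussian approximation for the \emph{density} (not merely the distribution function) of a triangular array of independent, non-identically distributed, exponentially tilted variables, with the error uniform in $w$ and uniform over the admissible parameter pairs $(t_n^\star,T_n)$. A mere Berry--Esseen bound on the distribution function does not suffice, because near $w=0$ the quantity $\tilde{P}_n\{0<W_n\leq w\}$ is itself only of order $w/\sqrt{n\,m_{2,n}}$, so an additive $O(n^{-1/2})$ error would swamp the main term once $t_n^\star$ stays bounded away from $0$. A secondary technical point is verifying, from the hypotheses of \cite{CS93}, that $t_n^\star$ is indeed bounded away from $0$ and $1$ and $m_{2,n}$ away from $0$ and $\infty$, so that all the $o(1)$ corrections above are genuinely negligible against the $1/(t_n^\star\sqrt{n})$ prefactor.
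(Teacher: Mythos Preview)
The paper does not supply its own proof of Theorem~\ref{theo:strong}: the result is quoted from \cite[Theorem~3.3]{CS93}, and the only argument the paper gives is the remark following the statement, which checks that the characteristic-function decay hypothesis of \cite{CS93} is met when $Z_n$ is a sum of independent random variables (because the modulus of a non-degenerate characteristic function is strictly less than $1$ on any compact set not containing the origin, so the product of $n$ of them decays geometrically). Your sketch---Esscher tilting followed by a uniform local limit theorem for the tilted sum, then truncating the Laplace integral to a window $[0,L/t_n^\star]$---is exactly the mechanism of \cite{CS93}, so there is no methodological difference to report.

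One caution on the execution. You write ``since $t_n^\star$ is bounded below \ldots\ $L/(t_n^\star\sqrt{n\,m_{2,n}})\to 0$'', and later list as a secondary point that $t_n^\star$ being bounded away from $0$ must be extracted from the hypotheses of \cite{CS93}. But the hypotheses as stated in the paper only give $t_n^\star\in(0,1)$, and in the paper's intended application (moderate deviations, Proposition~\ref{prop:strong} combined with Proposition~\ref{prop:spCh}) one actually has $t_n^\star\to 0$. What rescues the argument is not a fixed lower bound on $t_n^\star$ but rather $t_n^\star\sqrt{n\,m_{2,n}}\to\infty$: this still forces the normalised window $[0,L/(t_n^\star\sqrt{n\,m_{2,n}})]$ to shrink to the origin, so $\varphi(w/\sqrt{n\,m_{2,n}})\geq (1-o(1))/\sqrt{2\pi}$ remains valid there, and the rest of your computation goes through unchanged. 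You should phrase the window argument under this weaker condition rather than assuming $t_n^\star$ bounded below; otherwise the proof does not cover the regime the paper cares about.
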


\begin{remark}
	Chaganty and Sethuraman in Ref.~{\cite[Theorem 3.3]{CS93}} considered a more general sequence of random variables $\{Z_n\}_{n\in\mathbb{N}}$, which are not necessarily the sum of random variables.
	They proved Theorem~\ref{theo:strong} provided that the following condition is satisfied: there exists $\delta_0>0$ such that for any $\delta$ and $\lambda$ with $0<\delta<\delta_0<\lambda$,
	$\sup_{\delta<|t|\leq \lambda t_n^\star} \left| \Lambda_n(t_n^\star + \mathrm{i} t) / \Lambda_n(t_n^\star) \right| = o(1/\sqrt{n})$, where the supremum is defined to be $0$ if $\{t:\delta<|t|\leq \lambda t_n^\star\}$ is empty.
	In the case of $Z_n$ being a sum of random variables, $ \Lambda_n(t_n^\star + \mathrm{i} t) / \Lambda_n(t_n^\star) $ is the product of the characteristic functions of $\{X_i\}_{i=1}^n$.
	Since the supremum of a characteristic function on a compact interval not containing $0$ is less than $1$, this condition is thus satisfied.
	
	We note that the lower bound in Theorem~\ref{theo:strong} for the general sequence of random variables $\left(X_i\right)_{i\in\mathbb{N}}$ suffices to establish the converse, Theorem~\ref{theo:converse}.
	We do not particularly consider the case of lattice valued random variables (see e.g.~~{\cite[Theorem 3.5]{CS93}}).
\end{remark}
\subsection{Proof of Proposition~\ref{prop:strong}}
\begin{prop3} [A Sharp Converse Bound] 
Consider a classical-quantum channel $\mathscr{W}:\mathcal{X}\to\mathcal{S(H)}$ and a state $\sigma\in\mathcal{S(H)}$. Suppose the sequence $\mathbf{x}^n\in\mathcal{X}^n$ satisfies	\begin{align} \label{eq:cond_V22}
\nu \leq V\left( \mathscr{W} \| \sigma | P_{\mathbf{x}^n} \right) < +\infty
\end{align}
for some $\nu > 0$, and suppose the sequence of rates $(R_n)_{n\in\mathbb{N}}$ satisfies  $ D_0(\mathscr{W}\|\sigma|P_{\mathbf{x}^n})  < R_n  < D(\mathscr{W}\|\sigma|P_{\mathbf{x}^n})$. Then, there exists an $N_0\in\mathbb{N}$ such that, for all $n\geq N_0$,
\begin{align}  
\widehat{\alpha}_{\exp\{-nR_n\}}(W_{\mathbf{x}^n}^{\otimes n}\| \sigma^{\otimes n}) 
\geq 
\frac{A}{s_n^\star\sqrt{n}}	\exp\left\{ 	-n  E_\textnormal{sp}^{(2)} \left(R_n - c_n ,P_{\mathbf{x}^n}, \sigma \right) \right\},	
\end{align}
where $c_n = \frac{K\log n}{n}$ and $A,K>0$ are finite constants independent of the sequence $\mathbf{x}^n$, and 
\begin{align}
s_n^\star := \argmax_{s\geq 0} \left\{ E_\textnormal{h}(s,P_{\mathbf{x}^n}, \sigma) - s R_n
\right\}. \label{eq:s_n1}
\end{align}
\end{prop3}

\begin{proof}[Proof of Proposition~\ref{prop:strong}]
Let $\rho^n := W_{\mathbf{x}^n}^{\otimes n}$, $\sigma^n := \sigma^{\otimes n}$, 
${p}^n := \bigotimes_{i=1}^n p_{x_i}$, and ${q}^n := \bigotimes_{i=1}^n q_{x_i}$, where $p_{x_i},q_{x_i}$ are Nussbaum-Szko{\l}a distributions \cite{NS09} of $W_{x_i},\sigma$ for every $i\in [n]$. 
Let $\tilde{R}_n := R_n - \gamma_n$, where $\gamma_n := \frac{\log n}{2n} + \frac{x}{n}$ for some $x\in\mathbb{R}$. The choice of $x$ and the rate back-off term $\gamma_n$ will become evident later. 
Let $N_1 \in\mathbb{N}$ such that $\tilde{R}_n \geq D_0(\mathscr{W}\|\sigma|P_{\mathbf{x}^n})$ for all $n\geq N_1$.
Subsequently, we choose such $n\geq N_1$ onwards.	
	
Since $D_\alpha(W_{x_i}\|\sigma) = D_\alpha(p_{x_i}\|q_{x_i})$, for $\alpha\in(0,1]$, we use the notation 
\begin{align}\phi_n(\tilde{R}_n) :=  
E_\text{sp}^{(2)}(\tilde{R}_n, P_{\mathbf{x}^n}, \sigma) =
\sup_{0<\alpha\leq 1} \frac{1-\alpha}{\alpha}\left(  \sum_{x\in\mathcal{X}}P_{\mathbf{x}^n}(x)D_\alpha(p_{x_i}\|q_{x_i}) - \tilde{R}_n \right),
\end{align}
where $P_{\mathbf{x}^n}$ denotes the empirical distribution of $\mathbf{x}^n = x_1,\ldots x_n$.
Moreover, the condition in Eq.~\eqref{eq:cond_V22} implies that $W_x\ll \sigma$, for all $x\in \texttt{supp}(P_{\mathbf{x}^n})$, and thus $p^n \ll q^n$. Without loss of generality,	we let $q_{x_i}(\omega) = 0$, $\omega \not\in \texttt{supp}(p_{x_i})$ since they won't contribute to $\phi_n(\tilde{R}_n)$.

We apply Nagaoka's argument \cite{Nag06}: for any $0\leq Q_n\leq \mathds{1}$, choosing $\delta = \exp\{n\tilde{R}_n-n\phi_n(\tilde{R}_n)\}$ yields:
	\begin{align}  \label{eq:sharp15}
	\alpha\left(Q_n; \rho^n \right) + \delta \beta\left(Q_n; \sigma^n \right)
	\geq  \frac12 \left( \alpha\left(\mathscr{U};p^n \right) + \mathrm{e}^{n\tilde{R}_n-n\phi_n(\tilde{R}_n)} \beta\left( \mathscr{U};q^n\right) \right),
	\end{align}
	where
	\begin{align}
	\alpha\left( {\mathscr{U}};  {p}^n \right) := \sum_{\omega\in {\mathscr{U}}^\mathrm{c}}  {p}^n(\omega); \quad
	\beta\left( {\mathscr{U}}; {q}^n\right) := \sum_{\omega\in {\mathscr{U}}}  {q}^n(\omega),	
	\end{align}
	and  
	\begin{align} \label{eq:sharp10}
	{\mathscr{U}} &:= \left\{  \omega:  p^n(\omega)\mathrm{e}^{ n{\phi}_n\left( {\tilde{R}_n}\right)} >  q^n(\omega) \mathrm{e}^{ n{\tilde{R}_n}}  \right\}.
	\end{align}	
	
In the following, we will employ Theorem \ref{theo:strong}, to further lower bound $\alpha\left( {\mathscr{U}};  {p}^n \right)$ and $\beta\left( {\mathscr{U}}; {q}^n\right)$. Before proceeding, we need to introduce some notation. Define the \emph{tilted distributions}, for every $i\in[n]$ and $t\in[0,1]$, to be
\begin{align}
\hat{q}_{{x_i},t}(\omega) := \frac{  {p}_{x_i}(\omega)^{1-t}  {q}_{x_i}(\omega)^{t} }{ \sum_{\omega\in \texttt{supp}(p_{x_i})}  {p}_{x_i}(\omega)^{1-t}  {q}_{x_i}(\omega)^{t} }, \quad \omega \in \texttt{supp}(p_{x_i}).
\end{align}
Let
\begin{align}
\Lambda_{0,{x_i}} (t) := \log \mathbb{E}_{ {p}_{x_i}} \left[ \mathrm{e}^{t \log \frac{ {q}_{x_i}}{ {p}_{x_i}} } \right], \quad
\Lambda_{1,{x_i}} (t) := \log \mathbb{E}_{ {q}_{x_i}} \left[ \mathrm{e}^{t \log \frac{ {p}_{x_i}}{ {q}_{x_i}} } \right], \label{eq:zero_derivative}	
\end{align}
Since $p^n$ and $q^n$ share the same support, it can be verified that the maps $t\mapsto \Lambda_{j,x_i}(t)$, $j\in\{0,1\}$ are differential  for all $t\in[0,1]$. One can immediately verify the following partial derivatives with respect to~$t$:
\begin{align}
\Lambda'_{0,{x_i}} (t) = \mathbb{E}_{\hat{q}_{{x_i},t}} \left[ \log \frac{ {q}_{x_i}}{ {p}_{x_i}}  \right], &\quad \Lambda'_{1,{x_i}} (t)  = \mathbb{E}_{\hat{q}_{{x_i},1-t}} \left[ \log \frac{ {p}_{x_i}}{ {q}_{x_i}}  \right]; \label{eq:first_derivative}\\
\Lambda''_{0,{x_i}} (t) = \Var_{\hat{q}_{{x_i},t}} \left[ \log \frac{ {q}_{x_i}}{ {p}_{x_i}}  \right], &\quad \Lambda''_{1,{x_i}} (t) = \Var_{\hat{q}_{{x_i},1-t}} \left[ \log \frac{ {p}_{x_i}}{ {q}_{x_i}}  \right]. \label{eq:second_derivatie} 
\end{align}
Note that Eqs.~\eqref{eq:zero_derivative}, \eqref{eq:first_derivative}, and \eqref{eq:second_derivatie} ensure that 
\begin{align} 
\Lambda_{0,x_i}(t) &= \Lambda_{1,x_i} (1-t), \quad
\Lambda_{0,x_i}'(t) = -\Lambda_{1,x_i}' (1-t),\quad
\Lambda_{0,x_i}''(t) = \Lambda_{1,x_i}'' (1-t) \label{eq:sym0}.
\end{align}
With $\Lambda_{j,{x_i}} (t)$ in Eq.~(\ref{eq:zero_derivative}), we can define 
\begin{align}
&{\Lambda}_{j,P_{\mathbf{x}^n}} (t) := \sum_{x\in \mathcal{X} } P_{\mathbf{x}^n} (x) \Lambda_{j,x}(t), \quad\quad\;
j\in\{0,1\}; \label{eq:FL0}\\
	&\Lambda_{j,P_{\mathbf{x}^n}}^*(z) := \sup_{t\in\mathbb{R}} \left\{ tz - {\Lambda}_{j,P_{\mathbf{x}^n}}(t)  \right\},
	\quad j\in\{0,1\}, \label{eq:FL}
	\end{align}
where  $\Lambda_{j,P_{\mathbf{x}^n} }^*(z)$ in Eq.~(\ref{eq:FL}) are the \emph{Legendre-Fenchel transform} of ${\Lambda_{j,P_{\mathbf{x}^n}}(t)}$. The quantities $\Lambda_{j,P_{\mathbf{x}^n} }^*(z)$ would appear in the lower bounds of $\alpha\left( {\mathscr{U}};  {p}^n \right)$ and $\beta\left( {\mathscr{U}}; {q}^n\right)$ obtained by Theorem~\ref{theo:strong} as shown later.

In the following, we will relate the Legendre-Fenchel transform $\Lambda_{j,P_n}^* (z)$ to the desired error-exponent function $\phi_n(\tilde{R}_n)$.	Such a relationship is stated in the following lemma whose proof was presented in \cite{CHT16b}.
\begin{lemm}[{\cite[Lemma 17]{CHT16b}}] \label{lemm:regularity}
The following holds for all sequences $\mathbf{x}^n$ satisfying Eq.~\eqref{eq:cond_V22} and all $r\in\left(D_0(\mathscr{W}\|\sigma|P_{\mathbf{x}^n}), D(\mathscr{W}\|\sigma|P_{\mathbf{x}^n})\right)$:
\begin{enumerate}[(a)]			
\item\label{regularity-a} 
$\Lambda_{0,P_{\mathbf{x}^n}} '' (t) >0$ for all $t\in[0,1]$.
\item\label{regularity-b} $\Lambda^*_{0,P_{\mathbf{x}^n} } \left(  {\phi}_n({r}) - {r} \right) =  {\phi}_n(r)$.
		
\item\label{regularity-c} $\Lambda^*_{1,P_{\mathbf{x}^n} } \left( r -  {\phi}_n({r}) \right) = r$.
		
\item\label{regularity-d} 
Let $s^\star $ be the optimizer of $E_\textnormal{sp}^{(2)}(r,P_{\mathbf{x}^n},\sigma)$, c.f.~\eqref{eq:s_n1}.
The optimizer of $\Lambda^*_{0,P_{\mathbf{x}^n}}(z)$, denoted by $t^\star$, is unique and satisfies $t^\star = \frac{s^\star}{1+s^\star} \in (0,1)$ and ${\Lambda}'_{0,P_{\mathbf{x}^n} }(t^\star) =  {\phi}_n(r) - r$.  
\end{enumerate}
\end{lemm}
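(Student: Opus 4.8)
The plan for proving Lemma~\ref{lemm:regularity} is to reduce all four assertions to convex duality after a single change of variables $t=s/(1+s)$ (equivalently $\alpha=\tfrac{1}{1+s}=1-t$), which converts the per-letter cumulant functions into the auxiliary functions. Concretely, on $\textnormal{\texttt{supp}}(p_x)$ one has $\Lambda_{0,x}(t)=\log\sum_\omega p_x(\omega)^{1-t}q_x(\omega)^{t}=-t\,D_{1-t}(p_x\|q_x)$ for $t\in[0,1)$, so averaging against $P_{\mathbf{x}^n}$ gives
\begin{align}\label{eq:reg_cov}
\Lambda_{0,P_{\mathbf{x}^n}}(t)=-(1-t)\,E_\textnormal{h}\!\left(\tfrac{t}{1-t},P_{\mathbf{x}^n}\right),\qquad t\in[0,1),
\end{align}
while the symmetry $\Lambda_{1,P_{\mathbf{x}^n}}(t)=\Lambda_{0,P_{\mathbf{x}^n}}(1-t)$ is already recorded in \eqref{eq:sym0}. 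Combined with $\phi_n(r)=E_\textnormal{sp}^{(2)}(r,P_{\mathbf{x}^n},\sigma)=\sup_{s\geq0}\{E_\textnormal{h}(s,P_{\mathbf{x}^n})-sr\}$ and the derivative and concavity properties of $E_\textnormal{h}$ from Proposition~\ref{prop:prop_h}, this identity will drive everything.

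For part~\ref{regularity-a} I would use \eqref{eq:second_derivatie}: $\Lambda_{0,x}''(t)=\Var_{\hat{q}_{x,t}}[\log(q_x/p_x)]\geq0$, which vanishes only when $\log(q_x/p_x)$ is $\hat{q}_{x,t}$-a.s.\ constant; since $\hat{q}_{x,t}$ has full support on $\textnormal{\texttt{supp}}(p_x)$ for every $t\in[0,1]$, this forces $q_x\propto p_x$ on $\textnormal{\texttt{supp}}(p_x)$, i.e.\ $V(W_x\|\sigma)=0$. Because $\nu\leq V(\mathscr{W}\|\sigma|P_{\mathbf{x}^n})=\sum_xP_{\mathbf{x}^n}(x)V(W_x\|\sigma)$, at least one $x$ in $\textnormal{\texttt{supp}}(P_{\mathbf{x}^n})$ contributes strictly, whence $\Lambda_{0,P_{\mathbf{x}^n}}''(t)=\sum_xP_{\mathbf{x}^n}(x)\Lambda_{0,x}''(t)>0$ for all $t\in[0,1]$; equivalently, differentiating \eqref{eq:reg_cov} twice gives $\Lambda_{0,P_{\mathbf{x}^n}}''(t)=-(1+s)^3\,\partial_s^2E_\textnormal{h}(s,P_{\mathbf{x}^n})$, so part~\ref{regularity-a} is just the strict concavity of $s\mapsto E_\textnormal{h}$ away from the degenerate case.

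Next I would locate the saddle point. For $r\in\big(D_0(\mathscr{W}\|\sigma|P_{\mathbf{x}^n}),D(\mathscr{W}\|\sigma|P_{\mathbf{x}^n})\big)$, Proposition~\ref{prop:prop_h} gives that $s\mapsto E_\textnormal{h}(s,P_{\mathbf{x}^n})$ is concave with $\partial_sE_\textnormal{h}|_{s=0}=D(\mathscr{W}\|\sigma|P_{\mathbf{x}^n})>r$ and $E_\textnormal{h}(s,P_{\mathbf{x}^n})/s\to D_0(\mathscr{W}\|\sigma|P_{\mathbf{x}^n})<r$, so $s\mapsto E_\textnormal{h}(s,P_{\mathbf{x}^n})-sr$ has a unique interior maximiser $s^\star\in(0,\infty)$ (uniqueness by part~\ref{regularity-a}), with value $\phi_n(r)$ and $\partial_sE_\textnormal{h}(s^\star,P_{\mathbf{x}^n})=r$; put $t^\star:=s^\star/(1+s^\star)\in(0,1)$. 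For part~\ref{regularity-b}, inserting \eqref{eq:reg_cov} into $\Lambda_{0,P_{\mathbf{x}^n}}^*(\phi_n(r)-r)=\sup_t\{t(\phi_n(r)-r)-\Lambda_{0,P_{\mathbf{x}^n}}(t)\}$ and substituting $t=s/(1+s)$ turns this into $\sup_{s\geq0}\tfrac{1}{1+s}\big\{s\,\phi_n(r)+\big(E_\textnormal{h}(s,P_{\mathbf{x}^n})-sr\big)\big\}$, which equals $\phi_n(r)$ because $E_\textnormal{h}(s,P_{\mathbf{x}^n})-sr\leq\phi_n(r)$ with equality at $s^\star$. Part~\ref{regularity-c} then follows from the symmetry \eqref{eq:sym0}, which yields $\Lambda_{1,P_{\mathbf{x}^n}}^*(z)=z+\Lambda_{0,P_{\mathbf{x}^n}}^*(-z)$; taking $z=r-\phi_n(r)$ and applying part~\ref{regularity-b} gives $\Lambda_{1,P_{\mathbf{x}^n}}^*(r-\phi_n(r))=r$. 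For part~\ref{regularity-d}, by strict convexity (part~\ref{regularity-a}) the optimiser in $\Lambda_{0,P_{\mathbf{x}^n}}^*(\phi_n(r)-r)$ is the unique $t$ with $\Lambda_{0,P_{\mathbf{x}^n}}'(t)=\phi_n(r)-r$; differentiating \eqref{eq:reg_cov} gives $\Lambda_{0,P_{\mathbf{x}^n}}'(t)=E_\textnormal{h}(s,P_{\mathbf{x}^n})-(1+s)\,\partial_sE_\textnormal{h}(s,P_{\mathbf{x}^n})$ with $s=t/(1-t)$, which equals $\phi_n(r)-r$ exactly at $s=s^\star$, i.e.\ at $t=t^\star=s^\star/(1+s^\star)$.

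The argument is essentially bookkeeping: the only step needing genuine care is part~\ref{regularity-a}, which is where the hypothesis $V(\mathscr{W}\|\sigma|P_{\mathbf{x}^n})\geq\nu>0$ is really used and where one must check that the tilted laws $\hat{q}_{x,t}$ keep full support on $\textnormal{\texttt{supp}}(p_x)$ at the endpoints $t=0,1$, together with keeping the three parametrisations $\alpha=\tfrac{1}{1+s}$, $s$, $t=\tfrac{s}{1+s}$—and the matching Rényi orders $D_\alpha$, $D_{1/(1+s)}$, $D_{1-t}$—straight at each step. I expect no essential difficulty beyond this.
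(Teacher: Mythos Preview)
Your proof is correct. Note, however, that the paper does not actually supply its own proof of this lemma: it is quoted verbatim as \cite[Lemma 17]{CHT16b} and the text explicitly defers to that reference (``the following lemma whose proof was presented in \cite{CHT16b}''). So there is no in-paper argument to compare against.

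That said, your route---the change of variables $t=s/(1+s)$ yielding the identity $\Lambda_{0,P_{\mathbf{x}^n}}(t)=-(1-t)\,E_\textnormal{h}(t/(1-t),P_{\mathbf{x}^n})$, followed by standard Legendre duality---is exactly the natural one and almost certainly what \cite{CHT16b} does as well. Two small points worth tightening: (i) the Legendre--Fenchel transform in Eq.~\eqref{eq:FL} is a supremum over $t\in\mathbb{R}$, whereas your substitution only sweeps $t\in[0,1)$; this is harmless because $\Lambda_{0,P_{\mathbf{x}^n}}$ is convex on all of $\mathbb{R}$ (it is a cumulant generating function) and you have already located the unique critical point at $t^\star\in(0,1)$, so the global supremum is attained there---but it is worth saying so explicitly. (ii) In part~\ref{regularity-a}, your observation that $\Lambda_{0,x}''(t)=0$ for \emph{some} $t$ forces $q_x\propto p_x$ on $\texttt{supp}(p_x)$ (hence $\Lambda_{0,x}''\equiv 0$) is the right dichotomy; it relies on $p_x$ and $q_x$ sharing the same support, which the paper arranges just above the lemma by zeroing $q_{x_i}$ outside $\texttt{supp}(p_{x_i})$ and using $W_x\ll\sigma$ to get $p_x\ll q_x$. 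With those two clarifications your argument is complete.
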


Since the item \ref{regularity-d} in Lemma \ref{lemm:regularity} shows that the optimizer $t$ in Eq.~\eqref{eq:FL} always lies in the compact set $[0,1]$, by invoking Eq.~\eqref{eq:sym0} we define the following quantity:
\begin{align}
V_{\min} (\nu) &:= \min_{t \in[0,1], \,P_{\mathbf{x}^n} \in \mathscr{P}_{\nu}(\mathcal{X}) }  {\Lambda}''_{0,P_n}(t),\label{eq_mhvmin} 
\end{align}
where $\mathscr{P}_{\nu}(\mathcal{X}) := \left\{ P_{\mathbf{x}^n} \in\mathscr{P}(\mathcal{X}): \nu \leq V\left(\mathscr{W}\|\sigma|P_{\mathbf{x}^n}\right) < +\infty  \right\}$ is a compact set owing to the continuity of the map $P\mapsto V\left(\mathscr{W}\|\sigma|P\right)$; see Eq.~\eqref{eq:V_cond}.
	
Further,  from the definitions in Eqs,~\eqref{eq:second_derivatie}, $\Lambda_{0,(\cdot)}''(\cdot)$ is continuous functions in $[0,1]\times \mathscr{P}(\mathcal{X})$. The minimization in the above definitions are well-defined and finite. Further, the quantity $V_{\min}(\nu)$ is bounded away from zero owing to item \ref{regularity-a} in Lemma \ref{lemm:regularity}.

	Now, we are ready to derive the lower bounds to $\alpha\left({\mathscr{U}};  p^n \right)$ and $\beta\left({\mathscr{U}}; q^n \right)$.
	Fix an arbitrary $\eta\in(0,1)$.
	Applying Theorem~\ref{theo:strong} to  $X_i = \log  {q}_i - \log  {p}_i$ with probability measure $ {p}_i$, and threshold $ T_n = {\tilde{R}_n} -  {\phi}_n( {\tilde{R}_n}) $ gives, for all sufficiently large $n$, say $n\geq N_2\in\mathbb{N}$,
	\begin{align}
	\alpha\left( {\mathscr{U}};  {p}^n \right) &:= \sum_{\omega\in {\mathscr{U}}^\mathrm{c}}  {p}^n(\omega)  \\
	&= \Pr\left\{ \frac1n\sum_{i=1}^n Z_i \geq  \tilde{R}_n - {\phi}_n( {
		\tilde{R}_n})  \right\} \\	
	&\geq \frac{1-\eta}{t_n^\star\sqrt{2\pi n V_{\min}(\nu)}} \exp\left\{  -n \Lambda^*_{0,P_{\mathbf{x}^n} } \left(  {\phi}_n( {\tilde{R}_n}) -  {\tilde{R}_n} \right)  \right\}, \label{eq:sharp12mh}
	\end{align}
	where
	\begin{align}
	t_n^\star := \argmax_{t\in\mathbb{R}} \left\{ tz_n   - \Lambda_{0,P_{\mathbf{x}^n}}(t)   \right\}
	\end{align}
	Similarly, applying again Theorem~\ref{theo:strong} to $X_i = \log  {p}_i - \log  {q}_i$ with probability measure $=  {q}_i$, and threshold $ {\phi}_n( {\tilde{R}_n}) -  {\tilde{R}_n}$ yields, for all sufficiently large $n$, say $n\geq N_3\in\mathbb{N}$,
	\begin{align}
	\beta\left( {\mathscr{U}}; {q}^n\right) &:= \sum_{\omega\in {\mathscr{U}}}  {q}^n(\omega) \\
	&= \Pr\left\{ \frac1n\sum_{i=1}^n Z_i \geq  {\phi}_n( {\tilde{R}_n}) -  {\tilde{R}_n}  \right\} \\
	&\geq \frac{1-\eta}{(1-t_n^\star)\sqrt{2\pi n V_{\min}(\nu)}} \exp\left\{  -n \Lambda^*_{1,P_{\mathbf{x}^n} } \left(  {\tilde{R}_n} -  {\phi}_n( {\tilde{R}_n}) \right)  \right\} \label{eq:sharp14mh}\\
	&\geq \frac{1-\eta}{\sqrt{2\pi n V_{\min}(\nu)}} \exp\left\{  -n \Lambda^*_{1,P_{\mathbf{x}^n} } \left(  {\tilde{R}_n} -  {\phi}_n( {\tilde{R}_n}) \right)  \right\},	
	\label{eq:sharp13mh}
	\end{align}
where the term $1-t_n^\star$ in Eq~\eqref{eq:sharp14mh} comes from the symmetry in Eq.~\eqref{eq:sym0}, and 	the last inequality \eqref{eq:sharp13mh} follows from $t_n^\star\in(0,1)$ in item \ref{regularity-d} of Lemma \ref{lemm:regularity}.

Continuing from Eq.~\eqref{eq:sharp12mh} and item \ref{regularity-b} in Lemma \ref{lemm:regularity} gives
\begin{equation}
\alpha\left( {\mathscr{U}};  {p}^n \right)\geq \frac{1-\eta}{t_n^\star\sqrt{2\pi n V_{\min}(\nu)}} \exp\{-n   {\phi}_n( {\tilde{R}_n})\}. \label{eq:sharp12}
\end{equation}
Eq.~\eqref{eq:sharp13mh} together with item \ref{regularity-c} in Lemma \ref{lemm:regularity} yields
\begin{equation}
\beta\left( {\mathscr{U}}; {q}^n\right) \geq \frac{1-\eta}{\sqrt{2\pi n V_{\min}(\nu)}} \exp\{-n \tilde{R}_n\} = 2\exp\{-nR_n\}, \label{eq:sharp13}
\end{equation}
where we choose $x = \log 2\sqrt{2\pi V_{\min}(\nu)} - \log (1-\eta)$ in the rate back-off $\gamma_n = \frac{\log n}{2n} + \frac{x}{n}$. Thus we can bound the left-hand side of Eq.~(\ref{eq:sharp15}) from below. If for any test $0\leq Q_n \leq \mathbb{1}$ such that 
\begin{align}
\beta(Q_n;\sigma^n) \leq \exp\{-nR_n\}, \label{eq:sharp_beta}
\end{align}
holds, it implies that
\begin{align}
\alpha(Q_n;\rho^n) \geq \frac{1-\eta}{t_n^\star 2 \sqrt{2\pi n V_{\min}(\nu)}} \exp\{-n   {\phi}_n( {\tilde{R}_n})\}. \label{eq:sharp_alpha}
\end{align}
Finally, let $A := (1-\eta)/(2 \sqrt{2\pi V_{\min}(\nu)})$ and choose a constant $K>0$ such that for all $n\geq N_0 := \max\{N_1,N_2,N_3\}$,
\begin{align}
\gamma_n = \frac{\log n}{2n} + \frac{\log 2\sqrt{2\pi V_{\min}(\nu)} - \log (1-\eta) }{n} \leq \frac{K\log n}{n} =: c_n. \label{eq:g(n)}
\end{align}
Since the map $r\mapsto \phi_n(r)$ is monotonically decreasing \cite[Section 5]{ANS+08}, Eqs.~\eqref{eq:sharp_beta}, \eqref{eq:sharp_alpha}, and \eqref{eq:g(n)} conclude our result: for all $n\geq N_0$,
\begin{align}
\widehat{\alpha}_{\exp\{ -nR \}  } \left( \rho^n\| \sigma^n \right) &\geq \frac{A}{t_n^\star\sqrt{n}} \exp\left\{ -n E_\text{sp}^{(2)}\left(  R_n - c_n  , P_{\mathbf{x}^n}, \sigma\right)  \right\} \\
&\geq \frac{A}{s_n^\star\sqrt{n}} \exp\left\{ -n E_\text{sp}^{(2)}\left(  R_n - c_n  , P_{\mathbf{x}^n}, \sigma\right)  \right\},
\end{align}	
where the last inequality follows from item \ref{regularity-d} in Lemma~\ref{lemm:regularity}: $t_n^\star = s_n^\star/(1+s_n^\star) \in(0,1)$.
\end{proof}

\section{Proof of Proposition \ref{prop:spCh}} \label{proof:spCh}
\begin{prop6}[Error Exponent around Capacity] 
Let $(b_n)_{n\in\mathbb{N}}$ be a sequence of real numbers with $\lim_{n\to+\infty}  b_n = 0$ and let  $(\delta_n)_{n\in\mathbb{N}}$ be a sequence of positive numbers with $\lim_{n\to+\infty} \delta_n = 0$.
Suppose the sequence of distributions $(P_n)_{n\in\mathbb{N}}$ satisfies
\begin{align}
C_{\mathscr{W}} - \delta_n < D(\mathscr{W}\| P^\star \mathscr{W}| P_n) \leq 	C_{\mathscr{W}} - b_n. \label{eq:spChcc}
\end{align}
The following holds:
\begin{align}
\limsup_{n\to+\infty} \frac{ E_\textnormal{sp}^{(2)} \left(C_\mathscr{W} -\delta_n, P_n, P^\star \mathscr{W} \right) }{ \delta_n^2 } &\leq \limsup_{n\to+\infty} \frac{(\delta_n - b_n )^2}{2 V_\mathscr{W} \delta_n^2}; \label{eq:spChc11} \\
\limsup_{n\to+\infty} \frac{ \widetilde{E}_\textnormal{sp} \left(C_\mathscr{W} -\delta_n, P_n, P^\star \mathscr{W} \right) }{ \delta_n^2 } &\leq \limsup_{n\to+\infty} \frac{(\delta_n - b_n )^2}{2 \widetilde{V}_\mathscr{W} \delta_n^2}; \label{eq:spChc22} \\
\limsup_{n\to+\infty} \frac{s_n^\star}{\delta_n} &\leq \frac{1}{V_\mathscr{W}}, \label{eq:spChc33}
\end{align}
where 
\begin{align}
s_n^\star := \argmax_{s\geq 0} \left\{ E_\textnormal{h}(s,P_n, P^\star\mathscr{W}) - s \left( C_\mathscr{W} - \delta_n  \right)
\right\}.
\end{align}
\end{prop6}


\begin{proof}[Proof of Proposition \ref{prop:spCh}]
We only prove Eqs.~\eqref{eq:spChc11} and \eqref{eq:spChc33}, since Eq.~\eqref{eq:spChc22} follows from the same argument and Proposition~\ref{prop:prop_b}.

Recall the error-exponent function $E_\text{sp}^{(2)}$:
\begin{align}\label{eq_D1}
E_\textnormal{sp}^{(2)} \left(C_\mathscr{W} -\delta_n, P, P^\star \mathscr{W} \right) &= \sup_{s\geq 0} \left\{  -s\left( C_\mathscr{W}-\delta_n \right) + E_\textnormal{h}(s,P,P^\star \mathscr{W})\right\}.
\end{align}
In the following, we fix $\sigma = P^\star \mathscr{W}$ in the definition of $E_\text{h}$ (Eq.~\eqref{eq:Eh}) and denote by
\begin{align}
E_\textnormal{h}(s,P) :=
E_\textnormal{h}(s,P, P^\star\mathscr{W}) =
 s D_{\frac{1}{1+s}}\left( \mathscr{W}\| P^\star \mathscr{W} | P \right).
\end{align}
for notational convenience.
We define a \emph{critical rate} for a c-q channel $\mathscr{W}$ to be
\begin{align} \label{eq:critical}
r_\textnormal{cr} := \max_{P\in\mathscr{P}(\mathcal{X})}\left.\frac{\partial E_\text{h}(s,P) }{\partial s}\right|_{s=1}.
\end{align}
Let $N_0$ be the smallest integer such that $C_\mathscr{W} - \delta_n > r_\text{cr}$, $\forall n\geq N_0$. 
Since the map $r \mapsto E_\text{sp}^{(2)} (r,\cdot,\cdot) $ is non-increasing \cite[Section 5]{ANS+08}, the maximization over $s$ in Eq.~(\ref{eq_D1}) can be restricted to the set $[0,1]$ for any rate  above $r_\text{cr}$, i.e.,
\begin{align} \label{eq:sp2}
 E_\textnormal{sp}^{(2)} \left(C_\mathscr{W} -\delta_n, P_n, P^\star \mathscr{W} \right)
=  \max_{0\leq s\leq 1} \left\{  -s\left( C_\mathscr{W}-\delta_n \right) + E_\textnormal{h}(s,P_n)        \right\}.
\end{align}
For every $n\in\mathbb{N}$, let $s_n^\star$ attain the maxima in Eq.~\eqref{eq:sp2} at a rate of $C_\mathscr{W} - \delta_n \geq 0$.
In the following lemma, we  discuss the asymptotic behavior of $\{s_n^\star\}_{n\in\mathbb{N}}$.
\begin{lemm} \label{lemm:s_n}
Let $s_n^\star$ attain the maxima in Eq.~\eqref{eq:sp2} and $P_n$ satisfy Eq.~\eqref{eq:spChcc}. We have
	\begin{enumerate}[(a)]
		\item\label{s_n-a} The limit point of $\{P_n\}_{n\in\mathbb{N}}$ is capacity achieving.
		\item\label{s_n-b} $s_n^\star>0$ for all $n\in\mathbb{N}$ and $\lim_{n\to +\infty} s_n^\star = 0$.
	\end{enumerate}
\end{lemm}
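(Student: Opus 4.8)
The plan is to prove Lemma~\ref{lemm:s_n} by exploiting the structure of the sphere-packing exponent $E_\textnormal{sp}^{(2)}$ near the capacity, together with the derivative properties of $E_\textnormal{h}$ collected in Proposition~\ref{prop:prop_h}. For part~\ref{s_n-a}, I would argue by compactness: since $\mathscr{P}(\mathcal{X})$ is compact, the sequence $(P_n)_{n\in\mathbb{N}}$ has convergent subsequences, and by continuity of $P\mapsto D(\mathscr{W}\|P^\star\mathscr{W}|P)$ (which follows from Proposition~\ref{prop:prop_h}\ref{prop_h-c}, since $D(\mathscr{W}\|P^\star\mathscr{W}|P) = \partial E_\textnormal{h}(s,P)/\partial s|_{s=0}$, continuous in $P$), any limit point $P_\infty$ of $(P_n)$ satisfies $D(\mathscr{W}\|P^\star\mathscr{W}|P_\infty) = C_\mathscr{W}$ by taking $n\to+\infty$ in Eq.~\eqref{eq:spChcc} and using $\delta_n,b_n\to0$. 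I would then note that $C_\mathscr{W} = \max_P I(P,\mathscr{W})$ and that $D(\mathscr{W}\|P^\star\mathscr{W}|P) \le \max_P D(\mathscr{W}\|P^\star\mathscr{W}|P) = C_\mathscr{W}$ (this is the standard fact that $P^\star\mathscr{W}$ is the optimal output state; the mutual information equals the conditional relative entropy to the output state, and any other output state only increases it), so $P_\infty$ achieves the maximum, i.e.\ $D(\mathscr{W}\|P^\star\mathscr{W}|P_\infty) = I(P_\infty,\mathscr{W}) = C_\mathscr{W}$, hence $P_\infty$ is capacity-achieving.

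For part~\ref{s_n-b}, I would first show $s_n^\star > 0$. By Proposition~\ref{prop:prop_h}\ref{prop_h-c}, the derivative of the objective $g_n(s) := E_\textnormal{h}(s,P_n) - s(C_\mathscr{W} - \delta_n)$ at $s=0$ equals $D(\mathscr{W}\|P^\star\mathscr{W}|P_n) - (C_\mathscr{W} - \delta_n) > 0$, which is strictly positive precisely by the left inequality in Eq.~\eqref{eq:spChcc}. Hence the maximum over $[0,1]$ is not attained at $s=0$, so $s_n^\star > 0$; concavity of $g_n$ (Proposition~\ref{prop:prop_h}\ref{prop_h-b}) guarantees the maximizer is well-defined. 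Next, to show $s_n^\star \to 0$, I would suppose for contradiction that along some subsequence $s_{n_k}^\star \ge c > 0$. Passing to a further subsequence so that $P_{n_k} \to P_\infty$ (capacity-achieving by part~\ref{s_n-a}) and $s_{n_k}^\star \to s_\infty \in [c,1]$, I would use the continuity of $E_\textnormal{h}$ and its $s$-derivative in $(s,P)$ (Proposition~\ref{prop:prop_h}\ref{prop_h-a}) to pass to the limit: the first-order optimality condition $\partial E_\textnormal{h}(s,P_{n_k})/\partial s|_{s=s_{n_k}^\star} = C_\mathscr{W} - \delta_{n_k}$ (valid whenever $s_{n_k}^\star\in(0,1)$) becomes $\partial E_\textnormal{h}(s,P_\infty)/\partial s|_{s=s_\infty} = C_\mathscr{W}$. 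But by strict concavity — which follows from Proposition~\ref{prop:prop_h}\ref{prop_h-e}, $\partial^2 E_\textnormal{h}/\partial s^2|_{s=0} = -V(\mathscr{W}\|P^\star\mathscr{W}|P_\infty) = -V(P_\infty,\mathscr{W}) < 0$ (using $V_\mathscr{W}>0$ together with $P_\infty$ capacity-achieving), plus the concavity on all of $\mathbb{R}_{\ge0}$ — the derivative $\partial E_\textnormal{h}(s,P_\infty)/\partial s$ is strictly less than its value $C_\mathscr{W}$ at $s=0$ for any $s>0$ in a neighbourhood, and by concavity for all $s>0$; a careful version of this uses that $V(P_\infty,\mathscr{W})>0$ forces the derivative to decrease strictly as $s$ moves away from $0$. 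This contradicts the limiting optimality condition, so $s_\infty = 0$, i.e.\ $s_n^\star \to 0$.

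The main obstacle I anticipate is the degenerate boundary case where $s_{n}^\star = 1$ for infinitely many $n$ — then the first-order condition is an inequality $\partial E_\textnormal{h}(s,P_n)/\partial s|_{s=1} \ge C_\mathscr{W} - \delta_n$ rather than an equality, and one must rule this out. This is exactly where the critical-rate argument enters: by definition $r_\textnormal{cr} = \max_P \partial E_\textnormal{h}(s,P)/\partial s|_{s=1}$, and for $n \ge N_0$ we have $C_\mathscr{W} - \delta_n > r_\textnormal{cr} \ge \partial E_\textnormal{h}(s,P_n)/\partial s|_{s=1}$, so $g_n'(1) < 0$ and the maximizer lies in the open interval $(0,1)$ for all large $n$; this restores the equality form of the optimality condition and makes the limiting argument above go through. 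A secondary subtlety is justifying that $r_\textnormal{cr} < C_\mathscr{W}$ so that $N_0$ is finite — this follows because at any capacity-achieving $P_\infty$ one has $\partial E_\textnormal{h}/\partial s|_{s=1} < \partial E_\textnormal{h}/\partial s|_{s=0} = C_\mathscr{W}$ by the strict concavity coming from $V_\mathscr{W}>0$, and a compactness argument (continuity of the derivative and of $P\mapsto D(\mathscr{W}\|P^\star\mathscr{W}|P)$) upgrades this to a strict gap uniform over the relevant set of $P$. Once these boundary issues are handled, the remaining steps are routine applications of continuity, concavity, and the second-order Taylor data already recorded in Proposition~\ref{prop:prop_h}.
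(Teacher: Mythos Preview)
Your proposal is correct and follows essentially the same route as the paper's proof: compactness and continuity for part~\ref{s_n-a}, and the first-order optimality condition together with the strict concavity at $s=0$ coming from $V_\mathscr{W}>0$ for part~\ref{s_n-b}; you are in fact more explicit than the paper about the boundary case $s_n^\star=1$ and the role of the critical rate. One small caveat in part~\ref{s_n-a}: the inference from $D(\mathscr{W}\|P^\star\mathscr{W}|P_\infty)=C_\mathscr{W}$ to $I(P_\infty,\mathscr{W})=C_\mathscr{W}$ does not follow from the identity $I(P,\mathscr{W})=D(\mathscr{W}\|P\mathscr{W}|P)$ you cite (that only yields $I(P_\infty,\mathscr{W})\le C_\mathscr{W}$), but rather from the divergence-radius characterisation of capacity, which the paper invokes via \cite{SW01} and \cite[Theorem~2]{TT13}.
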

\begin{proof}[Proof of Lemma \ref{lemm:s_n}]
Let $\{P_{n_k}\}_{k\geq 1}$ and $\{s_{n_k}^\star\}_{k\geq 1}$ be arbitrary subsequences.
Since $\mathscr{P}(\mathcal{X})$ and $[0,1]$ are compact, we may assume that
\begin{align} 
\lim_{k\to+\infty} P_{n_k} = P_o, \quad \lim_{k\to\infty} s_{n_k}^\star = s_o,
\end{align}
for some $P_o\in\mathscr{P}(\mathcal{X})$ and $s_o \in [0,1]$.
\begin{itemize} 
\item[(\ref{lemm:s_n}-\ref{s_n-a})]
Let $k\to+\infty$. Eq.~\eqref{eq:spChcc} implies that
\begin{align}
D(\mathscr{W}\| P^\star \mathscr{W}| P_o) = C_\mathscr{W},
\end{align}
which guarantees that $P_o$ is capacity-achieving by the dual representation of the information radius, see e.g.~\cite{SW01}, \cite[Theorem 2]{TT13}.

%
		
\item[(\ref{lemm:s_n}-\ref{s_n-b})]
One can observe from Eq.~\eqref{eq:sp2} that $s_n^\star = 0$ if and only if $C_\mathscr{W} - \delta_n \geq D(\mathscr{W}\|P^\star\mathscr{W}|P_n)$.
However, this violates the assumption in Eq.~\eqref{eq:spChcc}. Hence, we have $s_n^\star > 0$ for all $n\in\mathbb{N}$.

Since $P_o$ is capacity achieving, the uniqueness of the divergence center implies that $P_o \mathscr{W} = P^\star \mathscr{W}$. 
Item \ref{prop_h-c} in Proposition \ref{prop:prop_h} shows that
\begin{align} \label{eq:s_n6}
\left.\frac{\partial^2 E_\text{h}\left(s,P_o\right)}{\partial s^2}\right|_{s=0} = 
- V\left(\mathscr{W}\|P^\star \mathscr{W} |P_o\right) = 
- V(P_o,\mathscr{W}) \leq -V_\mathscr{W} < 0,
\end{align}
where the last inequality follows since $V_\mathscr{W} > 0$.
Then, Eq.~\eqref{eq:s_n6} implies that the first-order derivative $\partial E_\text{h}\left(s,P_o\right)/{\partial s}$ is strictly decreasing around $s=0$. 
Moreover, item \ref{prop_h-d} in Proposition~\ref{prop:prop_h} gives 
\begin{align} \label{eq:s_n2}
\left.\frac{\partial E_\text{h}\left(s,P_o\right)}{\partial s}\right|_{s=s_o} \leq D\left(\mathscr{W}\|P^\star \mathscr{W}| P_o\right) 
= C_\mathscr{W},
\end{align}
This, together with items \ref{prop_h-b} and \ref{prop_h-c} in Proposition \ref{prop:prop_h}, shows that the first inequality in Eq.~\eqref{eq:s_n2} becomes an equality if and only if $s_o=0$. Since the subsequence was arbitrary, item \ref{s_n-b} is shown.

\end{itemize}
\end{proof}

Now we are ready to prove this proposition.
We start with proving Eq.~\eqref{eq:spChc33}.
Since $s\mapsto E_\text{h}(s,\cdot)$ is concave from item \ref{prop_h-b} in Proposition \ref{prop:prop_h}, the maximizer $s_n^\star$ must satisfy
\begin{align} \label{eq:s_n3}
\left.\frac{\partial E_\text{h}(s,P_{n_k})}{\partial s}\right|_{s=s_{n_k}^\star} = C_\mathscr{W} - \delta_{n_k}.
\end{align}
Further, item \ref{prop_h-c} in Proposition \ref{prop:prop_h} gives
\begin{align}
\left.\frac{ \partial E_\text{h}\left(s,P_{n_k}^\star\right)}{\partial s}\right|_{s=0} = D\left(\mathscr{W} \| P^\star \mathscr{W} | P_{n_k}^\star \right).
\end{align}
The mean value theorem states that there exists a number $\hat{s}_{n_k} \in \left(0, s_{n_k}^\star\right)$, for each $k\geq \mathbb{N}$,  such that
\begin{align}
-\left.\frac{ \partial^2 E_\text{h}\left(s,P_{n_k}\right)}{\partial s^2}\right|_{s=\hat{s}_{n_k}}
&= \frac{ D\left( \mathscr{W} \| P^\star \mathscr{W} | P_{n_k} \right) - C_\mathscr{W} + \delta_{n_k}}{ s_{n_k}^\star} \\ 
&\leq \frac{\delta_{n_k}}{s_{n_k}^\star}, \label{eq:s_n4}
\end{align}
where the last inequality is again due to $D\left( \mathscr{W}\| P^\star \mathscr{W}| P_{n_k}^\star \right) \leq C_\mathscr{W}$.
When $k$ approaches infinity, items \ref{prop_h-a} and \ref{prop_h-e} in Proposition \ref{prop:prop_h}  give
\begin{align} \label{eq:s_n5}
\lim_{k\to+\infty} \left.\frac{\partial^2 E_\text{h}\left(s,P_{n_k}\right)}{\partial s^2}\right|_{s = \hat{s}_{n_k} }
= \left.\frac{\partial^2 E_\text{h}\left(s,P_{o}\right)}{\partial s^2}\right|_{s = 0 }
= -V(P_o,\mathscr{W}) \leq -V_\mathscr{W}.
\end{align}
Combining Eqs.~\eqref{eq:s_n4} and \eqref{eq:s_n5} leads to
\begin{align} \label{eq:s_nc}
\limsup_{k\to+\infty} \frac{ s_{n_k}^\star }{\delta_{n_k}} \leq \frac{1}{V_\mathscr{W}}.
\end{align}
Since the subsequence was arbitrary, the above result establishes Eq.~\eqref{eq:spChc33}.

Next, for any sufficiently large $n\geq N_0$, we apply Taylor's theorem to the map $s_n^\star \mapsto E_\text{h} \left( s_n^\star, P_n \right)$ at the original point to obtain
\begin{align}
& E_\textnormal{sp}^{(2)} \left(C_\mathscr{W} -\delta_n, P_n, P^\star \mathscr{W}\right)
\notag \\ 
&= -s_n^\star\left( C_\mathscr{W} - \delta_n \right) + E_\text{h}\left( s_n^\star, P_n \right) \\
&= s_n^\star \left(\delta_n + D(\mathscr{W}\|P^\star W|P_n)-  C_\mathscr{W}  \right) - \frac{(s_n^\star)^2}{2} V\left( P_n, \mathscr{W} \right) + \frac{(s_n^\star)^3}{6} \left.\frac{ \partial^3 E_\text{h}(s,P_n)}{ \partial s^3}\right|_{s=\bar{s}_n} 
\label{eq_D16}
\end{align}
for some $\bar{s}_n\in\left[0,s_n^\star\right]$.
Let
\begin{align} \label{eq:Upsilon2}
\Upsilon = \max_{ (s,P)\in[0,1]\times \mathscr{P}(\mathcal{X})} 
\left| \frac{\partial^3 {E}_\text{h}\left(s,{P}\right)}{\partial s^3} \right|.
\end{align}
Continuing from Eq.~\eqref{eq_D16} gives
\begin{align}
E_\textnormal{sp}^{(2)} \left(C_\mathscr{W} -\delta_n, P_n, P^\star \mathscr{W} \right)
&\leq s_n^\star  (\delta_n-b_n)  - \frac{(s_n^\star)^2}{2} V\left( P_n, \mathscr{W} \right) + \frac{(s_n^\star)^3 \Upsilon}{6} \\
& \leq\sup_{s\geq 0} \left\{ s (\delta_n-b_n) - \frac{s^2}{2}V(P_n, \mathscr{W})   \right\} + \frac{(s_n^\star)^3 \Upsilon}{6} \\
&= \frac{(\delta_n-b_n)^2}{2V(P_n, \mathscr{W})} + \frac{(s_n^\star)^3 \Upsilon}{6}, \label{eq:temp1}
\end{align}
where the first line follows from the assumption $D\left( \mathscr{W} \| P^\star \mathscr{W} | P_{n} \right) \leq C_\mathscr{W} - b_n$ in Eq.~\eqref{eq:spChcc} and Eq.~\eqref{eq:Upsilon2}.
Finally, Eq.~\eqref{eq:temp1}, along with item \ref{s_n-b} in Lemma \ref{lemm:s_n} and Eq.~\eqref{eq:s_nc}, implies that
\begin{align}
\limsup_{n\to+\infty} \frac{ E_\textnormal{sp}^{(2)} \left(C_\mathscr{W} -\delta_n, P_n, P^\star \mathscr{W} \right) }{\delta_n^2}
&\leq \limsup_{n\to+\infty} \frac{(\delta_n-b_n)^2}{2 V(P_n ,\mathscr{W}) \delta_n^2} 
\\
&\leq \limsup_{n\to+\infty} \frac{(\delta_n-b_n)^2}{2V_\mathscr{W} \delta_n^2},
\end{align}
where the last inequality follows from the continuity of $V(\,\cdot\,,\mathscr{W})$ on $\mathscr{P}(\mathcal{X})$ (Eq.~\eqref{eq:V}); the fact that $\{P_n\}_{n\in\mathbb{N}}$ is capacity achieving (item \ref{s_n-a} in Lemma \ref{lemm:s_n}); and the definition of $V_{\mathscr{W}}$ in Eq.~\eqref{eq:V2}.
\end{proof}



\end{document}